\documentclass[twocolumn]{article}

\usepackage{amsmath}
\usepackage{amssymb}
\usepackage{amsthm} 
\usepackage{xspace}
\usepackage[usenames,dvipsnames]{xcolor}
\usepackage[arrow,matrix,curve,color]{xypic}
\usepackage{subfig}
\usepackage{etoolbox}
\usepackage[normalem]{ulem}
\usepackage{tikz}

\theoremstyle{plain}
\newtheorem{theorem}{Theorem}[section]
\newtheorem{proposition}[theorem]{Proposition}
\newtheorem{lemma}[theorem]{Lemma}

\theoremstyle{definition}
\newtheorem{definition}{Definition}[section]

\theoremstyle{remark}
\newtheorem*{remark}{Remark}

\newcommand{\eat}[1]{}

\newcommand{\Small}{\fontsize{7}{9}\selectfont}



\makeatletter
\let\@oplus\oplus
\renewcommand{\oplus}{\raisebox{0.75pt}{\ensuremath{\boldsymbol\@oplus}} }
\makeatother
\robustify{\oplus}

\newcommand{\xrsolution}{XR-solution}

\newcommand{\cqa}{\ensuremath{\textup{CQA}}\xspace}
\newcommand{\certain}{\ensuremath{\textup{certain}}\xspace}
\newcommand{\xrcertain}{\ensuremath{\textup{XR-certain}}\xspace}

\newcommand{\ARcertain}{\ensuremath{\textup{AR-certain}}\xspace}


\newcommand{\gav}{\textsc{gav}}
\newcommand{\lav}{\textsc{lav}}
\newcommand{\glav}{\textsc{glav}}
\newcommand{\waglav}{\textsc{waglav}}
\newcommand{\efsotgd}{\textsc{efsotgd}}
\newcommand{\sotgd}{\textsc{sotgd}}
\newcommand{\egd}{\textsc{egd}}

\newcommand{\ptime}{$\mathrm{PTIME}$\xspace}
\newcommand{\NP}{$\mathrm{NP}$\xspace}
\newcommand{\coNP}{$\mathrm{coNP}$\xspace}

\newcommand{\sm}[1]{\ensuremath{\mathcal{#1}}\xspace} 
\newcommand{\mapping}{\ensuremath{\mathcal{M}=(\mathbf{S},\mathbf{T},\sigst,\sigt)}\xspace} 
\newcommand{\sigs}{\ensuremath{\Sigma_{\text{s}}}\xspace} 
\newcommand{\sigst}{\ensuremath{\Sigma_{\text{st}}}\xspace} 
\newcommand{\sigt}{\ensuremath{\Sigma_{\text{t}}}\xspace} 

\newcommand{\mappingtypetwo}[2]{\textup{\sc #1+#2}} 
\newcommand{\mappingtypethree}[3]{\textup{\sc #1+(#2,\,#3)}} 

\newcommand{\rel}[1]{\ensuremath{\mathrm{#1}}\xspace}
\DeclareMathOperator{\Nulls}{Nulls}
\DeclareMathOperator{\Const}{Const}

\newcommand{\dlp}[1]{\ensuremath{\Pi_{\textsc{xr}c}(#1)}}

\newcommand\restr[2]{{
  \left.\kern-\nulldelimiterspace 
  #1 
  \vphantom{\big|} 
  \right|_{#2} 
  }}

\newcommand{\defeq}{\ensuremath{\text{ := }}}

\newcommand{\mds}{\mathrm {mod}}
\newcommand{\drep}{\mathrm {drep}}

\makeatletter
\def\blfootnote{\xdef\@thefnmark{}\@footnotetext}
\makeatother

\usepackage{enumitem}
\setitemize{topsep=0pt,parsep=0pt,partopsep=0pt, leftmargin=13pt}
\setenumerate{topsep=0pt,parsep=0pt,partopsep=0pt, leftmargin=16pt}
\setdescription{topsep=0pt,parsep=0pt,partopsep=0pt, leftmargin=13pt}
\pagestyle{empty} 

\sloppy

\begin{document}

\title{Exchange-Repairs:\\Managing Inconsistency in Data Exchange}

\author{Balder ten Cate     \\ {\tt btencate@ucsc.edu} \\ UC Santa Cruz \\ Google,\and 
		Richard L. Halpert  \\ {\tt rhalpert@ucsc.edu} \\ UC Santa Cruz \and 
		Phokion G. Kolaitis \\ {\tt kolaitis@ucsc.edu} \\ UC Santa Cruz \\ IBM Research - Almaden}


\maketitle

\begin{abstract}

In a data exchange setting with target constraints, it is often the case that a given source instance has no solutions. Intuitively, this happens
when data sources contain inconsistent or conflicting information that
is exposed by the target constraints at hand.
In such cases, the semantics of target queries trivialize, because the certain answers of every target query over the given source instance evaluate to ``true". The aim of this paper is to introduce and explore a new framework that gives meaningful semantics in such cases by using the notion of exchange-repairs. Informally, an exchange-repair of a source instance is another source instance that differs minimally from the first, but has a solution. In turn, exchange-repairs give rise to a natural notion of exchange-repair certain answers (in short, XR-certain answers) for target queries in the context of data exchange with target constraints.

After exploring the structural properties of exchange-repairs, we focus on the problem of computing the XR-certain answers of conjunctive queries. We show that for schema mappings specified by source-to-target GAV dependencies and target equality-generating dependencies (egds), the XR-certain answers of a target conjunctive query can be rewritten as the consistent answers (in the sense of standard database repairs) of a union of conjunctive queries over the source schema with respect to a set of egds over the source schema, thus making it possible to use a consistent query-answering system to compute XR-certain answers in data exchange. In contrast, we show that this type of rewriting is not possible for schema mappings specified by source-to-target LAV dependencies and target egds, nor for schema mappings specified by both source-to-target and target GAV dependencies. We then examine the general case of schema mappings specified by source-to-target GLAV constraints, a weakly acyclic set of target tgds and a set of target egds.  The main result asserts that, for such settings, the XR-certain answers of conjunctive queries can be rewritten as the certain answers of a union of conjunctive queries with respect to the stable models of a disjunctive logic program over a suitable expansion of the source schema.

\end{abstract}

\section{Introduction and Summary of Contributions}

Data exchange is the problem of transforming data structured under one schema, called the source schema, into data structured under a different schema, called the target schema, in such a way that pre-specified constraints on these two schemas are satisfied. Data exchange is a ubiquitous data inter-operability task that has been explored in depth during the past decade (see \cite{DBLP:series/synthesis/2010Arenas}).
 This task is formalized with the aid of  schema mappings \mapping{}, where   $\bf S$ is the source schema, $\bf T$ is the target schema, \sigst{} is a set of constraints between $\bf S$ and $\bf T$,  and \sigt{} is a set of  constraints on $\bf T$. The most thoroughly investigated schema mappings are the ones in which \sigst{} is a set of source-to-target tuple-generating dependencies (s-t tgds) and \sigt{} is a set of target tuple-generating dependencies (target tgds) and target equality-generating dependencies (target egds) \cite{DBLP:journals/tcs/FaginKMP05}.  An example of such a schema mapping, along with a target query, follows:

 	\begin{figure*}
		\centering\small
		\begin{tabular}{lll}\\
		\hspace{-1mm}\sigst & $=$ & $\left\{ \hspace{-1mm}\begin{array}{lllll} {\tt Task\_Assignments}(p,t,d) & \hspace{-1mm}\to & {\tt Departments}(p,d) \hspace{-.5mm}\wedge\hspace{-.5mm} {\tt Tasks}(p,t)\hspace{-1mm} \\
															{\tt Stakeholders\_old}(t,s) & \hspace{-1mm}\to & {\tt Stakeholders\_new}(t,s) \end{array} \right\}$ \\
		\hspace{-1mm}\sigt & $=$ & $\left\{ \hspace{-0.5mm}\begin{array}{l} {\tt Departments}(p,d) \wedge {\tt Departments}(p,d') \to d=d' \end{array} \right\}$\\
		~\\\hline
		~\\
		\multicolumn{3}{l}{${\tt boss}(person,stakeholder)=\exists task. $}\\
		\multicolumn{3}{r}{${\tt Tasks}(person,task) \wedge {\tt Stakeholders\_new}(task,stakeholder)$}\\
		\end{tabular}
		\caption{A schema mapping \sm{M} specified by tgds and egds, and a target query.  In this example, the egd is actually a key constraint and there are no target tgds.}
		\label{example:mapping}
	\end{figure*}

   Every schema mapping  \mapping{} gives rise to two distinct algorithmic problems. The first is the existence and construction of solutions: given a source instance $I$,  determine whether  a \emph{solution} for $I$ exists (i.e., a target instance $J$ so that $(I,J)$ satisfies $\Sigma_{st} \cup \Sigma_t$) and, if it does,  construct such a ``good" solution. The second  is to compute the \emph{certain answers} of target queries, where if $q$ is a target query and $I$ is a source instance, then  $\certain(q,I,{\cal M})$ is the intersection
  of the sets $q(J)$, as $J$ varies over all solutions for $I$. For arbitrary schema mappings specified by s-t tgds and target tgds and egds,  both these problems can be undecidable \cite{DBLP:conf/pods/KolaitisPT06}. However, as shown in \cite{DBLP:journals/tcs/FaginKMP05}, if the set \sigt{} of target tgds obeys a mild structural condition, called \emph{weak acyclicity}, then both these problems can be  solved in polynomial time using the \emph{chase procedure}. Given a source instance $I$, the chase procedure attempts to build a ``most general" solution $J$ for $I$  by generating facts that satisfy each s-t tgd and each target tgd as needed, and by equating two nulls or equating a null to a constant, as dictated by the egds. If the chase procedure encounters an egd that equates two distinct constants, then it terminates and reports that no solution for $I$ exists. Otherwise, it constructs a \emph{universal} solution $J$ for $I$, which can also be used to compute the certain answers of conjunctive queries in time bounded by a polynomial in the size of $I$.

Consider the situation in which the chase terminates and reports that no solution exists.
In such cases,  for every boolean target query $q$, the certain answers
$\certain(q,I,{\cal M})$ evaluate to ``true". Even though the certain answers have become the standard semantics of queries in the data exchange context, there is clearly something unsatisfactory about this state of affairs, since the certain answers trivialize when no solutions exist.  Intuitively, the root cause for the lack of solutions is that the source instance contains inconsistent or conflicting information that
is exposed by the target constraints of the schema mapping at hand. In turn, this suggests that alternative semantics for target queries could be obtained by adopting the notions of database \emph{repairs} and \emph{consistent answers} from the study of inconsistent databases (see
\cite{DBLP:series/synthesis/2011Bertossi} for an overview).
We note that several different types of repairs have been studied in the context of inconsistent databases; the most widely used ones are the \emph{symmetric difference ($\oplus$-repairs)}, which contain as special cases the
\emph{subset-repairs} and the \emph{superset-repairs}.

How can the notions of database repairs and consistent answers be adapted to the data exchange framework?
When one reflects on this question, then one realizes that several different approaches are possible.

One approach, which we  call {\em materialize-then-repair}, is as follows: given a source instance,  a target instance is produced by chasing with the source-to-target tgds in $\sigst$ and the target tgds in $\sigt$, while ignoring the target egds in $\sigt$. Since the target instance produced this way may very well violate the egds in $\sigt$, it is treated as an inconsistent instance w.r.t.\ $\sigt$; consider its repairs. Note that a similar approach has been adopted by \cite{DBLP:conf/fqas/BertossiCCG02,DBLP:conf/ijcai/BravoB03} in the context of data integration.
A different approach, which we call  {\em exchange-as-repair}, treats the given source instance as an inconsistent instance over the combined schema ${\bf S} \cup \bf T$ w.r.t.\ the union  $\sigst \cup \sigt$  and considers its repairs. Note that this is in the spirit of~\cite{DBLP:conf/icdt/GrahneO10}, where instances in peer data exchange that do not satisfy the schema mapping at hand are treated as inconsistent databases over a combined schema.
We now point out that neither of these approaches gives rise to satisfactory semantics.

	\begin{figure}
		\Small
		\begin{tabular}{ccrl}
			\multicolumn{1}{c}{$I$} &  & \multicolumn{2}{c}{$J$}\\
			\cline{1-1}\cline{3-4}
			~\\
			\begin{tabular}{|c|c|c|}
				\hline
				\multicolumn{3}{|c|}{\tt Task\_Assignments} \\
				person & task & dept \\
				\hline
				peter & tpsreport & software \\
				peter & spaceout & software \\
				peter & meetbobs & exec \\
				\hline
			\end{tabular} & &
			\begin{tabular}{|c|c|}
				\hline
				\multicolumn{2}{|c|}{\tt Departments} \\
				person & dept \\
				\hline
				peter & software \\
				peter & exec \\
				~ & \\
				\hline
			\end{tabular} &
			\begin{tabular}{|c|c|}
				\hline
				\multicolumn{2}{|c|}{\tt Tasks} \\
				person & task \\
				\hline
				peter & tpsreport \\
				peter & spaceout \\
				peter & meetbobs \\
				\hline
			\end{tabular}\\
			\\
			\begin{tabular}{|c|c|}
				\hline
				\multicolumn{2}{|c|}{\tt Stakeholders\_old} \\
				task & stakeholder \\
				\hline
				tpsreport & lumbergh \\
				tpsreport & portman \\
				spaceout & bobs \\
				meetbobs & bobs \\
				\hline
			\end{tabular} & &
			\multicolumn{2}{c}{
			\begin{tabular}{|c|c|}
				\hline
				\multicolumn{2}{|c|}{\tt Stakeholders\_new} \\
				task & stakeholder \\
				\hline
				tpsreport & lumbergh \\
				tpsreport & portman \\
				spaceout & bobs \\
				meetbobs & bobs \\
				\hline
			\end{tabular}
			}\\
		\end{tabular}
		\caption{A source instance $I$ and the inconsistent target instance $J$ that results from chasing $I$ with the tgds in \sm{M}.}
		\label{fig:inconsistent_chase}
	\end{figure}

	\begin{figure}
		\Small\centering
		\begin{tabular}{rl}
			\multicolumn{2}{c}{$J'$}\\
			\cline{1-2}
			\\
			\begin{tabular}{|c|c|}
				\hline
				\multicolumn{2}{|c|}{\tt Departments} \\
				person & dept \\
				\hline
				\sout{peter} & \sout{software} \\
				peter & exec \\
				~ & \\
				\hline
			\end{tabular} &
			\begin{tabular}{|c|c|}
				\hline
				\multicolumn{2}{|c|}{\tt Tasks} \\
				person & task \\
				\hline
				peter & tpsreport \\
				peter & spaceout \\
				peter & meetbobs \\
				\hline
			\end{tabular} \\
			\\
			\multicolumn{2}{c}{
			\begin{tabular}{|c|c|}
				\hline
				\multicolumn{2}{|c|}{\tt Stakeholders\_new} \\
				task & stakeholder \\
				\hline
				tpsreport & lumbergh \\
				tpsreport & portman \\
				spaceout & bobs \\
				meetbobs & bobs \\
				\hline
			\end{tabular}
			}\\
		\end{tabular}
		\caption{A symmetric-difference-repair of $J$ w.r.t. \sigt{}.}
		\label{fig:exchange_then_repair}
	\end{figure}

Figure~\ref{fig:inconsistent_chase} gives an example of a target instance that is produced in the materialize-then-repair approach by chasing with the s-t tgds in Figure~\ref{example:mapping}.
 Clearly, $J$ is inconsistent because it violates the egd in \sigt{}.
Consider now the subset repair $J'$ in Figure \ref{fig:exchange_then_repair} of our materialized target instance $J$ (note that, in this case, symmetric difference repairs coincide with subset repairs).
Notice that the repair $J'$ places peter in the exec department, yet still has him performing tasks for the software department -- the fact that the ``tpsreport'' and ``spaceout'' tasks are derived from a tuple placing peter in the software department has been lost.  The only other repair of $J$ similarly fails to reflect the shared origin of tuples in the {\tt Tasks} and {\tt Departments} tables, and this disconnect in the materialize-then-repair approach manifests in the consistent answers to target queries.  In this example, the consistent answers for ${\tt boss}({\rm peter},b)$ are $\{({\rm peter},{\rm bobs}), ({\rm peter},{\rm portman}), ({\rm peter},{\rm lumbergh})\}$.  However, the last two tuples are derived from facts placing peter in the software department, even though in $J'$ he is not.

	\begin{figure*}[htbp]\Small\centering
		\begin{tabular}{rlcrlcrl}
			\multicolumn{2}{c}{$(I_1,J_1)$} & & \multicolumn{2}{c}{$(I_2,J_2)$} & & \multicolumn{2}{c}{$(I_3,J_3)$}\\
			\cline{1-2} \cline{4-5} \cline{7-8}
			\\
			\multicolumn{2}{c}{
			\begin{tabular}{|c|c|c|}
				\hline
				\multicolumn{3}{|c|}{\tt Task\_Assignments} \\
				person & task & dept \\
				\hline
				\sout{peter} & \sout{tpsreport} & \sout{software} \\
				\sout{peter} & \sout{spaceout} & \sout{software} \\
				peter & meetbobs & exec \\
				\hline
			\end{tabular}} & &
			\multicolumn{2}{c}{
			\begin{tabular}{|c|c|c|}
				\hline
				\multicolumn{3}{|c|}{\tt Task\_Assignments} \\
				person & task & dept \\
				\hline
				peter & tpsreport & software \\
				peter & spaceout & software \\
				\sout{peter} & \sout{meetbobs} & \sout{exec} \\
				\hline
			\end{tabular}} & &
			\multicolumn{2}{c}{
			\begin{tabular}{|c|c|c|}
				\hline
				\multicolumn{3}{|c|}{\tt Task\_Assignments} \\
				person & task & dept \\
				\hline
				peter & tpsreport & software \\
				\sout{peter} & \sout{spaceout} & \sout{software} \\
				\sout{peter} & \sout{meetbobs} & \sout{exec} \\
				\hline
			\end{tabular}}
			\\
			\\
			\begin{tabular}{|c|c|}
				\hline
				\multicolumn{2}{|c|}{\tt Departments} \\
				person & dept \\
				\hline
				~ & \\
				peter & exec \\
				~ & \\
				\hline
			\end{tabular} &
			\begin{tabular}{|c|c|}
				\hline
				\multicolumn{2}{|c|}{\tt Tasks} \\
				person & task \\
				\hline
				~ & \\
				~ & \\
				peter & meetbobs \\
				\hline
			\end{tabular} & &
			\begin{tabular}{|c|c|}
				\hline
				\multicolumn{2}{|c|}{\tt Departments} \\
				person & dept \\
				\hline
				peter & software \\
				~ & \\
				~ & \\
				\hline
			\end{tabular} &
			\begin{tabular}{|c|c|}
				\hline
				\multicolumn{2}{|c|}{\tt Tasks} \\
				person & task \\
				\hline
				peter & tpsreport \\
				peter & spaceout \\
				~ & \\
				\hline
			\end{tabular} & &
			\begin{tabular}{|c|c|}
				\hline
				\multicolumn{2}{|c|}{\tt Departments} \\
				person & dept \\
				\hline
				peter & software \\
				~ & \\
				~ & \\
				\hline
			\end{tabular} &
			\begin{tabular}{|c|c|}
				\hline
				\multicolumn{2}{|c|}{\tt Tasks} \\
				person & task \\
				\hline
				peter & tpsreport \\
				~ & \\
				~ & \\
				\hline
			\end{tabular}\\			
		\end{tabular}
		\caption{Three repairs of $(I,\emptyset)$ (from Figure~\ref{fig:inconsistent_chase}) w.r.t. $\sigst \cup \sigt$.  The {\tt Stakeholders} tables are omitted for brevity.}
		\label{fig:exchange_as_repair}
	\end{figure*}

	The situation is no better in the exchange-as-repair approach.
Figure~\ref{fig:exchange_as_repair} depicts three repairs of this type (using symmetric difference semantics).
While the first two repairs in Figure~\ref{fig:exchange_as_repair} seem reasonable, in the third we have eliminated {\tt Task\_Assignments}(peter, spaceout, software), even though our key constraint is already satisfied by the removal of {\tt Task\_Assignments}(peter, meetbobs, exec) alone.  In this approach, the consistent answers of ${\tt boss}({\rm peter},b)$ are $\emptyset$, despite the intuitive conclusion that peter should be performing tasks for the bobs regardless of which way we fix the department key constraint violation.  For symmetric-difference repairs, it is equally valid to satisfy a violated tgd by removing tuples as by adding them\footnote{A noteworthy alternative to symmetric difference repairs are the loosely-sound semantics of~\cite{DBLP:conf/ijcai/CaliLR03}, discussed in detail in Section~\ref{data-integration-connections}.}.  However, in a data exchange setting, the target instance is initially empty, so it would be more natural to satisfy violated tgds by deriving new tuples.  This observation motivates the particulars of our approach, which we introduce next.

\subsection{Summary of Contributions}
  Our aim in this paper is to introduce and explore a new framework that gives meaningful and non-trivial semantics to queries in data exchange, including cases in which no solutions exist for a given source instance.
   
   At the conceptual level, the main contribution is the introduction of the notion of an \emph{exchange-repair}.
    Informally, an exchange-repair of a source instance is another source instance that differs minimally from the first, but has a solution. Exchange-repairs give rise to a natural notion of \emph{exchange-repair certain answers} (in short, \emph{XR-certain answers}) for target queries in the context of data exchange.
  Note that  if a source instance $I$ has a solution, then the XR-certain answers of target queries on $I$ coincide with the certain answers of the queries on $I$. If $I$ has no solutions, then unlike the certain answers, the XR-certain answers are non-trivial and meaningful.
   
   We provide examples  demonstrating that these new semantics improve upon both the materialize-then-repair approach and the exchange-as-repair approach discussed earlier.
 We also produce a detailed comparison of the XR-certain semantics with the main notions of inconsistency-tolerant semantics studied in data integration and in ontology-based data access. This comparison is carried out in Section \ref{sec:related}, after we have introduced our framework and presented some basic structural properties of exchange-repairs in Section \ref{sec:xr}.


After this, we focus on the problem of computing the XR-certain answers of conjunctive queries. In Section \ref{sec:CQA-rewritability}, we show that for schema mappings specified by source-to-target GAV (global-as-view) dependencies and target egds, the XR-certain answers of  conjunctive queries can be rewritten as the consistent answers (in the sense of standard database repairs) of a union of conjunctive queries over the source schema with respect to a set of egds over the source schema, thus making it possible to use a consistent query-answering system to compute XR-certain answers in data exchange. In contrast, we show that this type of rewriting is not possible for schema mappings specified by source-to-target LAV (local-as-view) dependencies and target egds, nor for schema mappings specified by source-to-target and target GAV dependencies and target egds.
 
In Section \ref{sec:dlp}, we examine the general case of schema mappings specified by s-t tgds, a weakly acyclic set of target tgds and a set of target egds.  The main result is that, for such settings, the XR-certain answers of conjunctive queries can be rewritten as the certain answers of a union of conjunctive queries with respect to the stable models of a disjunctive logic program over a suitable expansion of the source schema.  This is achieved in two steps.
 First, for schema mappings consisting of GAV s-t tgds, GAV target tgds, and target egds, we show that the XR-certain answers of conjunctive queries can be 
  reduced to cautious reasoning over stable models of a disjunctive logic program. Second, for schema mappings consisting of GLAV s-t tgds, weakly acyclic sets of GLAV target tgds, and target egds, we show that the XR-certain answers of conjunctive queries can be rewritten
  as the XR-certain answers of conjunctive queries w.r.t.\  a schema mapping consisting of GAV s-t tgds, GAV target tgds, and target egds.  In fact, we prove the stronger result that such a rewriting is possible for schema mappings specified by a second-order s-t tgd, a weakly acyclic second-order target tgd, and set of target egds.

\eat{
\subsection{Contributions}
We make the following contributions:
\begin{enumerate}
	\item The {\em exchange-repair} and {\em XR-certain} semantics for data exchange based on, conceptually, {\em repairing the source}. These semantics coincide with the traditional semantics when solutions exist.
	\item Examples demonstrating that these new semantics improve upon existing proposals when no solutions exist.
	\item The result that \xrcertain{} query answering has lower data complexity than existing proposals involving consistent query answering.
	\item A rewriting from \xrcertain{} (conjunctive) query answering to consistent query answering for schema mappings consisting of \gav{} s-t tgds and target egds.
	\item A rewriting from \xrcertain{} (conjunctive) query answering to cautious reasoning over stable models of a disjunctive logic program for schema mappings consisting of \gav{} s-t tgds, \gav{} target tgds, and egds.
	\begin{itemize}
		\item A rewriting from \xrcertain{} (conjunctive) query answering for a schema mapping consisting of \glav{} s-t tgds, weakly acyclic sets of \glav{} target tgds, and egds, to \xrcertain{} (conjunctive) query answering for a schema mapping consisting of \gav{} s-t tgds, \gav{} target tgds, and egds.  This result broadens the applicability of the above rewriting to the class of weakly acyclic schema mappings.
	\end{itemize}
\end{enumerate}
}

\eat{
\subsection{Related Work}\label{sec:related}
This work builds directly on the work of many others, in particular  the foundational work on database repairs and consistent query answering by Arenas, Bertossi, and Chomicki~\cite{DBLP:conf/pods/ArenasBC99}, and on data exchange and certain query answering by Fagin et al.~\cite{DBLP:journals/tcs/FaginKMP05}.

As mentioned earlier, the main conceptual contribution of this paper is the introduction of an inconsistency-tolerant semantics for data exchange, called exchange-repairs, in which we consider repairs to the source instance.
Inconsistency-tolerant semantics have been studied in several different areas of database management, including inconsistent databases, data integration, and ontology-based data access. The common motivation for inconsistency-tolerant semantics  is to give non-trivial and, in fact, meaningful semantics to query answering. We now discuss the relationship between the XR-certain answers and the semantics of queries in these different contexts.


In \cite{DBLP:conf/ijcai/CaliLR03} and \cite{DBLP:conf/krdb/LemboLR02}, the authors introduce and study the notion of \emph{loosely-sound} semantics for queries in a data integration setting. There are two main differences between that setting and ours. To begin with, they consider schema mappings in which the schema mapping consists of GAV (global-as-view) constraints between the source (local)  schema and the target (global) schema, and also of key constrains and inclusion dependencies on the target schema; in contrast, we consider richer constraint languages, namely, GLAV (global-and-local-as-view) constraints between source and target, and also target egds and target tgds.  More importantly perhaps, the loosely-sound semantics are, in general, different from the XR-certain answers semantics. Specifically, given a source instance $I$, the loosely-sound semantics are obtained by first computing the result $J$ of the chase of $I$ with the GAV constraints between the source and the target, and then considering as ``repairs" all instances $J'$ that satisfy the target constraints and are inclusion maximal in their intersection with $J$. If all target constraints are egds (in particular, if all target constraints are key constraints), then it is easy to show that, for target conjunctive queries, the loosely-sound semantics  coincide with the consistent answers of queries with respect to subset repairs of $J$. Thus, in this case, the loosely-sound semantics give the same unsatisfactory answers as the materialize-then-repair approach seen in Figure~\ref{fig:exchange_then_repair}.
 Concretely, this approach yields the instance $J'$ in Figure~\ref{fig:exchange_then_repair} as one possible ``repair" of the instance $J$ in Figure 1, and includes the undesirable answers $({\rm peter},{\rm portman})$ and $({\rm peter},{\rm lumbergh})$ to the query ${\tt boss}({\rm peter},b)$.
 Thus, this same example shows that the loosely-sound semantics are different from the XR-certain semantics.

In~\cite{DBLP:conf/pods/CaliLR03}, Cal\`{\i}, Lembo, and Rosati consider the notions of \emph{loosely-sound}, \emph{loosely-complete}, and \emph{loosely-exact} semantics of queries on an inconsistent database. We note that the loosely-exact semantics coincide with the consistent-answer semantics with respect to symmetric-difference-repairs of the inconsistent database.
 In~\cite{DBLP:conf/rr/LemboR07}, Lembo and Ruzzi consider an inconsistency-tolerant  semantics in the context of  ontology-based data access. In that context, a knowledge base is represented as a pair $\langle \cal T, A \rangle$, where $\cal T$ (the ``TBox'') specifies the intensional knowledge in the form of sentences of a description logic, and $\cal A$ (the ``ABox'') specifies the extensional knowledge. The semantics considered by Lembo and Ruzzi are, in effect, the loosely-sound semantics using the ABox as the inconsistent database and the TBox as the constraints; thus, a ``repair" is an instance that satisfies the TBox and has an inclusion-maximal intersection with the ABox.
 Note that the semantics studied in \cite{DBLP:conf/pods/CaliLR03} and in \cite{DBLP:conf/rr/LemboR07}
   are  in a setting in which there is no schema mapping, and therefore no distinction between source and target schemas.  This distinction, however, is of the essence in the context  of data exchange and, hence,
    it is crucial to our definition of exchange-repairs and to the notion of XR-certain answers.
Nonetheless, it is possible to reduce
the problem of query answering under loosely-sound semantics in ontology-based-data-access  to
 a special case of the problem of
 computing the XR-certain answers in  data exchange. For this, one first creates a source schema that is
 a copy of  the schema of the ABox and then
 introduces copy constraints between the source schema and the schema of the ABox. It is then easy to see that there is a one-to-one correspondence between ``repairs" of
the ABox under loosely-sound semantics and  exchange-repair solutions (as defined in the next section); in turn, this implies that the answers to queries under loosely-sound semantics in ontology-based data access coincide with the XR-certain answers of queries in this special case of data exchange.
	
 \eat{
 the loosely-sound semantics for inconsistent databases, in which a repair $I'$ is a consistent database which is inclusion-maximal in its intersection with the original instance $I$.  This notion of a repair has not, to our knowledge, been applied to the source instance in a data exchange setting, though it has in~\cite{DBLP:conf/ijcai/CaliLR03} been applied to a partially materialized target instance, in accordance with the semantics described in \cite{DBLP:conf/krdb/LemboLR02}.  This approach yields $J'$ as one possible world in the example above, and includes the undesirable answers $({\rm peter},{\rm portman})$ and $({\rm peter},{\rm lumbergh})$ to the query ${\tt boss}({\rm peter},b)$.  Cal\`{\i}, Lembo, and Rosati offer an approach to query answering under their semantics for {\em non-key-conflicting} schema mappings and conjunctive queries using query rewriting and a disjunctive logic program.  The {\em non-key-conflicting} criterion is orthogonal to the concept of {\em weak acyclicity} used here.  

In ontology-based data access, a knowledge base is represented as a pair $\langle \cal T, A \rangle$, where $\cal T$ (the ``TBox'') specifies the intensional knowledge in the form of sentences of a description logic, and $\cal A$ (the ``ABox'') specifies the extensional knowledge.  In~\cite{DBLP:conf/rr/LemboR07}, Lembo and Ruzzi proposed a semantics for consistent query answering over description logic ontologies in which an interpretation is considered a repair of the knowledge base if its intersection with the ABox is inclusion maximal.  These semantics (and the loosely-sound database repairs above) are given in a setting in which there is no schema mapping, and therefore no distinction between source and target schemas.  This distinction, however, is crucial to the problem of data exchange, to our definition of exchange-repairs, and to our query answering approaches.  Nonetheless, it is possible to reduce
the problem of query answering under loosely-sound semantics in an ontology-based-data-access setting to
 a special case of the problem of
 computing the XR-certain answers in a data exchange setting. For this, one first creates a source schema that is
 a copy of  the schema of the ABox and then
 introduces copy constraints between the source schema and the schema of the ABox. It is then easy to see that there a one-to-one correspondence between exchange-repair solutions (as defined in the next section) and ``repairs" of
ABox under loosely-sound semantics; in turn, this implies that the answers to queries under loosely-sound semantics in ontology-based data access coincide with the XR-certain answers of queries in this special case of data exchange.

 an exchange-repair problem and also
the loosely-bas and vice-versa, and the same can be said of loosely-sound database repairs.
}

} 
\section{Preliminaries}
This section contains definitions of basic notions and a minimum amount of background material. Detailed information about schema mappings and certain answers can be found in \cite{DBLP:series/synthesis/2010Arenas,DBLP:journals/tcs/FaginKMP05}, and about repairs and consistent answers in \cite{DBLP:conf/pods/ArenasBC99,DBLP:series/synthesis/2011Bertossi}.

\smallskip

\subsection{Instances and Homomorphisms}
Fix an infinite set $\Const$ of elements, and an infinite set $\Nulls$ of elements such that $\Const$ and $\Nulls$ are disjoint.
A \emph{schema}  $\bf R$ is a finite set of relation symbols, each having a designated arity. 
An {\em ${\bf R}$-instance} is  a finite
database $I$ over the schema $\bf R$ whose active domain is a subset of $\Const \cup \Nulls$.
A \emph{fact} of an $\bf R$-instance $I$ is an expression of the form $R(a_1,\ldots,a_k)$, where $R$ is a relation symbol of arity $k$ in $\bf R$ and $(a_1,\ldots,a_k)$ is a member of the relation $R^I$  on $I$ that interprets the relation symbol $R$. Every $\bf R$-instance can be identified with the set of its facts.  We say that an $\bf R$-instance $I'$ is a \emph{sub-instance} of an $\bf R$-instance $I$ if $I'\subseteq I$, where $I'$ and $I$ are viewed as sets of facts.

By a {\em homomorphism} between two instances $K$ and $K'$, we mean a map from the active domain of $K$ to the active domain of $K'$ that is the identity function on all elements of $\Const$ and such that for every atom $\rel R(v_1,...,v_n) \in K$ we have that $\rel R(h(v_1),...,h(v_n)) \in K'$.

\subsection{Schema Mappings and Certain Answers.}
A \emph{tuple-generating dependency (tgd)} is an expression of the form
		$\forall {\tt\bf x}( \phi ({\tt\bf x}) \to \exists {\tt\bf y} \psi ({\tt\bf x},{\tt\bf y}))$,
where $\phi({\tt\bf x})$ and $\psi ({\tt\bf x},{\tt\bf y})$ are conjunctions of atoms over some relational schema.

Tgds are also known as GLAV (global-and-local-as-view) constraints. Tgds with no existentially quantified variables are called {\em full}. Two important special cases are the GAV constraints and the LAV constraints: the former are the tgds of the form $\forall {\tt\bf x}( \phi ({\tt\bf x}) \to P({\bf x}))$ and the latter are the tgds of the form $\forall {\tt\bf x}( R({\bf x}) \to \exists {\tt\bf y} \psi ({\tt\bf x},{\tt\bf y}))$, where $P$ and $R$ are individual relation symbols.  Every full tgd is logically equivalent to a set of \gav{} tgds that can be computed in linear time.

Suppose we have two disjoint relational schemas $\bf S$ and $\bf T$, called the {\em source} schema and the {\em target} schema.  A {\em source-to-target tgd} ({\em s-t tgd}) is a tgd as above such that $\phi({\bf x})$ is a conjunction over $\bf S$ and $\psi({\bf x},{\bf y})$ is a conjunction over $\bf T$.  When the schemas are understood from context, we may say just tgd even if the constraint is source-to-target.

An \emph{equality-generating dependency (egd)} is an expression of the form
		$ \forall {\tt\bf x}( \phi ({\tt\bf x}) \to x_i = x_j) $
with $\phi ({\tt\bf x}) $  a conjunction of atoms over a relational schema.

For the sake of readability, we will frequently drop universal quantifiers when writing tgds and egds.


A \emph{schema mapping} is a quadruple \mapping{}, where  $\bf S$ is a source schema, $\bf T$ is a target schema,
\sigst{} is a finite set of source-to-target constraints, and \sigt{} is a finite set of constraints over the target schema.

We will use the notation \glav, \gav, \lav, \egd{} to denote the classes of sets of constraints consisting of finite sets of, respectively, GLAV constraints, GAV constraints, LAV constraints, and egds.
 If $C$ is a class of sets of source-to-target dependencies and $D$ is a class of sets of target dependencies, then the notation \mappingtypetwo{$C$}{$D$}  denotes the class of all schema mappings \mapping{} such that \sigst{} is a member of $C$ and $\sigt$ is a member of $D$. For example,
  \mappingtypetwo{\glav}{\egd} denotes the class of all schema mappings \mapping{} such that \sigst{} is a finite set of s-t tgds and \sigt{} is a finite set of egds.
   Moreover, we will use the notation $(D_1,D_2)$ to denote that  the union of two classes $D_1$ and $D_2$ of sets of target dependencies. For example, \mappingtypethree{\gav}{\gav}{\egd}
     denotes the class of all schema mappings \mapping{} such that \sigst{} is a set of GAV s-t tgds and \sigt{} is the union of a finite set of GAV target tgds with a finite set of target egds.

Let \mapping{} be a schema mapping.
A target instance $J$ is a \emph{solution} for a source instance $I$ w.r.t.\ ${\mathcal M}$
if $J$ is finite, and the pair $(I,J)$ satisfies $\cal M$, i.e., $I$ and $J$ together satisfy \sigst{}, and $J$ satisfies \sigt{}.  Recall that, by definition, instances are finite.  Additionally, by convention, we will assume that source instances do not contain null values.
A \emph{universal} solution for $I$ is a solution $J$ for $I$ such that if $J'$ is a solution for $I$, then there is a homomorphism $h$ from $J$ to $J'$ that is the identity on the active domain of $I$.
If \mapping{} is an arbitrary schema mapping, then a given source instance may have no solution or it may have a solution, but no (finite) universal solution.  However, if \sigt{} is the union of a \emph{weakly acyclic} set of target tgds and a set of egds, then a solution exists if and only if a universal solution exists. Moreover, the \emph{chase procedure} can be used to determine if, given a source instance $I$, a solution for $I$ exists and, if it does, to actually construct a universal solution $chase(I,{\cal M})$ for $I$ in time polynomial in the size of $I$ (see \cite{DBLP:journals/tcs/FaginKMP05} for details).  The definition of weak acyclicity is given next, followed by the definition of the {\em chase procedure}.

\begin{definition}[\cite{DBLP:journals/tcs/FaginKMP05}]
Let $\Sigma$ be a set of tgds over a schema $\bf T$. Construct
a directed graph, called the {\em dependency graph}, as follows:

\begin{itemize}
\item Nodes: For every pair ($R$, $A$) with $R$ a relation symbol
in $\bf T$ and $A$ an attribute of $R$, there is a distinct node;
call such a pair ($R$, $A$) a {\em position}.
\item Edges: For every tgd $\forall {\bf x}(\phi({\bf x}) \rightarrow \exists {\bf
y}\, \psi({\bf x}, {\bf y}))$ in $\Sigma$ and for every $x$ in $\bf
x$ that  occurs in $\psi$, and  for every occurrence of $x$ in
$\phi$ in position ($R$, $A_i$):
\begin{enumerate}
\item  For every occurrence of $x$ in $\psi$ in position ($S$,
$B_j$), add an edge $(R, A_i) \rightarrow (S, B_j)$ (if it does
not already exist).
\item For every existentially
quantified variable $y$ and for every occurrence of $y$ in $\psi$
in position ($T$, $C_k$), add a {\em special edge} $(R, A_i)
\rightarrow (T, C_k)$ (if it does not already exist).
\end{enumerate}
\end{itemize}

 We say that   $\Sigma$ is {\em weakly acyclic} if the
dependency graph has no cycle going through a special edge.

 \waglav{} denotes the class  of all finite weakly acyclic sets of target tgds.
\end{definition}

 The tgd $\forall x\forall y(E(x,y) \rightarrow \exists z\, E(x,z))$ is  weakly
acyclic; in contrast,  the tgd   $\forall x \forall y(E(x,y) \rightarrow \exists z\,
E(y,z))$ is not, because the dependency graph contains a special
self-loop (see Figure~\ref{fig:dependency_graphs}).  Moreover, every set of GAV tgds is weakly acyclic, since the dependency graph contains no special edges in this case.

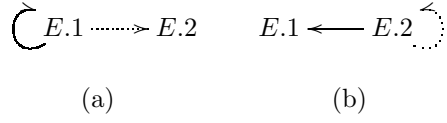
\begin{figure}
	\centering
	\begin{minipage}{0.4\columnwidth}
		\centerline{
		\xymatrix@=0.75em{ & & &\\
		& E.1 \ar@(dl,ul) \ar@{.>}[rr] & & E.2\\
		& & &\\
		}}
		\centerline{(a)}
	\end{minipage}
	\begin{minipage}{0.4\columnwidth}
		\centerline{
		\xymatrix@=0.75em{ & & &\\
		E.1 & & E.2 \ar[ll] \ar@{.>}@(dr,ur) & \\
		& & &\\
		}}
		\centerline{(b)}
	\end{minipage}
	\caption{The dependency graphs for (a) $\forall x\forall y(E(x,y) \rightarrow \exists z\, E(x,z))$ and (b) $\forall x \forall y(E(x,y) \rightarrow \exists z\, E(y,z))$.  Special edges are dotted.}
	\label{fig:dependency_graphs}
\end{figure}

What follows is the definition of the {\em chase procedure}.
\begin{definition}[chase procedure~\cite{DBLP:journals/tcs/FaginKMP05}]
	Let $K$ be an instance.
	\begin{description}
	
		\item[(tgd)] Let $d$ be a tgd $\phi({\bf x}) \to \exists {\bf y} \psi({\bf x},{\bf y})$.  Let $h$ be a homomorphism from $\phi({\bf x})$ to $K$ such that there is no extension of $h$ to a homomorphism $h'$ from $\phi({\bf x}) \wedge \psi({\bf x},{\bf y})$ to $K$.  We say that $d$ {\em can be applied to $K$ with homomorphism $h$}.
		
		Let $K'$ be the union of $K$ with the set of facts obtained by: (a) extending $h$ to $h'$ such that each variable in ${\bf y}$ is assigned a fresh labeled null, followed by (b) taking the image of the atoms of $\psi$ under $h'$.  We say that {\em the result of applying $d$ to $K$ with $h$} is $K'$, and write $K \xrightarrow{d,h} K'$.
		
		\item[(egd)] Let $d$ be an egd $\phi({\bf x}) \to (x_1=x_2)$.  Let $h$ be a homomorphism from $\phi({\bf x})$ to $K$ such that $h(x_1) \neq h(x_2)$.  We say that $d$ {\em can be applied to $K$ with homomorphism $h$}.  We distinguish two cases.
		\begin{itemize}
			\item If both $h(x_1)$ and $h(x_2)$ are in $Const$ then we say that {\em the result of applying $d$ to $K$ with $h$} is ``failure'', and write $K \xrightarrow{d,h} \bot$.
			\item Otherwise, let $K'$ be $K$ where we identify $h(x_1)$ and $h(x_2)$ as follows: if one is a constant, then the labeled null is replaced everywhere by the constant; if both are labeled nulls, then one is replaced everywhere by the other.  We say that {\em the result of applying $d$ to $K$ with $h$ is $K'$}, and write $K \xrightarrow{d,h} K'$.
		\end{itemize}
	\end{description}
	
	In the above, $K \xrightarrow{d,h} K'$ (including the case where $K'$ is $\bot$) is called a {\em chase step}.  We now define chase sequences and finite chases.
	
	Let $\Sigma$ be a set of tgds and egds, and let $K$ be an instance.
	\begin{itemize}
		\item A {\em chase sequence of $K$ with $\Sigma$} is a sequence (finite or infinite) of chase steps $K_i \xrightarrow{d_i,h_i} K_{i+1}$, with $i=0,1,...$, with $K = K_0$ and $d_i$ a dependency in $\Sigma$.
		\item A {\em finite chase of $K$ with $\Sigma$} is a finite chase sequence $K_i \xrightarrow{d,h} K_{i+1}$, $0 \le i < m$, with the requirement that either (a) $K_m = \bot$ or (b) there is no dependency $d_i$ of $\Sigma$ and there is no homomorphism $h_i$ such that $d_i$ can be applied to $K_m$ with $h_i$.  We say that $K_m$ is the result of the finite chase.  We refer to case (a) as the case of a {\em failing finite chase} and we refer to case (b) as the case of a {\em successful finite chase}.
	\end{itemize}
\end{definition}

In the context of data exchange, we chase the source instance first with the source-to-target constraints, and then continue chasing with the target constraints.  The nature of s-t tgds ensure that no atoms are created over the source schema, so in this setting the result of chasing a source instance $I$ with a schema mapping \sm{M} is a pair $(I,J)$ where $J$ is a target instance.  We usually refer to $J$ alone as the result of the chase.

We will also make use of the notion of \emph{rank}~\cite{DBLP:journals/tcs/FaginKMP05}. Let $\Sigma$ be a finite weakly acyclic set of tgds. For every node $(R, A)$ in the dependency graph of $\Sigma$,
define an {\it incoming path} to be any (finite or
infinite) path ending in $(R, A)$. Define the {\it rank} of $(R, A)$, denoted by
$rank{(R, A)}$, as the maximum number of special edges on any such
incoming path. Since $\Sigma$ is
weakly acyclic, there are no cycles going through special edges; hence,
$rank{(R, A)}$ is finite. The \emph{rank} of $\Sigma$, denoted $rank(\Sigma)$ is the maximum of $rank(R,A)$ over all positions $(R, A)$
in the dependency graph of $\Sigma$.

If $q$ is a query over the target schema $\bf T$ and $I$ is a source instance, then
the \emph{certain answers} of  $q$ with respect to $\mathcal {M}$ are defined as
	\begin{multline*}
		\certain(q,I,\sm{M})=\\
		\bigcap \{ q(J) :~ J \mbox{ is a solution for } I \mbox{ w.r.t. } \sm{M} \}
	\end{multline*}
	
\begin{definition}
	Let $J$ be an instance which may contain null values, and let $q$ be a conjunctive query over the schema of $J$.  Then $q\!\!\downarrow(J)$ is defined as the answers of $q$ on $J$ that contain no null values.
\end{definition}
If $J$ is a universal solution for a source instance $I$ w.r.t. a schema mapping \sm{M}, then for every conjunctive query $q$, it holds that $\certain(q,I,\sm{M}) = q\!\!\downarrow(J)$.

\subsection{Repairs and Consistent Answers.} Let $\Sigma$ be a set of constraints over some relational schema. An \emph{inconsistent} database is a database that violates at least one constraint in $\Sigma$. Informally, a \emph{repair} of an inconsistent database $I$ is a consistent database $I'$ that differs from $I$ in a ``minimal" way. This notion can be formalized in several different ways~\cite{DBLP:conf/pods/ArenasBC99}.
		\begin{enumerate}
			\item A \emph{symmetric-difference-repair} of $I$, denoted $\oplus$-repair of $I$, is an instance $I'$ that satisfies $\Sigma$ and where there is no instance $I''$ such that $ I \oplus I'' \subset I \oplus I' $ and $I''$ satisfies $\Sigma$. Here, $I\oplus I'$ denotes the set of facts that form the symmetric difference of the instances $I$ and $I'$.
			\item A \emph{subset-repair} of $I$ is an instance $I'$ that satisfies $\Sigma$ and where there is no instance $I''$ such that $ I' \subset I'' \subseteq I $ and $I''$ satisfies $\Sigma$.
			\item A \emph{superset-repair} of $I$ is an instance $I'$ that satisfies $\Sigma$ and where there is no instance $I''$ such that $ I' \supset I'' \supseteq I $ and $I''$ satisfies $\Sigma$.
		\end{enumerate}

Clearly, subset-repair and superset-repairs are also  $\oplus$-repairs; however, a $\oplus$-repair need not be a subset-repair or a superset-repair.

The \emph{consistent answers} of  a query $q$ on $I$ with respect to $\Sigma$ are defined as:
	\begin{multline*}
		\oplus\text{-}\cqa(q,I,\Sigma) = \\
		\bigcap \{ q(I') :~ I' \mbox{ is a } \mbox{$\oplus$-repair of } I \mbox{ w.r.t. }\Sigma \}
	\end{multline*}
	with subset and superset versions defined analogously.

\section{Framework and Related Work} \label{sec:xr}
In this section, we introduce the exchange-repair framework, discuss its structural and algorithmic properties, and explore its relationship to inconsistency tolerant semantics in data integration and ontology-based data access.

\subsection{The Exchange-Repair Framework}

	\begin{definition}
	Let  \mapping{} be a schema mapping, $I$ a source instance, and $(I',J')$ a pair of a source instance and a target instance.
		\begin{enumerate}
			\item We say that $(I',J')$ is a \emph{symmetric-difference exchange-repair solution} (in short, a
 $\oplus$-\xrsolution) for $I$ w.r.t.\ $\sm{M}$ if $(I',J')$ satisfies $\sm{M}$, and there is no pair of instances $(I'',J'')$ such that $ I \oplus I'' \subset I \oplus I' $ and $(I'',J'')$ satisfies $\sm{M}$.
			\item We say that $(I',J')$ is a \emph{subset exchange-repair solution} (in short, a subset-\xrsolution) for $I$ with respect to $\sm{M}$ if $I' \subseteq I$ and $(I',J')$ satisfies $\sm{M}$; and there is no pair of instances $(I'',J'')$ such that $ I' \subset I'' \subseteq I $ and $(I'',J'')$ satisfies $\sm{M}$.
		\end{enumerate}
		Note that  the minimality condition in the preceding definitions
applies to the source instance $I'$, but not to the target instance  $J'$ of the pair $(I',J')$.
 The source instance $I'$ of a $\oplus$-\xrsolution{} (subset-\xrsolution) for $I$ is called a {$\oplus$-\em source-repair} (respectively, subset source-repair) of $I$.
	\end{definition}

	\begin{figure*}[htbp]
		\Small\centering
		\begin{tabular}{rlcrl}
			\multicolumn{2}{c}{$(I,J)'$} & & \multicolumn{2}{c}{$(I,J)''$}\\
			\cline{1-2} \cline{4-5}
			\\
			\multicolumn{2}{c}{
			\begin{tabular}{|c|c|c|}
				\hline
				\multicolumn{3}{|c|}{\tt Task\_Assignments} \\
				person & task & dept \\
				\hline
				\sout{peter} & \sout{tpsreport} & \sout{software} \\
				\sout{peter} & \sout{spaceout} & \sout{software} \\
				peter & meetbobs & exec \\
				\hline
			\end{tabular}} & &
			\multicolumn{2}{c}{
			\begin{tabular}{|c|c|c|}
				\hline
				\multicolumn{3}{|c|}{\tt Task\_Assignments} \\
				person & task & dept \\
				\hline
				peter & tpsreport & software \\
				peter & spaceout & software \\
				\sout{peter} & \sout{meetbobs} & \sout{exec} \\
				\hline
			\end{tabular}}
			\\
			\\
			\begin{tabular}{|c|c|}
				\hline
				\multicolumn{2}{|c|}{\tt Departments} \\
				person & dept \\
				\hline
				~ & \\
				peter & exec \\
				~ & \\
				\hline
			\end{tabular} &
			\begin{tabular}{|c|c|}
				\hline
				\multicolumn{2}{|c|}{\tt Tasks} \\
				person & task \\
				\hline
				~ & \\
				~ & \\
				peter & meetbobs \\
				\hline
			\end{tabular} & &
			\begin{tabular}{|c|c|}
				\hline
				\multicolumn{2}{|c|}{\tt Departments} \\
				person & dept \\
				\hline
				peter & software \\
				~ & \\
				~ & \\
				\hline
			\end{tabular} &
			\begin{tabular}{|c|c|}
				\hline
				\multicolumn{2}{|c|}{\tt Tasks} \\
				person & task \\
				\hline
				peter & tpsreport \\
				peter & spaceout \\
				~ & \\
				\hline
			\end{tabular}
		\end{tabular}
		\caption{Two {\xrsolution}s for $I$ w.r.t. \sm{M}.  The {\tt Stakeholders} tables are omitted for brevity.}
		\label{example:exchange_repair}
	\end{figure*}

	Figure~\ref{example:exchange_repair} shows all two {\xrsolution}s for our source instance and schema mapping.  Notice that the shared origins of tuples are taken into account (for example, peter performs tasks only for his assigned department, unlike in Figure~\ref{fig:exchange_then_repair}), but the {\xrsolution}s retain more derived target information than the instances in Figure~\ref{fig:exchange_as_repair} (by preferring to satisfy tgds by adding rather than deleting tuples).  If we now evaluate ${\tt boss}({\rm peter},b)$ over each target instance, and take the intersection, we have $\{({\rm peter},{\rm bobs})\}$, which aligns well with our intuitive expectations.  A precise semantics for query answering is given later in this section.

 Source-repairs constitute a new notion that, in general, has different properties from those of the standard database repairs. Indeed, as mentioned earlier, a $\oplus$-repair need not be a subset repair. In contrast, Theorem~\ref{XR-subset-thm} (below) asserts that the state of affairs is different for source-repairs. Recall that, according to the notation introduced earlier, \mappingtypethree{\glav}{\waglav}{\egd} denotes the collection of all schema mappings \mapping{} such that \sigst{} is a finite set of s-t tgds and \sigt{} is the union of a finite weakly acyclic set of target tgds with a finite set of target egds.

	\begin{lemma} \label{subset-trickledown-lemma}
		Let \sm{M} be a \mappingtypethree{\glav}{\glav}{\egd} schema mapping.  If $I' \supseteq I$ are two source instances, then every solution for $I'$ w.r.t. \sm{M} is also a solution for $I$ w.r.t. \sm{M}, and consequently if $I$ has no solution w.r.t. \sm{M} then $I'$ has no solution w.r.t. \sm{M}.
	\end{lemma}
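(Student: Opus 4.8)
The plan is to unwind the definition of ``solution'' and reduce the claim to the observation that source-to-target tgds are monotone in the source instance. Recall that $J$ is a solution for a source instance $K$ w.r.t.\ \sm{M} precisely when $(K,J) \models \sigst$ and $J \models \sigt$. The second conjunct depends only on $J$, since every constraint in \sigt{} is a target tgd or target egd and mentions only relation symbols of $\mathbf{T}$; it is therefore insensitive to which source instance $J$ is paired with. So the whole statement reduces to showing: if $I \subseteq I'$ and $(I',J) \models \sigst$, then $(I,J) \models \sigst$.

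To establish that, I would fix an arbitrary s-t tgd $\forall \mathbf{x}\,(\phi(\mathbf{x}) \to \exists \mathbf{y}\, \psi(\mathbf{x},\mathbf{y}))$ in \sigst{}, with $\phi$ a conjunction of atoms over $\mathbf{S}$ and $\psi$ a conjunction of atoms over $\mathbf{T}$, together with an arbitrary homomorphism $h$ from $\phi(\mathbf{x})$ into $I$. Since $I \subseteq I'$ as sets of facts and $\phi$ is a conjunction of atoms (hence preserved under adding facts), $h$ is also a homomorphism from $\phi(\mathbf{x})$ into $I'$; by $(I',J) \models \sigst$ there is an extension $h'$ of $h$ mapping $\psi(\mathbf{x},\mathbf{y})$ into $J$, and this same $h'$ witnesses the tgd for the trigger $h$ in $(I,J)$. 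As the tgd and trigger were arbitrary, $(I,J) \models \sigst$, so $J$ is a solution for $I$. The ``consequently'' clause is then just the contrapositive: a solution for $I'$ would, by what we just proved, also be a solution for $I$, so if $I$ has no solution then neither does $I'$.

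I do not expect a genuine obstacle here: this is essentially a one-line monotonicity argument, and it uses only that the body of each s-t tgd is over $\mathbf{S}$ while the heads and all of \sigt{} are over $\mathbf{T}$ --- weak acyclicity of the target tgds is not needed, and indeed the statement is proved for the larger class \mappingtypethree{\glav}{\glav}{\egd}. The only point worth stating with a little care is why enlarging the source cannot break an s-t tgd: additional source facts can only introduce new triggers, for which $(I',J)$ already supplies witnesses, and can never invalidate a witness for an existing trigger, because the head $\psi$ lives entirely in the target and is left untouched when we pass from $I'$ to $I$.
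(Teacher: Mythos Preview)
Your proposal is correct and follows essentially the same approach as the paper's proof: both fix an arbitrary solution $J$ for $I'$, take an arbitrary s-t tgd and a homomorphism from its body into $I$, observe that this homomorphism also maps into $I'\supseteq I$, invoke $(I',J)\models\sigst$ to obtain the witnessing extension into $J$, and note that $J\models\sigt$ is independent of the source. Your write-up is slightly more explicit in separating the two conjuncts of the solution definition and in spelling out the contrapositive, but the argument is the same.
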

	
	\begin{proof}
		Let $I' \supseteq I$ be two source instances.  We will show that if $I'$ has a solution w.r.t. \sm{M} then $I$ also has a solution w.r.t. \sm{M}.  Let $J$ be an arbitrary solution for $I'$ w.r.t. \sm{M}.  Let $\phi({\bf x}) \to \exists {\bf y} \psi({\bf x},{\bf y})$ be an arbitrary tgd in \sigst{}, and let $h:{\bf x} \to {\rm adom}(I)$ be a homomorphism such that $h(\phi({\bf x})) \subseteq I$, and of course $h(\phi({\bf x})) \subseteq I'$ as well.  Then $h$ can be extended to some homomorphism $h'$ such that $h'(\psi({\bf x},{\bf y})) \subseteq J$, and therefore $(I,J)$ together satisfy \sigst{}, and since $J$ satisfies \sigt{}, we have that $J$ is also a solution for $I$ w.r.t. \sm{M}.
	\end{proof}

	\begin{theorem} \label{XR-subset-thm} Let ${\cal M}$  be a \mappingtypethree{\glav}{\glav}{\egd} schema mapping.  Let $I$ be a source instance.  Then if $(I',J')$ is a $\oplus$-\xrsolution{} of $I$ w.r.t.\ ${\cal M}$, then $(I',J')$ is actually a subset-\xrsolution{} of $I$ w.r.t.\ ${\cal M}$. Consequently, every $\oplus$-source-repair of $I$ is also a subset-source-repair of $I$.
	\end{theorem}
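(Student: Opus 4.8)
The plan is to argue by contradiction, using Lemma~\ref{subset-trickledown-lemma} to show that any $\oplus$-\xrsolution{} whose source part contains a fact not in $I$ can be ``trimmed'' to a strictly better one, contradicting $\oplus$-minimality. The only thing one really has to notice is that the original target instance $J'$ can be reused verbatim as the witness for the trimmed source; everything else is bookkeeping on symmetric differences.

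First I would suppose that $(I',J')$ is a $\oplus$-\xrsolution{} of $I$ w.r.t.\ $\sm{M}$ but that $I'\not\subseteq I$, and set $I'' := I'\cap I$. Then $I''\subsetneq I'$ and, crucially, $I''\subseteq I$. Computing symmetric differences: since $I''\subseteq I$ we have $I\oplus I'' = I\setminus I'' = I\setminus I'$, whereas $I\oplus I' = (I\setminus I')\cup(I'\setminus I)$ with $I'\setminus I\neq\emptyset$ by the assumption $I'\not\subseteq I$. Hence $I\oplus I''\subsetneq I\oplus I'$.

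Next I would exhibit a pair at $I''$ that satisfies $\sm{M}$. Since $(I',J')$ satisfies $\sm{M}$, the instance $J'$ is a solution for $I'$ w.r.t.\ $\sm{M}$; and since $I''\subseteq I'$, Lemma~\ref{subset-trickledown-lemma} (applied with the larger instance $I'$ and the smaller instance $I''$) gives that $J'$ is also a solution for $I''$ w.r.t.\ $\sm{M}$, i.e.\ $(I'',J')$ satisfies $\sm{M}$. Together with $I\oplus I''\subsetneq I\oplus I'$, this contradicts the minimality clause in the definition of $\oplus$-\xrsolution{}. Therefore $I'\subseteq I$.

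Finally I would observe that $I'\subseteq I$ already forces $(I',J')$ to be a subset-\xrsolution{}: if there were a pair $(I'',J'')$ satisfying $\sm{M}$ with $I'\subsetneq I''\subseteq I$, then, both $I'$ and $I''$ being subsets of $I$, we would have $I\oplus I'' = I\setminus I'' \subsetneq I\setminus I' = I\oplus I'$, again contradicting the $\oplus$-minimality of $(I',J')$. The last sentence of the theorem is then immediate, since by definition the source part of a subset-\xrsolution{} is a subset-source-repair. I do not anticipate a genuine obstacle here; the argument is entirely elementary once Lemma~\ref{subset-trickledown-lemma} is in hand, the one small point being the direction in which the lemma is invoked (restricting a solution from a larger source to a smaller one).
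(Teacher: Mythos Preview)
Your proof is correct and follows essentially the same route as the paper's: both argue by contradiction, set $I'':=I'\cap I$, invoke Lemma~\ref{subset-trickledown-lemma} to see that $J'$ remains a solution for $I''$, and observe $I\oplus I''\subsetneq I\oplus I'$ to contradict $\oplus$-minimality. You are a bit more explicit than the paper in the final step (verifying that once $I'\subseteq I$ the $\oplus$-minimality yields subset-minimality), which the paper leaves implicit.
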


	\begin{proof}
		Let $(I',J')$ be an $\oplus$-\xrsolution{} for $I$ w.r.t. \sm{M}.  Suppose $I' \setminus I \neq \emptyset$.  Then by Lemma~\ref{subset-trickledown-lemma}, $J'$ is also a solution for $I' \cap I$.  Since $I' \oplus I \supset (I' \cap I) \oplus I$, we have that $(I',J')$ fails the minimality criterion and thus is not a $\oplus$-\xrsolution{} for $I$ w.r.t. \sm{M}, which is a contradiction.
	\end{proof}

\begin{remark}\em
  From here on and in view of Theorem \ref{XR-subset-thm}, we will use the term \xrsolution{} to mean subset-\xrsolution{}; similarly, source-repair will mean subset source-repair.
\end{remark}

Note that if ${\cal M}$ is a \mappingtypethree{\glav}{\waglav}{\egd} schema mapping, then source-repairs always exist. The reason is that, since the pair $(\emptyset,\emptyset)$ trivially satisfies ${\cal M}$,  then for every source instance $I$, there must exist a maximal subinstance $I'$ of $I$ for which a solution $J'$ w.r.t.\ ${\cal M}$ exists; hence, $(I',J')$ is a source repair for $I$ w.r.t.\ ${\cal  M}$.

We now claim that the following statements are true for arbitrary source instances and schema mappings.
\begin{enumerate}
	\item Repairs of the target instance obtained by chasing with the tgds of the schema mapping are not necessarily {\xrsolution}s.
	\item Repairs of $(I,\emptyset)$ are not necessarily {\xrsolution}s.
\end{enumerate}
For the first statement, consider the pair $(I,J')$ from Figures~\ref{fig:inconsistent_chase}~and \ref{fig:exchange_then_repair}, where $J'$ is a $\oplus$-repair of the inconsistent result $J$ of the chase of $I$. Clearly, $(I,J')$ is not an \xrsolution, because $J'$ is not a solution for $I$. For the second statement, consider the pairs $(I_1,J_1)$, $(I_2,J_2)$, $(I_3,J_3)$ in Figure \ref{fig:exchange_as_repair}, all of which are $\oplus$-repairs of  $(I,\emptyset)$. The first two are also {\xrsolution}s of $I$, but the third one is not.

It can also be shown that {\xrsolution}s are not necessarily $\oplus$-repairs of $(I,\emptyset)$.  We now describe an important case in which {\xrsolution}s {\bf are} $\oplus$-repairs of $(I,\emptyset)$. For this, we recall the notion of a \emph{core universal solution} from \cite{DBLP:journals/tcs/FaginKMP05}. By definition, a core universal solution is a universal solution that has no homomorphism to a proper subinstance. If a universal solution exists, then a core universal solution also exists. Moreover, core universal solutions are unique up to isomorphism.

\begin{proposition}
		\label{prp:core_xr_sol_is_repair}
Let ${\cal M}$  be a \mappingtypethree{\glav}{\glav}{\egd} schema mapping. If $I$ is source instance and	
	$(I',J')$  is an \xrsolution{} for $I$ w.r.t. $\sm{M}$ such that $J'$ is a core universal solution for $I'$ w.r.t. $\sm{M}$, then $(I',J')$ is a $\oplus$-repair of $(I,\emptyset)$ w.r.t.\ $\sigst{} \cup \sigt{}$.
	\end{proposition}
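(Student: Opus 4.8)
The plan is to verify directly the two defining conditions of a $\oplus$-repair of $(I,\emptyset)$ w.r.t.\ $\sigst \cup \sigt$. The first condition, that $(I',J')$ (viewed as an instance over $\mathbf{S}\cup\mathbf{T}$) satisfies $\sigst\cup\sigt$, is immediate: by hypothesis $(I',J')$ satisfies $\sm{M}$, and for a pair of a source and a target instance, satisfying $\sm{M}$ is by definition the same as satisfying every s-t tgd of $\sigst$ together with every target constraint of $\sigt$ (the latter depending only on the target part). So the real content is the minimality condition.

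For minimality, I would first pin down the symmetric difference involved. By the Remark following Theorem~\ref{XR-subset-thm} we may treat $(I',J')$ as a subset-\xrsolution, so $I'\subseteq I$ and hence $(I,\emptyset)\oplus(I',J') = (I\setminus I')\cup J'$, a set consisting of source facts (from $I\setminus I'$) and target facts (from $J'$). Now suppose, for contradiction, that some pair $(I'',J'')$ satisfies $\sigst\cup\sigt$ and $(I,\emptyset)\oplus(I'',J'')\subsetneq(I,\emptyset)\oplus(I',J')$. Writing $(I,\emptyset)\oplus(I'',J'') = (I\oplus I'')\cup J''$ and comparing the source facts and the target facts separately --- legitimate since the two are disjoint --- this strict inclusion forces $I'\subseteq I''\subseteq I$ and $J''\subseteq J'$, with at least one inclusion proper.

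I would then split into two cases. If $I'\subsetneq I''$: since $(I'',J'')$ satisfies $\sigst\cup\sigt$ it satisfies $\sm{M}$, so $J''$ is a solution for $I''$, and $I'\subsetneq I''\subseteq I$; this contradicts the minimality clause in the definition of a subset-\xrsolution{} for $I$. Otherwise $I''=I'$ and $J''\subsetneq J'$: then $J''$ is a solution for $I'$ w.r.t.\ $\sm{M}$, so by universality of $J'$ there is a homomorphism $h\colon J'\to J''$ that is the identity on $\mathrm{adom}(I')$; composing $h$ with the inclusion $J''\hookrightarrow J'$ yields a homomorphism from $J'$ to the proper subinstance $J''$ of $J'$, contradicting that $J'$ is a core. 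In both cases we reach a contradiction, so no such $(I'',J'')$ exists and $(I',J')$ is a $\oplus$-repair of $(I,\emptyset)$.

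The argument needs no machinery beyond Theorem~\ref{XR-subset-thm}, the definition of a core universal solution, and the definition of a $\oplus$-repair. The only points requiring care --- not genuine obstacles --- are the bookkeeping of the symmetric difference over the combined schema $\mathbf{S}\cup\mathbf{T}$ (keeping source and target facts separate so that the componentwise comparison is valid), and the observation that a pair satisfying $\sigst\cup\sigt$ is precisely a pair satisfying $\sm{M}$, which is what lets both case contradictions go through.
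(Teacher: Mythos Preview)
Your proposal is correct and follows essentially the same approach as the paper: split the symmetric-difference comparison into its source and target parts, use the \xrsolution{} minimality to rule out $I'\subsetneq I''\subseteq I$, and use the core property to rule out $J''\subsetneq J'$. Your version is in fact a bit more careful than the paper's, which simply asserts that a core universal solution has no proper subinstance that is a solution; you spell out the homomorphism-composition argument that justifies this.
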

	
\begin{proof}
	Let $(I'', J'')$ be a pair of instances which together satisfy $\sigst \cup \sigt$ and such that $(I'', J'') \oplus (I,\emptyset) \subseteq (I', J') \oplus (I,\emptyset)$.  Since $(I', J')$ is an \xrsolution{} for $I$ w.r.t. $\sm{M}$, there is no instance $I''$ such that $I' \subset I'' \subseteq I$ and $I''$ has a solution w.r.t. $\sm{M}$.  Therefore it must be that $I'' = I'$.  Furthermore, since $J'$ is a core universal solution, there is no proper subinstance $J'' \subset J'$ that is a solution for $I'$ w.r.t. $\sm{M}$, so $J'' = J'$.  Therefore $(I', J')$ is a $\oplus$-repair of $(I,\emptyset)$ w.r.t.\ $\sigst{} \cup \sigt{}$. 
\end{proof}

Next, we present the second key notion in the exchange-repair framework.

\begin{definition} Let \mapping{} be a schema mapping and $q$ a query over the target schema $\bf T$. If $I$ is a source instance, then the \emph{XR-certain answers} of $q$ on $I$ w.r.t.\ $\cal M$ is the set
		\begin{multline*}
			\xrcertain(q,I,\sm{M})=\\
			\bigcap \{ q(J'):\mbox{$(I',J')$ is an \xrsolution{} for $I$} \}.
		\end{multline*}
	\end{definition}
	
Note that when $I$ has a solution w.r.t. \cal{M}, it is its own only \xrsolution.  Thus the \xrcertain{} semantics coincide with certain semantics when solutions exist.  The next results provide a comparison of the XR-certain answers with the consistent answers.

	\begin{proposition} \label{xr-certain-vs-cqa-prop}
Let \mapping{} be a \mappingtypethree{\glav}{\waglav}{\egd} schema mapping and $q$ a conjunctive query over the target schema $\bf T$. If $I$ is a source instance, then $\xrcertain(q,I,\sm{M}) \supseteq \oplus\text{-}\cqa(q,(I, \emptyset),\sigst{}\cup\sigt{})$.  Moreover, this containment may be a proper one.
	\end{proposition}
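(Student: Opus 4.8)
The statement has two parts --- the inclusion $\xrcertain(q,I,\sm{M})\supseteq\oplus\text{-}\cqa(q,(I,\emptyset),\sigst{}\cup\sigt{})$ and an example witnessing strictness --- and I would treat them separately. The inclusion is where Proposition~\ref{prp:core_xr_sol_is_repair} does the work; the strictness is the running example.

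For the inclusion, the first step is to regroup the intersection defining $\xrcertain$ by source-repair. Since the minimality condition in the definition of an \xrsolution{} constrains only the source component, the {\xrsolution}s for $I$ are exactly the pairs $(I',J')$ with $I'$ a source-repair of $I$ and $J'$ \emph{any} solution for $I'$; hence
\[
  \xrcertain(q,I,\sm{M})=\bigcap\{\,\certain(q,I',\sm{M}):I'\text{ a source-repair of }I\,\}.
\]
It therefore suffices to show, for each fixed source-repair $I'$ of $I$, that $\oplus\text{-}\cqa(q,(I,\emptyset),\sigst{}\cup\sigt{})\subseteq\certain(q,I',\sm{M})$. Fix such an $I'$. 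Since $\sigt$ is a weakly acyclic set of tgds together with a set of egds and $I'$ has a solution, $I'$ has a core universal solution $C'$, and $\certain(q,I',\sm{M})=q\!\!\downarrow(C')$. As $\sm{M}$ is in particular a \mappingtypethree{\glav}{\glav}{\egd} schema mapping and $(I',C')$ is an \xrsolution{} for $I$ whose target component is a core universal solution of $I'$, Proposition~\ref{prp:core_xr_sol_is_repair} gives that $(I',C')$ is a $\oplus$-repair of $(I,\emptyset)$ w.r.t.\ $\sigst{}\cup\sigt{}$. Hence every tuple of $\oplus\text{-}\cqa(q,(I,\emptyset),\sigst{}\cup\sigt{})$ lies in $q(C')$, and --- as such tuples contain no nulls --- in fact in $q\!\!\downarrow(C')=\certain(q,I',\sm{M})$, which completes this direction.

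The only point requiring care is that $\oplus\text{-}\cqa(q,(I,\emptyset),\sigst{}\cup\sigt{})$ contains no null values, which is what lets us pass from $q(C')$ to $q\!\!\downarrow(C')$ (needed because the homomorphism from a core universal solution into an arbitrary solution of $I'$ is the identity on constants but not on nulls). One argues this directly: if a tuple mentioning a null $n$ belonged to the intersection, then picking any $\oplus$-repair $(I'',J'')$ --- in which $n$, if present, must occur in $J''$, as source instances are null-free --- and renaming $n$ to a fresh null throughout $J''$ produces another pair that satisfies $\sigst{}\cup\sigt{}$ and has the same symmetric-difference structure relative to $(I,\emptyset)$, hence is again a $\oplus$-repair; the tuple fails in $q$ of this new repair, contradicting its membership in the intersection. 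Apart from this bookkeeping, the inclusion is a one-line consequence of Proposition~\ref{prp:core_xr_sol_is_repair} and of the triviality that $\certain(q,I',\sm{M})\subseteq q(J')$ for every solution $J'$ of $I'$; I expect that bookkeeping to be the only mild obstacle.

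For strictness I would reuse the example of Figures~\ref{example:mapping}--\ref{fig:exchange_as_repair}: take $\sm{M}$ of Figure~\ref{example:mapping}, $I$ of Figure~\ref{fig:inconsistent_chase}, and $q={\tt boss}({\rm peter},b)$. As discussed around Figure~\ref{example:exchange_repair}, the two {\xrsolution}s of $I$ are the ones shown there, so $\xrcertain(q,I,\sm{M})=\{({\rm peter},{\rm bobs})\}$; whereas the $\oplus$-repairs of $(I,\emptyset)$ w.r.t.\ $\sigst{}\cup\sigt{}$ --- among them the three pairs of Figure~\ref{fig:exchange_as_repair} --- yield $\oplus\text{-}\cqa(q,(I,\emptyset),\sigst{}\cup\sigt{})=\emptyset$, as noted in the discussion of the exchange-as-repair approach. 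Hence the inclusion is proper in this instance, and the proposition follows.
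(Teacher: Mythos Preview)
Your proof is correct and follows essentially the same approach as the paper: both arguments hinge on Proposition~\ref{prp:core_xr_sol_is_repair} to identify, for each source-repair $I'$, a core universal solution $C'$ such that $(I',C')$ is a $\oplus$-repair of $(I,\emptyset)$, and then observe that intersecting $q$ over this smaller family yields a superset of $\oplus\text{-}\cqa$. Your treatment is in fact more careful than the paper's on one point: the paper simply asserts that $\xrcertain(q,I,\sm{M})$ equals the intersection of $q(J')$ over \xrsolution{}s whose target part is a core universal solution, whereas you explicitly handle the passage from $q(C')$ to $q\!\!\downarrow(C')$ via the null-freeness of $\oplus\text{-}\cqa$; this is the right thing to do, and your renaming argument for null-freeness is sound. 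The strictness example is identical to the paper's.
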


	\begin{proof}
		Since \sm{M} is weakly acyclic, for any instance $I$ for which solutions exist, a core universal solution also exists.  Therefore, we have that $\xrcertain(q,I,\sm{M})=\bigcap \{ q(J'):(I',J')$ is an \xrsolution{} for $I$ w.r.t. \sm{M}, and $J'$ is a core universal solution for $I'$ w.r.t. \sm{M} $\}$.
		By Proposition~\ref{prp:core_xr_sol_is_repair}, the set of {\xrsolution}s $(I',J')$ where $J'$ is a core universal solution for $I'$ w.r.t. \sm{M} is a subset (maybe proper) of the set of $\oplus$-repairs of $(I,\emptyset)$ w.r.t. $\sigst \cup \sigt$.  Therefore $\xrcertain(q,I,\sm{M}) \supseteq \oplus\text{-}\cqa(q,(I, \emptyset),\sigst{}\cup\sigt{})$.
		
		To see that this containment may be a proper one, consider the schema mapping \sm{M} and query ${\tt boss}({\tt peter},b)$ in Figure~\ref{example:mapping}, and the repairs of $(I,\emptyset)$ in Figure~\ref{fig:exchange_as_repair}.  It is easy to verify that $\oplus\text{-}\cqa({\tt boss}({\tt peter},b),(I, \emptyset),\sigst{}\cup\sigt{})=\emptyset$, while $\xrcertain({\tt boss}({\tt peter},b),I,\sm{M}) =\{({\tt peter},{\tt bobs})\}$.  
	\end{proof}
	
	The following proposition pertains to the case where \sigst{} is the {\em copy mapping}, i.e. for each relation $R \in \bf S$ there is a corresponding relation $R'$ of the same arity in $\bf T$, and \sigst{} contains only the tgd $R({\bf x}) \to R'({\bf x})$ for each $R \in \bf S$.    We say an instance $J$ is the {\em copy} of an instance $I$ if $J$ is the canonical universal solution for $I$ w.r.t. the copy mapping (so it contains the same facts up to renaming of relations).
	\begin{proposition} \label{prp:xr-certain-copy-is-cqa}
		Let \mapping{} be a \mappingtypetwo{\gav}{\egd} schema mapping where \sigst{} is the copy mapping, and let $q$ be a conjunctive query over the target schema $\bf T$.  Then for every instance $I$, it holds that $\xrcertain(q,I,\sm{M}) = \text{subset-}\cqa(q,J,\sigt)$, where $J$ is the copy of $I$.
	\end{proposition}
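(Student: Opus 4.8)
The plan is to use the copy operation to set up an inclusion-preserving bijection between the subinstances of $I$ and those of $J=\mathrm{copy}(I)$, to show that under this bijection the source-repairs of $I$ w.r.t.\ $\sm{M}$ correspond exactly to the subset-repairs of $J$ w.r.t.\ \sigt{}, and then to deduce equality of the two answer sets by monotonicity of conjunctive queries.

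First I would establish that solvability of a subinstance is decided entirely by its copy: for every $I'\subseteq I$, the source instance $I'$ has a solution w.r.t.\ $\sm{M}$ if and only if $\mathrm{copy}(I')\models\sigt{}$, and in that case $\mathrm{copy}(I')$ is itself a solution for $I'$. The ``if'' direction is immediate, since $\mathrm{copy}(I')$ satisfies \sigst{} by construction. For ``only if'', the key point is that source instances, hence their copies, contain no nulls: if some egd $e\in\sigt{}$ is violated in $\mathrm{copy}(I')$ by a homomorphism $h$ of its body, then the two elements $h$ tries to equate are distinct constants, so $e$ is violated by $h$ in every instance containing $\mathrm{copy}(I')$; and every solution for $I'$ must contain $\mathrm{copy}(I')$ because of the copy tgds in \sigst{}. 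Hence $\mathrm{copy}(I')\not\models\sigt{}$ implies $I'$ has no solution at all. (Equivalently: chasing $I'$ with $\sm{M}$ first produces $\mathrm{copy}(I')$ and then either halts, when $\mathrm{copy}(I')\models\sigt{}$, or fails on two distinct constants.)

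Next, since $J=\mathrm{copy}(I)$ uses a distinct relation symbol for each source relation and consists exactly of the copied facts of $I$, the map $I'\mapsto\mathrm{copy}(I')$ is an inclusion-preserving bijection from the subinstances of $I$ onto the subinstances of $J$, with inverse $K\mapsto\{R(\mathbf{a}):R'(\mathbf{a})\in K\}$. Combined with the previous step, the maximal subinstances $I'$ of $I$ having a solution w.r.t.\ $\sm{M}$ --- i.e., the source-repairs of $I$ --- correspond under this bijection precisely to the maximal subinstances of $J$ satisfying \sigt{}, i.e., the subset-repairs of $J$ w.r.t.\ \sigt{}. Finally, for any source-repair $I'$ of $I$ the pair $(I',\mathrm{copy}(I'))$ is an \xrsolution{} for $I$ (using maximality of $I'$ and that $\mathrm{copy}(I')$ is a solution for $I'$), so $q(\mathrm{copy}(I'))$ occurs among the sets intersected in $\xrcertain(q,I,\sm{M})$; conversely, every \xrsolution{} $(I',J')$ has $I'$ a source-repair and, by the copy tgds, $J'\supseteq\mathrm{copy}(I')$, whence $q(J')\supseteq q(\mathrm{copy}(I'))$ by monotonicity of $q$. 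Intersecting both ways gives $\xrcertain(q,I,\sm{M})=\bigcap\{q(\mathrm{copy}(I')):I'\text{ a source-repair of }I\}$, and applying the bijection turns this into $\bigcap\{q(K):K\text{ a subset-repair of }J\text{ w.r.t.\ }\sigt{}\}=\text{subset-}\cqa(q,J,\sigt{})$. Note that $q$ is evaluated only on $\mathbf{T}$-instances throughout, as both $\mathrm{copy}(I')$ and the subset-repairs of $J$ are subinstances of $J$.

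The one step with real content is the first one: the observation that, copies of source instances being null-free, adding target facts can never cure an egd violation, so the solvability of a subinstance $I'$ is already decided at the level of $\mathrm{copy}(I')$. Everything else --- the bijection given by the copy mapping and the passage from repairs to query answers --- is bookkeeping, using only monotonicity of conjunctive queries.
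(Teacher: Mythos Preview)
Your proof is correct and follows essentially the same approach as the paper's: both establish that source-repairs of $I$ correspond bijectively (via the copy operation) to subset-repairs of $J$ w.r.t.\ \sigt{}, and then reduce the intersection defining $\xrcertain$ to the intersection over the copies $\mathrm{copy}(I')$. The only minor difference is in justifying that last reduction: the paper observes that $\mathrm{copy}(I')$ is a \emph{universal} solution for $I'$ (so $\certain(q,I',\sm{M})=q(\mathrm{copy}(I'))$), whereas you argue directly via monotonicity of $q$ that any other solution $J'\supseteq\mathrm{copy}(I')$ satisfies $q(J')\supseteq q(\mathrm{copy}(I'))$; these are equivalent here and your route is arguably the more elementary one.
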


	\begin{proof}
		Since \sigst{} specifies the copy mapping and \sigt{} contains only egds, for every source repair $I'$ there is an \xrsolution $(I',J')$ where $J'$ is the copy of $I'$.  Furthermore, $J'$ is a universal solution for $I'$ w.r.t. \sm{M}, so we can write $\xrcertain(q,I,\sm{M}) = \bigcap \{ q(J') ~|~ (I',J')$ is an \xrsolution{} for $I$ w.r.t. \sm{M} and $J'$ is the copy of $I' \}$.  Therefore $\xrcertain(q,I,\sm{M}) = \bigcap \{ q(J') ~|~ J' $ is the copy of a maximal subset of $I$ such that $J' \models \sigt\}$.  Let $J$ be the copy of $I$.  Then $\xrcertain(q,I,\sm{M}) = \bigcap \{ q(J') ~|~ J'$ is a maximal subset of $J$ such that $J' \models \sigt \}$, which is precisely \text{subset-}$\cqa(q,J,\sigt)$. 
	\end{proof}

Let  \mapping{} be a schema mapping and $q$ a Boolean query over  $\bf T$.
We consider two natural decision problems in the exchange-repair framework, and give upper bounds for their computational complexity.

\smallskip

\begin{itemize}

\item \textbf{Source-Repair Checking}:
Given a source instance $I$ and a source instance $I'\subseteq I$, is $I'$ a source-repair of $I$ w.r.t.\ ${\cal M}$?

\item \textbf{\xrcertain{} Query Answering}:
Given a source instance $I$, does $\xrcertain(q,I,\sm{M})$ evaluate to true? In other words, is $q(J')$  true on every target instance $J'$ for which there is a source instance $I'$ such that $(I',J')$ is an \xrsolution{} for $I$?
\end{itemize}
	
\begin{theorem} \label{xr-certain-compl-thm}
Let ${\cal M}$ be a  \mappingtypethree{\glav}{\waglav}{\egd} schema mapping.
\begin{enumerate}
\item The source-repair checking problem is in \ptime.
\item Let $q$ be a union of conjunctive queries over the target schema. The \xrcertain{} query answering problem for $q$ is in \coNP.
    \end{enumerate}
    Moreover, there is a schema mapping specified by copy s-t tgds and target egds, and a Boolean conjunctive query  for which the \xrcertain{} query answering problem is \coNP-complete. Thus, the data complexity of the \xrcertain{} answers for Boolean conjunctive queries is \coNP-complete.
    \end{theorem}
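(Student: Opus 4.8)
The plan is to obtain the two upper bounds by direct algorithms that exploit weak acyclicity and the monotonicity of both the chase and unions of conjunctive queries, and then to obtain the matching lower bound by transferring a classical hardness result for consistent query answering through Proposition~\ref{prp:xr-certain-copy-is-cqa}. For part~(1), I would first recall that, since \sm{M} is a \mappingtypethree{\glav}{\waglav}{\egd} schema mapping, one can decide in polynomial time whether a given source instance has a solution: chase with \sigst{} to obtain a target instance and then continue chasing with \sigt{}; by~\cite{DBLP:journals/tcs/FaginKMP05} this halts after polynomially many steps and fails precisely when no solution exists. To decide whether $I'\subseteq I$ is a source-repair, I would (a) check that $I'$ has a solution and (b) check that, for every fact $f\in I\setminus I'$, the instance $I'\cup\{f\}$ has no solution. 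Correctness of~(b) rests on Lemma~\ref{subset-trickledown-lemma}: having a solution is inherited by sub-instances, so if no single-fact extension of $I'$ inside $I$ has a solution, then no instance strictly between $I'$ and $I$ has one and $I'$ is maximal; conversely, if some $I'\cup\{f\}$ has a solution, then $I'$ is not maximal. This amounts to $O(|I|)$ polynomial-time chase computations, so source-repair checking is in \ptime.

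For part~(2), I would use the identity $\xrcertain(q,I,\sm{M})=\bigcap\{\certain(q,I',\sm{M}) : I'\text{ is a source-repair of }I\}$, which is immediate from the definition of \xrcertain{} (for a fixed source-repair $I'$, the intersection of $q(J')$ over all solutions $J'$ of $I'$ is exactly $\certain(q,I',\sm{M})$), together with the fact that, since \sm{M} is weakly acyclic and source-repairs have solutions, $\certain(q,I',\sm{M})=q\!\!\downarrow(chase(I',\sm{M}))$. Hence a tuple $\bar a$ of constants fails to be an \xrcertain{} answer iff there is a source-repair $I'$ with $\bar a\notin q(chase(I',\sm{M}))$, and this condition is witnessed in \NP: guess $I'\subseteq I$, verify in polynomial time by part~(1) that $I'$ is a source-repair, compute $chase(I',\sm{M})$ in polynomial time, and evaluate the fixed union of conjunctive queries $q$ on it in polynomial time. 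Soundness of the guessing step uses that $q$ is preserved under homomorphisms and that $chase(I',\sm{M})$ is a universal solution for $I'$. Consequently the \xrcertain{} query answering problem for unions of conjunctive queries is in \coNP.

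For the lower bound, by Proposition~\ref{prp:xr-certain-copy-is-cqa} it suffices to give a set of egds and a Boolean conjunctive query for which subset-consistent query answering is \coNP-hard, since the copy mapping equipped with those egds then inherits the hardness, while part~(2) supplies the \coNP{} upper bound. Such examples are classical (see~\cite{DBLP:series/synthesis/2011Bertossi}); for concreteness I would use a reduction from the complement of \textsc{3-Sat}. Given a $3$-CNF formula $\varphi$ over variables $v_1,\dots,v_n$ with clauses $C_1,\dots,C_m$, use a binary relation $\mathrm{Val}$ with the key egd $\mathrm{Val}(v,b)\wedge\mathrm{Val}(v,b')\to b=b'$ and three unconstrained ternary relations $\mathrm{Fls}_1,\mathrm{Fls}_2,\mathrm{Fls}_3$; the database contains $\mathrm{Val}(v_i,0)$ and $\mathrm{Val}(v_i,1)$ for each $i$, and for each clause $C_j$ whose $p$-th literal is on a variable $v$ the fact $\mathrm{Fls}_p(C_j,v,\beta)$, where $\beta$ is the value of $v$ that falsifies that literal. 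Then every subset-repair retains all $\mathrm{Fls}_p$-facts and exactly one of $\mathrm{Val}(v_i,0),\mathrm{Val}(v_i,1)$ per $i$, so the subset-repairs are in bijection with truth assignments, and for $q=\exists c\exists v_1 v_2 v_3\exists b_1 b_2 b_3\big(\bigwedge_{p=1}^{3}\mathrm{Fls}_p(c,v_p,b_p)\wedge\mathrm{Val}(v_p,b_p)\big)$ one checks that $q$ holds in the repair for an assignment $\sigma$ precisely when $\sigma$ falsifies some clause; hence $q$ is true in every repair iff $\varphi$ is unsatisfiable. The reduction is polynomial and $q$ is fixed, so the \xrcertain{} answering problem for this copy mapping and this Boolean conjunctive query is \coNP-complete.

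The step I expect to need the most care is the hardness reduction: one must keep the query fixed, guarantee that the subset-repairs of the constructed database are in \emph{exact} bijection with truth assignments --- which is why all $\mathrm{Fls}_p$-facts survive every repair and why the three literal positions are recorded in three separate relations, so that $q$ cannot be satisfied spuriously by reusing a single falsified literal three times --- and only then invoke Proposition~\ref{prp:xr-certain-copy-is-cqa}. The upper bounds are comparatively routine; their one delicate point is the observation, furnished by Lemma~\ref{subset-trickledown-lemma}, that maximality of a candidate source-repair can be certified by examining only its single-fact extensions inside $I$.
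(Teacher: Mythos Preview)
Your upper bounds for parts (1) and (2) are correct and match the paper's proof essentially word for word: the paper also checks single-fact extensions (invoking Lemma~\ref{subset-trickledown-lemma}) for source-repair checking, and for the \coNP{} membership guesses a subset, verifies it is a source-repair, chases, and evaluates $q$.

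For the lower bound, your argument is correct but takes a different route from the paper. The paper does not construct a reduction from scratch; instead it exhibits the specific schema $\{P,Q\}$, copy s-t tgds to $\{P',Q'\}$, the two key egds $P'(x,y)\wedge P'(x,y')\to y=y'$ and $Q'(x,y)\wedge Q'(x,y')\to y=y'$, and the query $\exists x\exists y\exists x'\, P'(x,y)\wedge Q'(x',y)$, then simply cites \cite{DBLP:journals/iandc/ChomickiM05,DBLP:journals/jcss/FuxmanM07} for the fact that $\oplus$-\cqa{} for this query and these constraints is \coNP-hard. Your explicit \textsc{3-Sat} reduction works and is self-contained, at the cost of a larger fixed schema (four relations rather than two) and a longer argument; the paper's approach is shorter but relies on outside results. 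Both approaches bridge from \cqa{} to \xrcertain{} via the copy mapping in the same way, so the conceptual content is the same.
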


    \begin{proof}
    For the first part, the following is a polynomial time algorithm to check if $I' \subseteq I$ is a source repair of $I$ w.r.t. \sm{M}:
    	\begin{quote}
    		Use the chase procedure to check that $I'$ has a solution w.r.t. \sm{M}~\cite{DBLP:journals/tcs/FaginKMP05}.  For every tuple $t \in I \setminus I'$, use the chase procedure to check that $I' \cup \{ t \}$ does {\em not} have a solution w.r.t. \sm{M}.
    	\end{quote}
    The first step ensures that $I'$ has a solution, and by Lemma~\ref{subset-trickledown-lemma}, the second step is sufficient to ensure that $I'$ is a maximal such subset of $I$.  Since \sm{M} is weakly acyclic, this algorithm runs in time which is polynomial in the size of $I$.
    
	For the second part, the following is an algorithm in \NP to check if $\xrcertain(q,I,\sm{M})$ is false:
		\begin{quote}
			Let $I'$ be an arbitrary subset of $I$.  Using the algorithm from the first part, check that $I'$ is a source repair of $I$.  If so, check that $q(chase(I')) = false$.
		\end{quote}
	For the matching lower bound, consider the schema mapping \mapping{} and target conjunctive query $q$, where ${\bf S} =\{\rel P(x,y), \rel Q(x,y)\}$, ${\bf T} = \{\rel P'(x,y),\rel Q'(x,y)\}$, $\sigst = \{ \rel P(x,y) \to \rel P'(x,y), \rel Q(x,y) \to \rel Q'(x,y) \}$, and $\sigt = \{ \rel P'(x,y) \wedge \rel P'(x,y') \to y=y', \rel Q'(x,y) \wedge \rel Q'(x,y') \to y=y' \}$, and $q = \exists x \exists y \exists x' \rel P'(x,y) \wedge \rel Q'(x',y)$.  Note that \sigst{} is the {\em copy} mapping, therefore, we have $\xrcertain(q,I,\sm{M}) = \oplus\text{-}\cqa(q,J,\sigt)$ where $J$ is merely a copy of $I$.  For the given target query and target constraints, the latter is known to be \coNP-hard in data complexity \cite{DBLP:journals/iandc/ChomickiM05,DBLP:journals/jcss/FuxmanM07}. 
    \end{proof}

Theorem \ref{xr-certain-compl-thm} implies that the algorithmic properties of exchange-repairs are quite different from those of $\oplus$-repairs. Indeed, as shown in \cite{DBLP:conf/icdt/AfratiK09,DBLP:conf/icdt/CateFK12}, for
  \mappingtypethree{\glav}{\waglav}{\egd} schema mappings,  the $\oplus$-repair-checking problem is in \coNP (and can be \coNP-complete), and the data complexity of the consistent answers of Boolean conjunctive queries is
  $\Pi_2^p$-complete~\cite{DBLP:journals/iandc/ChomickiM05}.  This drop in complexity can be directly attributed to Theorem~\ref{XR-subset-thm}.

\subsection{Related Work}\label{sec:related}
The work reported here builds directly on the work of many others, in particular  the foundational work on database repairs and consistent query answering by Arenas, Bertossi, and Chomicki~\cite{DBLP:conf/pods/ArenasBC99}, and on data exchange and certain query answering by Fagin et al.~\cite{DBLP:journals/tcs/FaginKMP05}.

As mentioned earlier, the main conceptual contribution of this paper is the introduction of an inconsistency-tolerant semantics for data exchange, called exchange-repairs, in which we consider repairs to the source instance.
Inconsistency-tolerant semantics have been studied in several different areas of database management, including data integration and ontology-based data access (OBDA). The common motivation for inconsistency-tolerant semantics  is to give non-trivial and, in fact, meaningful semantics to query answering. We now discuss the relationship between the XR-certain answers and the inconsistency-tolerant semantics of queries in these different contexts.

\subsection{Connections with data integration} \label{data-integration-connections}


In \cite{DBLP:conf/ijcai/CaliLR03} and \cite{DBLP:conf/krdb/LemboLR02}, the authors introduce and study the notion of \emph{loosely-sound} semantics for queries in a data integration setting. There are two main differences between that setting and ours. To begin with, they consider schema mappings in which the schema mapping consists of GAV (global-as-view) constraints between the source (local)  schema and the target (global) schema, and also of key constrains and inclusion dependencies on the target schema; in contrast, we consider richer constraint languages, namely, GLAV (global-and-local-as-view) constraints between source and target, and also target egds and target tgds.  More importantly perhaps, the loosely-sound semantics are, in general, different from the XR-certain answers semantics. Specifically, given a source instance $I$, the loosely-sound semantics are obtained by first computing the result $J$ of the chase of $I$ with the GAV constraints between the source and the target, and then considering as ``repairs" all instances $J'$ that satisfy the target constraints and are inclusion maximal in their intersection with $J$. If all target constraints are egds (in particular, if all target constraints are key constraints), then it is easy to show that, for target conjunctive queries, the loosely-sound semantics  coincide with the consistent answers of queries with respect to subset repairs of $J$. Thus, in this case, the loosely-sound semantics give the same unsatisfactory answers as the materialize-then-repair approach seen in Figure~\ref{fig:exchange_then_repair}.
 Concretely, this approach yields the instance $J'$ in Figure~\ref{fig:exchange_then_repair} as one possible ``repair" of the instance $J$ in Figure 1, and includes the undesirable answers $({\rm peter},{\rm portman})$ and $({\rm peter},{\rm lumbergh})$ to the query ${\tt boss}({\rm peter},b)$.
 Thus, this same example shows that the loosely-sound semantics are different from the XR-certain semantics.

In~\cite{DBLP:conf/pods/CaliLR03}, Cal\`{\i}, Lembo, and Rosati consider the notions of \emph{loosely-sound}, \emph{loosely-complete}, and \emph{loosely-exact} semantics of queries on an inconsistent database. We note that the loosely-exact semantics coincide with the consistent-answer semantics with respect to symmetric-difference-repairs of the inconsistent database.

\subsection{Connections with ontology-based data access} \label{obda-connections}
Ontology-based data access (OBDA), originally introduced in \cite{CGLLPR07}, is a framework for answering queries over knowledge bases.  In that framework, a knowledge base over a schema $\bf T$ is a pair $(D,\Sigma)$, where $D$ is a $\bf T$-instance and $\Sigma$ is a set of constraints expressed in some logical formalism over $\bf T$. The instance $D$ represents extensional knowledge given by the facts of $D$, and is called the ABox. The set $\Sigma$ of constraints represents intensional knowledge, and is called the TBox.  In most scenarios, the schema $\bf T$ consists of unary relation symbols, called  \emph{concepts}, and of binary relation symbols, called  \emph{roles}. Moreover, $\Sigma$ typically consists  of sentences in some description logic.
An inconsistency-tolerant  semantics in the context of  OBDA was first investigated in
\cite{DBLP:conf/rr/LemboR07}; this semantics is based on the notion of \emph{AR-repairs} (ABox-repairs) and has become known as \emph{AR-semantics}.  Subsequent investigations of  AR-semantics were carried out in a number of papers,  including (in chronological order) \cite{LemboLRRS10,Rosati11,Bienvenu12,BienvenuR13,BienvenuBG14,LMPS15}. These papers have analyzed  the computational complexity of consistent query answering in OBDA and have also considered several variants of the AR-semantics in the OBDA framework.

Data exchange and OBDA are different frameworks that aim to formalize different aspects of data inter-operability.  Ιn data exchange there are two schemas, the source schema and the target schema, with no restrictions on the type of relation symbols they contain,  while in OBDA there is a single schema that typically contains only unary and binary relation symbols. Moreover, as seen in the preceding discussion, the constraints  typically used in data exchange are quite different from those typically used in OBDA.
One notable exception to this is the work reported in \cite{LMPS15}, where the OBDA framework studied allows for tuple-generating dependencies (it also allows for \emph{negative constraints}, but not for equality-generating dependencies).
In spite of these differences, it turns out that there are close connections between data exchange and OBDA.  In what follows, we spell out these connections in detail and show that, as regards consistent query answering, each of these two frameworks can simulate the other.

\eat{
 Note that the semantics studied in \cite{DBLP:conf/pods/CaliLR03} and in \cite{DBLP:conf/rr/LemboR07}
   are  in a setting in which there is no schema mapping, and therefore no distinction between source and target schemas.  This distinction, however, is of the essence in the context  of data exchange and, hence,
    it is crucial to our definition of exchange-repairs and to the notion of XR-certain answers.
Nonetheless, it is possible to reduce
the problem of query answering under loosely-sound semantics in ontology-based-data-access  to
 a special case of the problem of
 computing the XR-certain answers in  data exchange. For this, one first creates a source schema that is
 a copy of  the schema of the ABox and then
 introduces copy constraints between the source schema and the schema of the ABox. It is then easy to see that there is a one-to-one correspondence between ``repairs" of
the ABox under loosely-sound semantics and  exchange-repair solutions (as defined in the next section); in turn, this implies that the answers to queries under loosely-sound semantics in ontology-based data access coincide with the XR-certain answers of queries in this special case of data exchange.
}

We first introduce some basic concepts and terminology for OBDA; for the most part, we  follow  \cite{LMPS15}.

Let $\bf T$ be a schema and let $(D,\Sigma)$ be a knowledge base over $\bf T$.
A \emph{model} of  $(D,\Sigma)$ is a $\bf T$-instance $J$ such that $D\subseteq J$ and $J \models \Sigma$.
    We write $\mds(D,\Sigma)$ for the set of all models of $(D,\Sigma)$.

 An \emph{AR-repair} of $(D,\Sigma)$  is a ${\bf T}$-instance  $D'$ with the following properties:
 (i)  $D'\subseteq   D$; (ii)   $\mds(D',\Sigma) \not = \emptyset$; (iii)
   $D'$ is an inclusion maximal sub-instance of $D$ having the second property, i.e., there is no $\bf T$-instance $D''$ such that $D' \subset D''\subseteq D$ and $\mds(D'',\Sigma) \not = \emptyset$.
We write $\drep(D,\Sigma)$ for the set of all AR-repairs of $(D,\Sigma)$.

Next, we introduce the notion of consistent query answering in the context of OBDA.
Let $q$ be a Boolean query over the schema $\bf T$. We say that $q$ is \emph{entailed by $(D,\Sigma)$ under AR-semantics}
 if for every AR-repair $D'$ in $\drep(D,\Sigma)$  and every
$\bf T$-instance $J\in \mds(D',\Sigma)$, we have that $J\models q$.
If $q$ is a non-Boolean query of arity $k$ over the schema $\bf T$ and $\bf a$ is a $k$-tuple of constants, then we say that $q({\bf a})$ is \emph{entailed by $(D,\Sigma)$ under AR-semantics}
if $q({\bf a})$ is entailed when viewed as a Boolean query; this  means that for every AR-repair $D'$ in $\drep(D,\Sigma)$  and every
$\bf T$-instance $J\in \mds(D',\Sigma)$, we have that   ${\bf a}$ belongs to the result $q(J)$ of evaluating $q$ on $J$.
We write
 $\ARcertain(q,D,\Sigma)$ to denote the set of all tuples $\bf a$ such that $q({\bf a})$ is entailed by $(D,\Sigma)$ under AR-semantics.
  By unraveling the definitions, we see that
  \begin{multline*}
  \ARcertain(q,D,\Sigma)= \\
   \bigcap\{q(J): \mbox{$J \in \mds(D',\Sigma)$ and
     $D' \in \drep(D,\Sigma)$}\}.
   \end{multline*}
We are now ready to establish the precise connections between the exchange-repairs framework and the OBDA framework.

\subsubsection{From OBDA to exchange repairs}
Assume that  $(D,\Sigma)$ is a knowledge base over a schema $\bf T$.
Let ${\bf S^*}$ be the schema of the relation symbols occurring in $D$; note that ${\bf S^*}$ is a (possibly proper) subschema of $\bf T$.  Let $\bf S$ be a \emph{copy} of ${\bf S^*}$, that is, for every relation symbol $R^*$ in ${\bf S^*}$, there is a relation symbol $R$ in $\bf S$ of the same arity.
If $K$ is an ${\bf S^*}$-instance, we will write $K_{\bf S}$ to denote $\bf S$-copy of $K$, i.e., the ${\bf S}$-instance obtained from $K$ by renaming the facts of $K$ using the corresponding relation symbols in $\bf S$. Conversely, if $I$ is an $\bf S$-instance, then we will write $I_{\bf S^*}$ to denote the ${\bf S^*}$-copy of $I$.

The next proposition tells that the OBDA framework can be simulated by the exchange-repairs framework.
The proof is  straightforward, and it is omitted.
\begin{proposition} \label{OBDA-to XR-prop}
Consider the schema mapping ${\mathcal M} = ({\bf S},{\bf T}, \sigst, \Sigma)$, where $\sigst$ is the set of
 copy s-t tgds from ${\bf S}$ to ${\bf S^*}$.
The following statements are true.
\begin{enumerate}
\item  For every ${\bf T}$-instance $D' \subseteq D$ and every ${\bf T}$-instance J, we have that
$D' \subseteq J$ and $J \models \Sigma$ if and only if $J$ is a solution for $D'_{\bf S}$ w.r.t.\ ${\mathcal M}$.
\item  For every ${\bf S}$-instance $I' \subseteq D_{\bf S}$  and every $\bf T$-instance $J$, we have that
$J$ is a solution for $I'$ w.r.t.\ ${\mathcal  M}$ if and only if $I'_{\bf S^*} \subseteq J$ and $J \models \Sigma$.

\item  There is a 1-1 correspondence between the AR-repairs of $(D,\Sigma)$ and the subset source repairs of $D_{\bf S}$ w.r.t.\ ${\mathcal  M}$.
     In fact, they are the same up to renaming relation symbols in ${\bf S^*}$ by their copies in ${\bf S}$.

\item   For every query $q$ over $\bf T$, we have that $\ARcertain(q,D,\Sigma)  =  \xrcertain(q,D_{\bf S},{\mathcal M})$.
\end{enumerate}
\end{proposition}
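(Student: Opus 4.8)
The plan is to unwind definitions; the only real content is the observation that a copy s-t tgd forces a target atom to be present precisely when the corresponding source atom is, so that ``being a solution w.r.t.\ $\sm{M}$'' reinterprets the ABox-containment condition $D'\subseteq J$ over the copied schema. First I would record the elementary fact that the renaming $K\mapsto K_{\bf S}$ is a bijection between ${\bf S^*}$-instances and ${\bf S}$-instances, with inverse $I\mapsto I_{\bf S^*}$, and that it both preserves and reflects the sub-instance relation; since $D$, and hence every $D'\subseteq D$, mentions only relation symbols of ${\bf S^*}$, this renaming applies to all the instances that occur in the statement.

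For the first item, I would expand ``$J$ is a solution for $D'_{\bf S}$ w.r.t.\ $\sm{M}$'' into its two conjuncts: $(D'_{\bf S},J)\models\sigst$ and $J\models\Sigma$. The second conjunct is literally the right-hand condition $J\models\Sigma$. For the first conjunct, $\sigst$ consists of the copy tgds $R({\bf x})\to R^*({\bf x})$, so $(D'_{\bf S},J)\models\sigst$ says exactly that every fact $R({\bf a})\in D'_{\bf S}$ has $R^*({\bf a})\in J$, which, unravelling the renaming, is the same as $D'\subseteq J$. This gives the first item. The second item is then immediate: given $I'\subseteq D_{\bf S}$, set $D'=I'_{\bf S^*}\subseteq D$ so that $D'_{\bf S}=I'$, and apply the first item.

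For the third item I would use the map $D'\mapsto D'_{\bf S}$ on sub-instances of $D$: it is a bijection onto the sub-instances of $D_{\bf S}$ that preserves and reflects inclusions. By the first (equivalently, the second) item, $D'_{\bf S}$ has a solution w.r.t.\ $\sm{M}$ if and only if there is a ${\bf T}$-instance $J$ with $D'\subseteq J$ and $J\models\Sigma$, i.e.\ if and only if $\mds(D',\Sigma)\neq\emptyset$. Hence $D'$ satisfies clauses (i) and (ii) of the definition of an AR-repair exactly when $D'_{\bf S}$ is a sub-instance of $D_{\bf S}$ admitting a solution, and, the map being inclusion-preserving and inclusion-reflecting, $D'$ is inclusion-maximal with this property precisely when $D'_{\bf S}$ is; since (by the Remark following Theorem~\ref{XR-subset-thm}) an \xrsolution{} is taken to be a subset-\xrsolution, a source-repair of $D_{\bf S}$ is exactly such a maximal $D'_{\bf S}$, which yields the stated correspondence, realized by the renaming itself. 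For the fourth item I would unravel both intersections. On the one side,
\[
\ARcertain(q,D,\Sigma)=\bigcap\{q(J): D'\in\drep(D,\Sigma),\ J\in\mds(D',\Sigma)\}.
\]
On the other side, $\xrcertain(q,D_{\bf S},\sm{M})$ is the intersection of $q(J')$ over all {\xrsolution}s $(I',J')$ of $D_{\bf S}$; by the third item $I'$ ranges exactly over the $D'_{\bf S}$ with $D'\in\drep(D,\Sigma)$, and by the second item, for $I'=D'_{\bf S}$ the solutions $J'$ for $I'$ are exactly the members of $\mds(D',\Sigma)$. As $q$ is a query over ${\bf T}$ evaluated on the ${\bf T}$-instance $J'$ on both sides, the two families of sets being intersected coincide, hence so do the intersections.

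There is no genuinely hard step; the argument is bookkeeping. The point that deserves care is the scope of relation symbols: because the ABox $D$, and therefore every candidate repair $D'$, uses only the symbols of ${\bf S^*}$, the copy tgds capture exactly the containment $D'\subseteq J$ rather than something weaker, and this is what makes the first item an equivalence and not merely an implication. A second, minor, point is to keep the two notions of minimality aligned --- an AR-repair and a subset source-repair are both defined by inclusion-maximality --- so that no appeal to Theorem~\ref{XR-subset-thm} nor any restriction on the syntactic form of $\Sigma$ is needed, and the argument goes through for an arbitrary TBox $\Sigma$.
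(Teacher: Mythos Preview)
Your proposal is correct. The paper itself omits the proof entirely, stating only that ``the proof is straightforward, and it is omitted,'' so there is no argument in the paper to compare against; your write-up supplies precisely the kind of definition-unwinding the authors evidently had in mind, and your observation that the copy tgds make $(D'_{\bf S},J)\models\sigst$ literally equivalent to $D'\subseteq J$ is the one substantive point.
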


\subsubsection{From exchange repairs to OBDA}

Assume that ${\mathcal M} = ({\bf S},{\bf T}, \sigst, \sigt)$ is a schema mappings  in which $\sigst$ is a set of s-t tgds and $\sigt$ is a set of  arbitrary constraints over $\bf T$.  Recall that the schemas ${\bf S}$ and ${\bf T}$ have no relation symbols in common.
The next proposition tells that the exchange-repairs framework can be simulated by the OBDA framework.
Since $\Sigma_t$ are arbitrary constraints, Theorem~\ref{XR-subset-thm} does not necessarily apply, so we explicitly focus on {\em subset} source repairs.

\begin{proposition} \label{XR-to-OBDA-prop}
Let $I$ be a source instance. Consider the knowledge base $(I,\sigst \cup \sigt)$ with $I$ as the ABox and the union $\sigst\cup \sigt$ over the schema ${\bf S} \cup {\bf T}$ as the TBox. The following statements are true.
\begin{enumerate}
 \item For every source instance $I'$, we have that
$I'$ is a subset source repair of $I$ w.r.t.\ ${\mathcal M}$  if and only if $I'$ is an AR-repair of $(I, \sigst\cup \sigt)$.
\item For every query $q$ over $\bf T$, we have that
$\xrcertain(q,I,{\mathcal M})   = \ARcertain(q,I,  \sigst \cup \sigt)$.
\end{enumerate}
\end{proposition}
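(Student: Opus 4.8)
The plan is to reduce both parts to a single bridging observation relating the solutions of $I'$ w.r.t.\ $\mathcal{M}$ to the models of the knowledge base $(I',\sigst \cup \sigt)$, exploiting that $\mathbf{S}$ and $\mathbf{T}$ are disjoint, that each s-t tgd has its body over $\mathbf{S}$ and its head over $\mathbf{T}$, and that $\sigt$ mentions only $\mathbf{T}$. Concretely, for every source instance $I'\subseteq I$ I would establish that $\{\, J|_{\mathbf{T}} : J\in\mds(I',\sigst \cup \sigt)\,\}$ is exactly the set of solutions for $I'$ w.r.t.\ $\mathcal{M}$; in particular $\mds(I',\sigst \cup \sigt)\neq\emptyset$ if and only if $I'$ has a solution w.r.t.\ $\mathcal{M}$.

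To prove this bridging observation, given a solution $J'$ for $I'$ I form the $(\mathbf{S}\cup\mathbf{T})$-instance $J:=I'\cup J'$: then $I'\subseteq J$, $J|_{\mathbf{S}}=I'$, $J|_{\mathbf{T}}=J'$, and $J\models\sigst \cup \sigt$, because every s-t tgd body can be matched only inside $J|_{\mathbf{S}}=I'$ and its head is then satisfied in $J|_{\mathbf{T}}=J'$, while $\sigt$ is evaluated on $J|_{\mathbf{T}}=J'\models\sigt$; hence $J\in\mds(I',\sigst \cup \sigt)$ with $J|_{\mathbf{T}}=J'$. Conversely, given $J\in\mds(I',\sigst \cup \sigt)$ I ``trim'' it to $\widehat J:=I'\cup J|_{\mathbf{T}}$ and check that $\widehat J$ is still a model of $(I',\sigst \cup \sigt)$: an s-t tgd body matched in $\widehat J$ is matched inside $\widehat J|_{\mathbf{S}}=I'\subseteq J$, so, as $J\models\sigst$, its head is satisfied in $J|_{\mathbf{T}}=\widehat J|_{\mathbf{T}}$, and $\sigt$ holds in $\widehat J|_{\mathbf{T}}=J|_{\mathbf{T}}$. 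Since now $\widehat J|_{\mathbf{S}}=I'$, the pair $(I',J|_{\mathbf{T}})$ satisfies $\mathcal{M}$, i.e.\ $J|_{\mathbf{T}}$ is a solution for $I'$. The trimming step is the only subtle point: a model of the knowledge base may legitimately carry $\mathbf{S}$-facts outside $I'$, but since no constraint ever derives an $\mathbf{S}$-fact, these can be discarded without destroying satisfaction --- this is the same monotonicity phenomenon already isolated in Lemma~\ref{subset-trickledown-lemma}.

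Part (1) then follows at once: after replacing ``$I'$ has a solution w.r.t.\ $\mathcal{M}$'' by ``$\mds(I',\sigst \cup \sigt)\neq\emptyset$'', the definition of a subset source repair of $I$ becomes verbatim the definition of an AR-repair of $(I,\sigst \cup \sigt)$, since both require $I'\subseteq I$, the existence property, and inclusion-maximality among sub-instances of $I$ with that property; moreover any $D'\subseteq I$ is automatically a source instance because $I$ contains only $\mathbf{S}$-facts. For part (2), observe that since $q$ is over $\mathbf{T}$ we have $q(J)=q(J|_{\mathbf{T}})$ for every $(\mathbf{S}\cup\mathbf{T})$-instance $J$. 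Fixing a source repair $I'$ of $I$ --- equivalently, by part (1), an AR-repair $D'=I'$ of $(I,\sigst \cup \sigt)$ --- the set-level bridging observation gives $\{q(J):J\in\mds(I',\sigst \cup \sigt)\}=\{q(J'):J'\text{ a solution for }I'\}$. Hence the family $\{q(J'):(I',J')\text{ is an \xrsolution\ for }I\}$ from the definition of $\xrcertain(q,I,\mathcal{M})$ and the family $\{q(J):D'\in\drep(I,\sigst \cup \sigt),\,J\in\mds(D',\sigst \cup \sigt)\}$ from the definition of $\ARcertain(q,I,\sigst \cup \sigt)$ coincide, so their intersections are equal.

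I do not expect a genuine obstacle here: beyond the trimming step, the argument is bookkeeping about restrictions to the disjoint sub-schemas $\mathbf{S}$ and $\mathbf{T}$. It is worth remarking that, unlike the proof of Theorem~\ref{XR-subset-thm}, this argument uses neither weak acyclicity nor any egd assumption on $\sigt$, which is precisely why the proposition may be stated for arbitrary target constraints --- and why it explicitly concerns \emph{subset} source repairs, Theorem~\ref{XR-subset-thm} being unavailable in this generality.
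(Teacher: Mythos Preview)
Your proof is correct and follows essentially the same approach as the paper: both rely on the two constructions $J':=I'\cup J$ (solution to model) and $J\mapsto J|_{\mathbf T}$ (model to solution), with the ``trimming'' step justified by the fact that s-t tgds never derive $\mathbf S$-facts. The only difference is organizational: you isolate the correspondence $\{J|_{\mathbf T}:J\in\mds(I',\sigst\cup\sigt)\}=\{\text{solutions for }I'\}$ as a single bridging lemma and then read off both parts, whereas the paper re-performs the model/solution translation inside each direction of the maximality argument for part~(1) and leaves part~(2) as ``easy to show.''
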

\begin{proof}

Assume first that $I'$ is  a subset source repair of $I$ w.r.t.\ ${\mathcal M}$. We have to show that $I'$ is an AR-repair of $(I, \sigst\cup \sigt)$.
 Since $I'$ is a subset source repair of $I$ w.r.t.\ $\mathcal M$, we have that $I'\subseteq I$. Moreover,
  there is a solution $J$ for $I'$ w.r.t.\ $\mathcal M$. Let $J' = I' \cup J$. Clearly, we have that (i) $I'\subseteq J'$ and (ii)
  $J' \models \sigst \cup \sigt$, hence $\mds(I', \sigst \cup \sigt) \not = \emptyset$. It remains to show that $I'$ is a maximal sub-instance of $I$ with the preceding properties (i) and (ii). Towards a contradiction, suppose that there is a sub-instance $I''$ of $I$ such that $I' \subset I'' \subseteq I$ and $\mds(I'', \sigst \cup \sigt) \not = \emptyset$. Consider an instance $J'' \in \mds(I'', \sigst \cup \sigt)$. Then $I''\subseteq J''$ and $J''\models
  \sigst \cup \sigt$. Let $\restr{J''}{\bf T}$ be the restriction of $J''$ to the target schema ${\bf T}$, that is,  $\restr{J''}{\bf T}$ is the sub-instance of $J''$ consisting of the facts of $J''$ that
   involve  relation symbols in $\bf T$ only.
  We claim that $\restr{J''}{\bf T}$ is a solution for $I''$ w.r.t.\ $\mathcal M$. Indeed, $\restr{J''}{\bf T} \models \sigt$, since all formulas in $\sigt$ contain atomic formulas from $\bf T$ only. Moreover, since $I''\subseteq J''$ and since $J'' \models \sigst$, we have $(I'',\restr{J''}{\bf T})\models \sigst$. This is so because, since  $\sigst$ consists of s-t tgds, the ${\bf S}$-facts in $J''\setminus I''$ play no role in  satisfying $\sigst$; we note that this may not hold if, say, $\sigst$ contained target-to-source tgds.
  It follows that $I'$ is not a subset source repair for $I$ w.r.t.\ $\mathcal M$, which is a contradiction.

Next, assume that $I'$ is an AR-repair of $(I,\sigst \cup \sigt)$. We have to show that $I'$ is a subset source repair of $I$ w.r.t.\ $\mathcal M$. Since $I'$ is an AR-repair of $(I,\sigst \cup \sigt)$, we have that
$I'\subseteq I$ and $\mds(I',\sigst \cup \sigt) \not = \emptyset$. Let $J'$ be a member of $\mds(I',\sigst \cup \sigt)$. Hence, $I' \subseteq J'$ and  $J'\models \sigst \cup \sigt$. If $\restr{J'}{\bf T}$ is the restriction of $J'$ to the target schema $\bf T$, then  $(I',\restr{J'}{\bf T}) \models \sigst$ (because $\sigst$ consists of s-t tgds) and $\restr{J'}{\bf T}\models \sigt$. Thus, there is a solution for $I'$ w.r.t.\ $\mathcal M$. Moreover, we claim that $I'$ is a maximal sub-instance of $I$ for which there exists a solution w.r.t.\ $\mathcal M$. Indeed, if $I''$ is  such that $I' \subset I''\subseteq I$ and
a solution $J''$ for $I''$ w.r.t.\ $\mathcal M$ exists, then $I''\cup J'' \models \sigst \cup \sigt$. It follows that $I'$ is an AR-repair of $(I,\sigst \cup \sigt)$, which is a contradiction.

Finally, if  $q$ is a query over $\bf T$, then, using the first part of the proposition, it is easy to show that
$\xrcertain(q,I,{\mathcal M})   = \ARcertain(q,I,  \sigst \cup \sigt)$.
\end{proof}

\eat{

The proof uses that \Sigma_st are s-tgds in an essential way. The key point is that if we have some instance J that contains I' satisfies the TBox  (\Sigma_st \cup \Sigma_t),  then its restriction J|T to T is a solution for I' w.r.t. M.
The proof also uses the above fact.

One more important comment.  Even though loose semantics have been used by the Italian colleagues and their OBDA followers both in data integration and in OBDA, there is a difference. When we translate data integration into data exchange, then the loose semantics of data integration amount to the materialize-then-repair approach (that our examples show it is not good).  The reason is that, in the loose semantics for data integration, the possible worlds are the repairs, while in OBDA, the possible worlds is the union of the sets mod(D',\Sigma), as \D' varies over all repairs of D. This is a point that I had not grasped earlier.

}

 \eat{
 the loosely-sound semantics for inconsistent databases, in which a repair $I'$ is a consistent database which is inclusion-maximal in its intersection with the original instance $I$.  This notion of a repair has not, to our knowledge, been applied to the source instance in a data exchange setting, though it has in~\cite{DBLP:conf/ijcai/CaliLR03} been applied to a partially materialized target instance, in accordance with the semantics described in \cite{DBLP:conf/krdb/LemboLR02}.  This approach yields $J'$ as one possible world in the example above, and includes the undesirable answers $({\rm peter},{\rm portman})$ and $({\rm peter},{\rm lumbergh})$ to the query ${\tt boss}({\rm peter},b)$.  Cal\`{\i}, Lembo, and Rosati offer an approach to query answering under their semantics for {\em non-key-conflicting} schema mappings and conjunctive queries using query rewriting and a disjunctive logic program.  The {\em non-key-conflicting} criterion is orthogonal to the concept of {\em weak acyclicity} used here.  

In ontology-based data access, a knowledge base is represented as a pair $\langle \cal T, A \rangle$, where $\cal T$ (the ``TBox'') specifies the intensional knowledge in the form of sentences of a description logic, and $\cal A$ (the ``ABox'') specifies the extensional knowledge.  In~\cite{DBLP:conf/rr/LemboR07}, Lembo and Ruzzi proposed a semantics for consistent query answering over description logic ontologies in which an interpretation is considered a repair of the knowledge base if its intersection with the ABox is inclusion maximal.  These semantics (and the loosely-sound database repairs above) are given in a setting in which there is no schema mapping, and therefore no distinction between source and target schemas.  This distinction, however, is crucial to the problem of data exchange, to our definition of exchange-repairs, and to our query answering approaches.  Nonetheless, it is possible to reduce
the problem of query answering under loosely-sound semantics in an ontology-based-data-access setting to
 a special case of the problem of
 computing the XR-certain answers in a data exchange setting. For this, one first creates a source schema that is
 a copy of  the schema of the ABox and then
 introduces copy constraints between the source schema and the schema of the ABox. It is then easy to see that there a one-to-one correspondence between exchange-repair solutions (as defined in the next section) and ``repairs" of
ABox under loosely-sound semantics; in turn, this implies that the answers to queries under loosely-sound semantics in ontology-based data access coincide with the XR-certain answers of queries in this special case of data exchange.

 an exchange-repair problem and also
the loosely-bas and vice-versa, and the same can be said of loosely-sound database repairs.
}
\section{\cqa{}-Rewritability}\label{sec:CQA-rewritability}

In this section, we show that, for \mappingtypetwo{\gav}{\egd} schema
mappings $\mapping{}$, it is possible to construct a set of egds
$\Sigma_s$ over $\mathbf{S}$ such that an $\mathbf{S}$-instance $I$ is
consistent with $\Sigma_s$ if and only if $I$ has a solution
w.r.t. $\mathcal{M}$. We use this to show that
$\xrcertain(q,I,\sm{M})$ for a conjunctive query $q$ coincides with
$\text{subset-\cqa}(q_s,I,\Sigma_s)$ for a union of conjunctive queries $q_s$. Thus, we can employ tools for
consistent query answering with respect to egds in order to compute
XR-certain answers for \mappingtypetwo{\gav}{\egd} schema
mappings.

We will use the well-known technique of \emph{GAV unfolding}
(see, e.g., \cite{DBLP:conf/pods/Lenzerini02}).
            Let $\mapping$ be a \mappingtypetwo{\gav}{\egd} schema
mapping.	For each $k$-ary target relation $T\in\mathbf{T}$,
                let $q_{\textsc{t}}$ be the set of all conjunctive queries
                $q(x_1, \ldots, x_k) = \exists\mathbf{y} (\phi(\mathbf{y}) \wedge x_1=y_{i_1}\wedge \cdots \wedge x_k=y_{i_k})$, for 
                $\phi(\mathbf{y}) \to T(y_{i_1}, \ldots, y_{i_k})$ a GAV tgd belonging to
                $\Sigma_{st}$ (recall that we frequently omit universal
                quantifiers in our notation, for the sake of readability).

		A {\em GAV unfolding} of a conjunctive query
                $q(\mathbf{z})$ over $\mathbf{T}$ w.r.t. $\Sigma_{st}$ is a
                conjunctive query over $\mathbf{S}$ obtained
               by replacing each occurrence of a
                target atom $T(\mathbf{z'})$ in $q(\mathbf{z})$ with one of the conjunctive
                queries in $q_{\textsc{t}}$ (substituting variables from $\mathbf{z'}$ for $x_1,\ldots,x_k$, and pulling existential quantifiers out to the front of the formula).

	        Similarly, we define a {\em GAV unfolding} of an egd $\phi(\mathbf{x}) \to
                x_k=x_l$ over $\mathbf{T}$ w.r.t. $\Sigma_{st}$ to be an egd over
                $\mathbf{S}$ obtained by 
 replacing each occurrence of a
                target atom $T(\mathbf{z'})$ in $\phi(\mathbf{x})$
by one of the conjunctive
                queries in $q_{\textsc{t}}$ (substituting variables from $\mathbf{z'}$ for $x_1,\ldots,x_k$, and pulling existential quantifiers
                out to the front of the formula as needed, where they
                become universal quantifiers).
                
	Figure~\ref{fig:gav_unfolding} shows the GAV unfolding of the schema mapping and query from Figure~\ref{example:mapping}.  
    
\begin{figure*}
\centering\normalsize
	\fbox{\parbox{0.9\textwidth}{
	$\sigs = \left\{ \begin{array}{l} {\tt Task\_Assignments}(p,t,d) \wedge {\tt Task\_Assignments}(p,t',d') \to d=d' \end{array} \right\}$
	\vspace{-2mm}
	\begin{multline*}
		{\tt boss}_s(person,stakeholder) = \exists task,department (\\
		{\tt Task\_Assignments}(person, task, department) \wedge {\tt Stakeholders\_old}(task,stakeholder))
	\end{multline*}\vspace{-5mm}}}
	\caption{The GAV Unfolding of the schema mapping and query given in Figure~\ref{example:mapping}.}
	\label{fig:gav_unfolding}
\end{figure*}

	\begin{theorem} \label{thm:xr-to-cqa}
		Let \mapping{} be a \mappingtypetwo{\gav}{\egd} schema
                mapping, and let $\Sigma_s$ be the set of all GAV
                unfoldings of egds in \sigt{}
                w.r.t. \sigst{}.  Let $I$ be an $\mathbf{S}$-instance.  The the following are equivalent:
               \begin{enumerate}
                 \item $I$ satisfies
                   $\Sigma_s$ if and only if $I$ has a
                   solution w.r.t. $\sm{M}$.
                 \item The subset-repairs of $I$
                   w.r.t. $\Sigma_s$ are the source repairs of
                   $I$
                   w.r.t. $\sm{M}$.
                 \item For each conjunctive query $q$ over
                   $\mathbf{T}$, we have that $\xrcertain(q,I,\sm{M}) =
                   \text{subset-}\cqa(q_s,I,\sigs)$, where $q_s$
                   is the union of GAV-unfoldings of $q$ w.r.t. \sigst{}.
                 \end{enumerate}
	\end{theorem}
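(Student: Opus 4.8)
The plan is to prove statement (1) for \emph{every} $\mathbf{S}$-instance --- since $\Sigma_s$ depends only on $\mathcal{M}$, nothing is lost by stating it this way --- and then to read off (2) and (3) as consequences. All three statements are thereby shown to hold, so in particular they are equivalent.

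The heart of the argument is (1). Because \sigst{} consists of GAV s-t tgds, chasing $I$ with \sigst{} produces a target instance $J_0$ that contains no labeled nulls, whose facts are exactly $\{T(h(\mathbf{z})) : (\phi(\mathbf{y})\to T(\mathbf{z}))\in\sigst{},\ h \text{ a homomorphism } \phi\to I\}$, and which satisfies $J_0\subseteq J$ for every $J$ with $(I,J)\models\sigst{}$. Since egds are preserved under passing to subinstances, $J\models\sigt{}$ forces $J_0\models\sigt{}$; hence $I$ has a solution w.r.t.\ $\mathcal{M}$ if and only if $J_0\models\sigt{}$. It then remains to show $J_0\models\sigt{}$ iff $I\models\Sigma_s$, and this is precisely what GAV unfolding delivers: by the description of $J_0$, a homomorphism from the body $\phi(\mathbf{x})$ of an egd $e\in\sigt{}$ into $J_0$ is the same data as a choice, for each target atom of $\phi$, of a GAV tgd of \sigst{} together with a homomorphism of its body into $I$, all agreeing on the exported variables --- and this combined data is exactly a homomorphism into $I$ of the body of the corresponding GAV unfolding $\chi$ of $e$, with $e$ violated by the former homomorphism exactly when $\chi$ is violated by the latter. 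Since $\Sigma_s$ contains every such $\chi$, ranging over all $e\in\sigt{}$ and all choices of tgds gives $J_0\not\models\sigt{}$ iff $I\not\models\Sigma_s$, which is (1).

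To obtain (2): as (1) holds for every $\mathbf{S}$-instance, it holds for every $I'\subseteq I$, so a subinstance $I'$ of $I$ has a solution w.r.t.\ $\mathcal{M}$ iff $I'\models\Sigma_s$; hence the inclusion-maximal solution-having subinstances of $I$ (the subset source-repairs of $I$ w.r.t.\ $\mathcal{M}$) coincide with the inclusion-maximal $\Sigma_s$-consistent subinstances of $I$ (the subset-repairs of $I$ w.r.t.\ $\Sigma_s$), and since $\mathcal{M}$ is in particular a \mappingtypethree{\glav}{\glav}{\egd} schema mapping, Theorem~\ref{XR-subset-thm} identifies the $\oplus$-source-repairs of $I$ with its subset source-repairs, so (2) holds as stated. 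To obtain (3): grouping the XR-solutions of $I$ by their source component --- and noting that a source-repair $I'$ automatically satisfies the minimality clause in the definition of an XR-solution --- yields $\xrcertain(q,I,\mathcal{M}) = \bigcap\{\certain(q,I',\mathcal{M}) : I' \text{ a source-repair of } I \text{ w.r.t.\ } \mathcal{M}\}$. For each source-repair $I'$, the chase of $I'$ with $\mathcal{M}$ does not fail (a solution exists) and returns the null-free universal solution $J_0'$ given by the GAV image of $I'$, so $\certain(q,I',\mathcal{M}) = q\!\!\downarrow(J_0') = q(J_0')$; and by soundness and completeness of GAV unfolding for conjunctive queries, $q(J_0') = q_s(I')$, where $q_s$ is the union of the GAV unfoldings of $q$ w.r.t.\ \sigst{}. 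Substituting, and then using (2) to replace ``source-repair of $I$ w.r.t.\ $\mathcal{M}$'' by ``subset-repair of $I$ w.r.t.\ $\Sigma_s$'', gives $\xrcertain(q,I,\mathcal{M}) = \bigcap\{q_s(I') : I' \text{ a subset-repair of } I \text{ w.r.t.\ } \Sigma_s\} = \text{subset-}\cqa(q_s,I,\Sigma_s)$, i.e.\ (3).

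The step I expect to require the most care is the GAV-unfolding correspondence for egds used in the proof of (1): it is not conceptually deep, but matching homomorphisms into $J_0$ with homomorphisms into $I$ requires tracking exported versus non-exported variables and the equality atoms introduced when a target atom has repeated variables or shares variables with another atom --- exactly the bookkeeping that the definition of the GAV unfolding of an egd is designed to absorb. Steps (2) and (3) are then routine, resting only on Theorem~\ref{XR-subset-thm}, the fact that a GAV chase followed by an egd chase on a null-free instance yields a null-free universal solution (or fails, which does not happen for a source-repair), and the standard correctness of GAV unfolding for conjunctive queries.
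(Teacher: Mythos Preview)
Your proposal is correct and follows essentially the same approach as the paper: establish (1) via the GAV-unfolding correspondence between violations of $\Sigma_s$ in $I$ and violations of $\sigt{}$ in the GAV chase $J_0$, then read off (2) by applying (1) to each $I'\subseteq I$, and obtain (3) by grouping XR-solutions by source component, computing $\certain(q,I',\sm{M})$ on the null-free universal solution, and invoking the GAV-unfolding correspondence for conjunctive queries together with (2). Your presentation is in fact somewhat more careful than the paper's (you make explicit the preservation of egds under subinstances and the fact that for a source-repair the chase produces a null-free universal solution), but the structure and the key ideas are the same; the invocation of Theorem~\ref{XR-subset-thm} in your step (2) is harmless but unnecessary, since by the paper's convention ``source repair'' already means subset source-repair.
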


	\begin{proof}
	
	\begin{enumerate}
		\item Let $I$ be an $\bf S$-instance which does not satisfy $\Sigma_s$.  Then there is an egd $\phi_1({\bf x}) \wedge ... \wedge \phi_k({\bf x}) \to x_i = x_j \in \sigs$ which is violated in $I$ by some image $\phi_1({\bf a}) \wedge ... \wedge \phi_k({\bf a})$.  By the definition of \sigs{}, there is an egd $T_1({\bf x}) \wedge ... \wedge T_k({\bf x}) \to x_i = x_j$ in \sigt{}, and tgds $\phi_1({\bf x}) \to T_1({\bf x}), ..., \phi_k({\bf x}) \to T_k({\bf x})$ in \sigst{}.  Then for any instance $J$ where $(I,J)$ together satisfy \sigst{}, it holds that $J$ contains the image $T_1({\bf a}) \wedge ... \wedge T_k({\bf a})$ and therefore violates \sigt{}.  The proof of the converse is similar.
		\item Consider that the source repairs are the maximal subsets of $I$ for which solutions exist.  Using the above, we have that these are also the maximal subsets of $I$ which satisfy \sigs{}, and therefore they are also the subset repairs of $I$ w.r.t. \sigs{}.
		\item By definition $\xrcertain(q,I,\sm{M})$ is the intersection over $q(J')$ for all \xrsolution{}s $(I',J')$ w.r.t. \sm{M} (or in other words, for all source repairs $I'$ and solutions $J'$ for $I'$ w.r.t. \sm{M}).  Observe that this is the intersection of $\certain(q,I',\sm{M})$ over all source repairs $I'$ w.r.t. \sm{M}.  We will now show that $\certain(q,I',\sm{M}) = q_s(I')$:
			\begin{quote} Let $J'$ be the solution for $I'$ w.r.t. \sm{M}.  Suppose $\bf a$ is a tuple in $q(J')$.  Then there is some image $T_1({\bf a},{\bf b}) \wedge ... \wedge T_k({\bf a},{\bf b})$ of $q$ in $J'$, and there are some tgds $\phi_1({\bf x},{\bf y}) \to T_1({\bf x},{\bf y}), ..., \phi_k({\bf x},{\bf y}) \to T_k({\bf x},{\bf y})$ in \sigst{} where the image $\phi_1({\bf a},{\bf b}) \wedge ... \wedge \phi_k({\bf a},{\bf b})$ is in $I'$.  By definition the clause $\exists y \phi_1({\bf x},{\bf y}) \wedge ... \wedge \phi_k({\bf x},{\bf y})$ is in $q_s$, so $\bf a$ is in $q_s(I')$.  The proof of the converse is similar.
			\end{quote}
		We now have that $\xrcertain(q,I,\sm{M})$ is the intersection over $q_s(I')$ for all source repairs $I'$ of $I$ w.r.t. \sm{M}.  By the second item of the theorem, this gives the intersection over $q_s(I')$ for all subset repairs $I'$ of $I$ w.r.t. \sigs{}, which is simply subset-$\cqa(q_s,I,\sigs)$.  
	\end{enumerate}
	\end{proof}

The following result tells us that Theorem~\ref{thm:xr-to-cqa}
cannot be extended to schema mappings containing LAV s-t tgds. 
	\begin{theorem}\label{thm:reachability-counterexample}
		Consider the \mappingtypetwo{\lav}{\egd} schema mapping \mapping{}, where
                \begin{itemize}
                  \item		$\mathbf{S}=\{R\}$ and
                    $\mathbf{T}=\{T\}$,
                  \item $\Sigma_{st} = \{R(x,y) \to \exists u~T(x,u)\land T(y,u)\}$, and
                    \item $\Sigma_t = \{T(x,y)\land T(x,z)\to y=z\}$. 
                \end{itemize}
                Consider the query
		$q(x,y) = \exists z.~T(x,z)\land T(y,z)$ over $\mathbf{T}$.
		There does not exist a UCQ $q_s$ over $\mathbf{S}$ and
                a set
                of universal first-order sentences (in particular, egds)
		$\Sigma_s$ such that, for every instance $I$, we have that
		$\xrcertain(q,I,\sm{M}) = \text{subset-}\cqa(q_s,I,\Sigma_s)$.
	\end{theorem}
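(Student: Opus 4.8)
The plan is to first identify $\xrcertain(q,I,\sm{M})$ explicitly, and then show that this query is too ``non-local'' to be captured by $\text{subset-}\cqa$ of a UCQ modulo a universal theory.

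\emph{Step 1: the XR-certain answers are reachability.} I would first observe that \emph{every} $\mathbf{S}$-instance $I$ has a solution w.r.t.\ $\sm{M}$, namely the target instance $\{T(a,\bot) : a\in\mathrm{adom}(I)\}$ for a single fresh null $\bot$; it satisfies $\Sigma_{st}$ and $\Sigma_t$. Hence $I$ is its own only source-repair and $\xrcertain(q,I,\sm{M})=\certain(q,I,\sm{M})$. Chasing $I$ with $\sm{M}$ produces, for each fact $R(a,b)$, a null witnessing $T(a,\cdot)$ and $T(b,\cdot)$, and the egd $\Sigma_t$ (which says the first column of $T$ functionally determines the second) then merges all nulls belonging to one connected component of $I$, viewed as an undirected graph. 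So the core universal solution is $\{T(a,u_C) : C \text{ a connected component of } I,\ a\in C\}$, and therefore $\certain(q,I,\sm{M})=\mathrm{Reach}(I):=\{(a,b) : a,b \text{ lie in the same connected component of } I\}$. Thus the theorem reduces to showing that $\mathrm{Reach}$ is not of the form $\text{subset-}\cqa(q_s,\cdot,\Sigma_s)$.

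\emph{Step 2: assume a rewriting, and prove the key lemma.} Suppose toward a contradiction that $\text{subset-}\cqa(q_s,I,\Sigma_s)=\mathrm{Reach}(I)$ for all $I$, and let $k$ be the maximum number of atoms in a disjunct of $q_s$. The key lemma I would establish is: \emph{every disjoint union of simple directed paths satisfies $\Sigma_s$}, by induction on the number of facts. The base case $I=\emptyset$ holds because universal sentences are vacuously true on the empty structure. For the inductive step, suppose $I$ is a disjoint union of simple directed paths with at least one fact and $I\not\models\Sigma_s$. Every proper sub-instance of $I$ is again a disjoint union of simple directed paths with strictly fewer facts, hence a model of $\Sigma_s$ by the induction hypothesis; consequently the subset-repairs of $I$ are exactly the instances $I\setminus\{e\}$ with $e\in I$. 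Now let $a_0$ be a total-degree-one endpoint of some path component of $I$, and let $e_0$ be the unique fact of $I$ containing $a_0$. Then $a_0\notin\mathrm{adom}(I\setminus\{e_0\})$, so $(a_0,a_0)\notin q_s(I\setminus\{e_0\})$, and hence $(a_0,a_0)\notin\text{subset-}\cqa(q_s,I,\Sigma_s)=\mathrm{Reach}(I)$, contradicting $a_0\in\mathrm{adom}(I)$. This proves the lemma.

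\emph{Step 3: the contradiction.} Pick $N>2k$ and let $P_N=\{R(a_0,a_1),\dots,R(a_{N-1},a_N)\}$. By the lemma $P_N\models\Sigma_s$, so $\text{subset-}\cqa(q_s,P_N,\Sigma_s)=q_s(P_N)$, which must equal $\mathrm{Reach}(P_N)=\mathrm{adom}(P_N)^2$; in particular some disjunct $\gamma$ of $q_s$ has a homomorphism $h$ into $P_N$ with $h(x)=a_0$ and $h(y)=a_N$. Since the Gaifman distance between $a_0$ and $a_N$ in $P_N$ is $N$, and $\gamma$ has at most $2k$ variables, the variables $x$ and $y$ cannot lie in the same connected component of $\gamma$ (otherwise a path of length at most $2k-1<N$ in $\gamma$ between them would map to a walk of that length in $P_N$). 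Write $\gamma=\alpha\wedge\beta\wedge\zeta$ as a variable-disjoint decomposition with $x$ occurring only in $\alpha$ and $y$ only in $\beta$ (the degenerate cases where $x$ or $y$ occurs in no atom are handled by the same construction, or dispatched directly using a two-edge instance). Let $P_N'$ be a vertex-disjoint isomorphic copy of $P_N$ via $\iota:a_i\mapsto b_i$, and set $I=P_N\sqcup P_N'$; again $I\models\Sigma_s$. Composing $h$ restricted to $\alpha\wedge\zeta$ with the inclusion $P_N\hookrightarrow I$, and $h$ restricted to $\beta$ with $\iota$, yields a homomorphism $h'$ of $\gamma$ into $I$ with $h'(x)=a_0$ and $h'(y)=b_N$. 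Hence $(a_0,b_N)\in q_s(I)=\text{subset-}\cqa(q_s,I,\Sigma_s)=\mathrm{Reach}(I)$, which is absurd since $a_0$ and $b_N$ lie in different connected components of $I$.

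\emph{Main obstacle.} The delicate point is the lemma in Step 2: a priori $\Sigma_s$ could declare long paths inconsistent, in which case $\text{subset-}\cqa$ on a path is an intersection over strictly smaller repairs, and one cannot simply evaluate $q_s$ on the path itself. The leverage that rules this out is that deleting the unique fact at a degree-one endpoint erases that element from the active domain, so the reflexive pair at that element cannot be XR-certain while being forced by $\mathrm{Reach}$. Once the lemma is available, Step 3 is essentially forced: producing the far-apart reachable pairs on a path requires a disjunct of $q_s$ with $x$ and $y$ in different components, and any such disjunct unavoidably ``misfires'' across two disjoint copies of the path.
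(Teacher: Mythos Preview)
Your proof is correct, but it takes a genuinely different route from the paper's. Both proofs begin by identifying $\xrcertain(q,I,\sm{M})$ with undirected reachability (your Step~1 is essentially the paper's first claim). They diverge after that.

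The paper works with a \emph{single} directed path $I$ of length $k+1$ (where $k$ is the number of variables of $q_s$), shows directly that $I\models\Sigma_s$ by observing that otherwise any subset-repair $I'$ would disconnect the endpoints $a,b$ and hence $(a,b)\notin q_s(I')=\text{subset-}\cqa(q_s,I',\Sigma_s)=\mathrm{Reach}(I')$, contradicting $(a,b)\in\text{subset-}\cqa(q_s,I,\Sigma_s)$. Then, from a witnessing homomorphism for $(a,b)\in q_s(I)$, the paper passes to the \emph{induced} sub-instance $I''$ on the homomorphic image (at most $k$ values, hence proper), invokes preservation of universal sentences under induced sub-instances to conclude $I''\models\Sigma_s$, and derives the contradiction $(a,b)\in q_s(I'')=\mathrm{Reach}(I'')$ while $a,b$ are disconnected in $I''$.

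Your argument instead proves the stronger lemma that \emph{every} disjoint union of simple directed paths satisfies $\Sigma_s$, via an induction that exploits the reflexive pair at a degree-one endpoint; you then use a Gaifman-distance argument to show the free variables of the witnessing disjunct lie in different connected components, and obtain the contradiction by a two-copies construction. This is a pleasant combinatorial alternative, and in fact your argument uses the universality of $\Sigma_s$ only in the (easily replaceable) base case $I=\emptyset$, so it essentially yields the same impossibility even for rather arbitrary $\Sigma_s$ for which subset-repairs exist. The paper's proof, by contrast, is shorter and makes the universality hypothesis do the heavy lifting via the induced-substructure preservation step; it avoids both the inductive lemma and the disjoint-copies construction.
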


	It is worth noting that the schema mapping $\sm{M}$ in the
        statement of Theorem~\ref{thm:reachability-counterexample}
         is such that every source instance has a
        solution, and hence ``$\xrcertain$'' could be replaced by
        ``$\certain$'' in the statement.

	\begin{proof}
		We start by observing that $\certain(q,I,\sm{M})$ expresses undirected reachability
		along the relation $R$:
		\begin{trivlist}
			\item \emph{Claim:} For every $\textbf{S}$-instance $I$,
			 $\certain(q,I,\sm{M}) = \{(a,b)\in adom(I) \mid b$ is reachable from $a$
			 by an undirected $R$-path$\}$.
		\end{trivlist}
		The left-to-right inclusion can be proved by induction on the length of the
		shortest undirected path from $a$ to $b$, while, for the right-to-left inclusion,
		it is enough to consider the solution $J$ that contains a null value
		for each connected component of $I$, and such that $J$ contains all facts of the
		form $T(a,N)$ for $a\in adom(I)$, where $N$ is the null value associated to the
		connected component of $I$ to which $a$ belongs.

		Now, suppose for the sake of a contradiction that $q_s$ and $\Sigma_s$ as described
		in the statement of the proposition exist. Let $k$ be the number of variables in $q_s$.
		Let I be an instance that consists of a directed path of length $k+1$
		from $a$ to $b$. It follows from the above claim, and from our assumption on $q_s$
		and $\Sigma_s$, that $(a,b)\in \text{subset-}\cqa(q_s,I,\Sigma_s)$, and for every proper
		subinstance $I'$ of $I$, we have
		that $(a,b)\not\in \certain(q,I',M)$.

		\begin{trivlist}
			\item \emph{Claim:} The instance $I$ is consistent with $\Sigma_s$.
		\end{trivlist}

		Suppose for the sake of a contradiction that the above claim does not hold.
		Let $I'$ be any subset-repair of $I$ with respect to
		$\Sigma_s$. Since $I'$ is a proper sub-instance of $I$,
		we have that $(a,b)\not\in \certain(q,I',\Sigma)$.
		In particular, since $I'$ satisfies
		$\Sigma_s$, we have that $(a,b)\not\in q_s(I')$. But since $I'$ is a repair of $I$, this
		means that $(a,b)\not\in \text{subset-}\cqa(q_s,I,\Sigma_s)$, a contradiction.

		Since $(a,b)\in \text{subset-}\cqa(q_s,I,\Sigma_s)$ and $I$ is consistent with $\Sigma_s$ we have
		that $(a,b)\in q_s(I)$. That is, there is a
		homomorphism $h$ from $q_s$ to $I$. Let $I''$ be the sub-instance of $I$ consisting of
		the facts involving only values that are in the image of $h$. Since $I$ contains $k$ facts and $q$ contains
		$k+1$ facts, $I''$ is a proper sub-instance of
                $I$. Moreover, since universal first-order sentences are preserved
		under taking induced sub-instances, every egd true in $I$ is also true in $I''$ and
		therefore, $I''$ is consistent with $\Sigma_s$.
		Finally, by construction, $q_s(I'')=true$. Therefore, $(a,b)\in \text{subset-}\cqa(q_s,I'',\Sigma_s)$.
		This contradicts the fact that $(a,b)\not\in \certain(q,I'',M)$.
	
	\end{proof}


The following result tells us that Theorem~\ref{thm:xr-to-cqa}
also cannot be extended to schema mappings containing GAV target tgds.
	\begin{theorem}\label{thm:reachability-counterexample-gav}
		Consider the \mappingtypethree{\gav}{\gav}{\egd} schema mapping \mapping{}, where
                \begin{itemize}
                  \item		$\mathbf{S}=\{R\}$ and
                    $\mathbf{T}=\{T\}$,
                  \item $\sigst = \{R(x,y) \to T(x,y)\}$, and
                    \item $\sigt = \{T(x,y)\land T(y,z)\to T(x,z)\}$. 
                \end{itemize}
                Consider the query
		$q(x,y) = T(x,y)$ over $\mathbf{T}$.
		There does not exist a UCQ $q_s$ over $\mathbf{S}$ and
                a set
                of universal first-order sentences (in particular, egds)
		$\Sigma_s$ such that, for every instance $I$, we have that
		$\xrcertain(q,I,\sm{M}) = \text{subset-}\cqa(q_s,I,\Sigma_s)$.
	\end{theorem}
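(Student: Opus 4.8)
The plan is to adapt the argument used for Theorem~\ref{thm:reachability-counterexample} to this GAV-target-tgd setting. The crucial first step is to characterize the certain answers. Since $\sigt$ here consists only of the transitivity tgd and contains no egds, every source instance $I$ has a solution; in fact the chase produces, as canonical universal solution, exactly the transitive closure of $R^I$ viewed as a $T$-relation. Consequently each $I$ is its own unique source repair, so $\xrcertain(q,I,\sm{M}) = \certain(q,I,\sm{M})$, and this set equals $\{(a,b) : b$ is reachable from $a$ by a directed $R$-path in $I\}$. In particular, if $I$ is a simple directed path $a = c_0 \to c_1 \to \cdots \to c_m = b$, then $(a,b)\in\certain(q,I,\sm{M})$, whereas every proper subinstance $I' \subsetneq I$ omits some edge $R(c_i,c_{i+1})$ of the path and hence admits no directed path from $a$ to $b$, so $(a,b)\notin\certain(q,I',\sm{M})$.

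Next I would set up the contradiction exactly as before. Suppose a UCQ $q_s$ over $\mathbf{S}$ and a set $\Sigma_s$ of universal first-order sentences satisfy $\xrcertain(q,I,\sm{M}) = \text{subset-}\cqa(q_s,I,\Sigma_s)$ for all $I$. Let $k$ bound the number of variables in any disjunct of $q_s$, and take $I$ to be the directed path $a=c_0\to\cdots\to c_{k+1}=b$, with $k+1$ edges and $k+2$ vertices. Then $(a,b)\in\certain(q,I,\sm{M}) = \text{subset-}\cqa(q_s,I,\Sigma_s)$, so $(a,b)\in q_s(I')$ for every subset-repair $I'$ of $I$ w.r.t.\ $\Sigma_s$. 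I first claim $I$ is consistent with $\Sigma_s$: otherwise a subset-repair $I'$ of $I$ is a proper subinstance with $I'\models\Sigma_s$, hence its own unique subset-repair, so $(a,b)\in q_s(I') = \text{subset-}\cqa(q_s,I',\Sigma_s) = \xrcertain(q,I',\sm{M}) = \certain(q,I',\sm{M})$, contradicting the last observation of the previous step.

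Since $I\models\Sigma_s$, we have $\text{subset-}\cqa(q_s,I,\Sigma_s) = q_s(I)$, so $(a,b)\in q_s(I)$ is witnessed by a disjunct of $q_s$ and a homomorphism $h$ into $I$ with $h$ sending the two free variables to $a$ and $b$. Let $I''$ be the subinstance of $I$ induced by the values in the image of $h$. Because the image has at most $k < k+2$ elements and contains both endpoints $a$ and $b$, it omits some internal vertex $c_j$, hence $I''$ omits the edge $R(c_{j-1},c_j)$ and is a proper subinstance of $I$. As $\Sigma_s$ consists of universal sentences and these are preserved under induced substructures, $I''\models\Sigma_s$, so $I''$ is its own unique subset-repair and $(a,b)\in q_s(I'') = \text{subset-}\cqa(q_s,I'',\Sigma_s) = \certain(q,I'',\sm{M})$. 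But $I''$ is a proper subinstance of the path, so $(a,b)\notin\certain(q,I'',\sm{M})$ by the first step, a contradiction.

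The argument is largely routine once the certain-answer characterization is pinned down; the only step needing care is the counting at the end: one must ensure that the homomorphic image of a disjunct of $q_s$, while containing the two path endpoints, necessarily misses an \emph{internal} vertex and therefore an edge, so that the induced substructure is a genuinely proper subinstance whose certain answers no longer contain $(a,b)$. Choosing the path strictly longer than the variable count of $q_s$ is exactly what secures this, in direct parallel with the proof of Theorem~\ref{thm:reachability-counterexample}.
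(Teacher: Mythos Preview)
Your proposal is correct and follows essentially the same approach as the paper: the paper's proof simply observes that $\certain(q,I,\sm{M})$ computes directed $R$-reachability and then states that ``the remainder of the proof is identical to that of Theorem~\ref{thm:reachability-counterexample},'' which is precisely the contradiction argument you spell out. Your version is in fact slightly more careful than the paper's about the counting step (ensuring the induced subinstance misses an internal vertex while retaining both endpoints), but the underlying idea is the same.
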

	\begin{proof}
		We start by observing that $\certain(q,I,\sm{M})$ expresses directed reachability
		along the relation $R$: for every $\textbf{S}$-instance $I$,
			 $\certain(q,I,\sm{M}) = \{(a,b)\in adom(I) \mid \text{$b$ is reachable from $a$
			 by a directed $R$-path}\}$.
                  The claim is proved by induction on the length of the path.
                 The remainder of the proof is identical to that of
                 Theorem~\ref{thm:reachability-counterexample} (the
                 difference between directed paths and undirected
                 paths is inessential to the argument).
    
    \end{proof}

\section{DLP-Rewritability} \label{sec:dlp}

We saw in the previous section that the applicability of the CQA-rewriting approach is limited to \mappingtypetwo{\gav}{\egd} schema mappings. In 
this section, we consider another approach to computing XR-certain answers, based on a reduction to the problem of computing certain answers
over the stable models of a disjunctive logic program.  Our reduction
is applicable to \mappingtypethree{\glav}{\waglav}{\egd} schema
mappings.
First,  we reduce the case of \mappingtypethree{\glav}{\waglav}{\egd}
schema mappings to the case of \mappingtypethree{\gav}{\gav}{\egd}
schema mappings.

\begin{theorem}\label{thm:glavtogav}
From a \mappingtypethree{\glav}{\waglav}{\egd} schema mapping
$\sm{M}$ we can construct a \mappingtypethree{\gav}{\gav}{\egd}
schema mapping $\hat{\sm{M}}$ such that, from a conjunctive query $q$,
we can construct a union of conjunctive queries $\hat{q}$  with
$\xrcertain(q,I,\sm{M}) = \xrcertain(\hat{q},I,\hat{\sm{M}})$.
\end{theorem}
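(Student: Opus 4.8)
The plan is to proceed in two stages. First, eliminate the existential quantifiers of the GLAV tgds by Skolemization, reducing to the case of a schema mapping given by a second-order s-t tgd, a weakly acyclic second-order target tgd, and a set of target egds; then turn \emph{that} mapping into a \mappingtypethree{\gav}{\gav}{\egd} mapping $\hat{\sm{M}}$ over the same source schema $\mathbf{S}$ and a suitably expanded target schema. The first stage is routine: the chase of a GLAV mapping coincides with the Skolem chase of its Skolemization, so a given $\mathbf{S}$-instance has the same solutions (up to the interpretation of the Skolem functions) and the same canonical universal solution under either mapping; hence the source-repairs of $I$, and the value $\xrcertain(q,I,\sm{M})$, are unchanged, and weak acyclicity of the target tgds becomes a bound $r$ on the nesting depth of the Skolem functions in the target part. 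So it suffices to establish the claim when $\sm{M}$ is specified by a second-order s-t tgd, a weakly acyclic second-order target tgd, and target egds — exactly the stronger statement advertised in the introduction.

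For the second stage, the key observation is that weak acyclicity bounds the ``complexity'' of chase-generated nulls: for any $\mathbf{S}$-instance $K$, every labeled null of $chase(K,\sm{M})$ is the value of a Skolem term of bounded nesting depth (at most the depth of the s-t part plus $r$) all of whose innermost arguments are constants of $adom(K)$, and the Skolem functions have bounded arity (bounded by the number of universally quantified variables). Hence each chase null is described by a labeled tree of bounded size; equivalently, it can be \emph{encoded} by a pair consisting of one of finitely many ``shapes'' and a tuple of at most $N_0$ constants of $adom(K)$, with $N_0$ a constant determined by $\sm{M}$. Define $\hat{\mathbf{T}}$ by putting, for every $k$-ary $T\in\mathbf{T}$ and every assignment $\sigma$ of a shape (``constant'', or one of the finitely many null-shapes) to each of the $k$ positions of $T$, a relation $T_\sigma$ whose arity is the total length of the corresponding encodings; a $\mathbf{T}$-fact of $chase(K,\sm{M})$ is then represented by a $\hat{\mathbf{T}}$-fact over constants only. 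The GAV s-t tgds and GAV target tgds of $\hat{\sm{M}}$ are obtained by reading off, for each Skolemized tgd of $\sm{M}$ and each compatible choice of shapes for the body atoms, the single encoded head atom it produces; the point is that the value of a Skolem term in the head is just a tuple of constants already present in the body, so \emph{no existential quantifier is needed} and the resulting dependencies are GAV (and, once we are in the GAV world, the target tgds are automatically weakly acyclic, as observed after the definition of weak acyclicity). The target egds of $\sm{M}$ are translated into egds over $\hat{\mathbf{T}}$: equality between two equally-shaped encoded nulls becomes a bounded set of egds equating their encodings componentwise, and equality between a null-encoding and a genuine constant, or between differently-shaped (hence distinct) nulls, is handled by extra GAV rules that carry the appropriate canonical encoding. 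The rewriting $\hat q$ is the union, over all assignments of shapes to the atoms of $q$ consistent with the variable-sharing pattern of $q$, of the conjunctive query over $\hat{\mathbf{T}}$ obtained by replacing each atom by its encoded version and adding equalities identifying encoded positions that must denote the same null.

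One then proves the core lemma by induction on the length of the chase: for every $\mathbf{S}$-instance $K$, (i) $K$ has a solution w.r.t.\ $\sm{M}$ iff $K$ has a solution w.r.t.\ $\hat{\sm{M}}$, and (ii) when solutions exist, the restriction of $chase(K,\hat{\sm{M}})$ to the encoded relations is precisely the encoding of $chase(K,\sm{M})$, so that $\certain(q,K,\sm{M}) = \certain(\hat q,K,\hat{\sm{M}})$. Applying this to $K=I$ and to every $I'\subseteq I$ shows that $\sm{M}$ and $\hat{\sm{M}}$ have exactly the same source-repairs of $I$ and agree on the certain answers of (the rewritten) $q$ over each, whence $\xrcertain(q,I,\sm{M}) = \bigcap_{I'}\certain(q,I',\sm{M}) = \bigcap_{I'}\certain(\hat q,I',\hat{\sm{M}}) = \xrcertain(\hat q,I,\hat{\sm{M}})$.

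I expect the main obstacle to be the bookkeeping around target egds that equate labeled nulls: such chase steps destroy the uniqueness of the Skolem name of a null, so one must fix a canonical representative among the (boundedly many, each bounded-size) names of each merged null and add GAV rules propagating every encoded atom to its canonical form, while checking that the translated egds fail on exactly the instances on which the chase of $\sm{M}$ fails. Making the interleaving of (second-order) target tgd applications and egd applications line up with the encoded chase — and, correspondingly, making $\hat q$ range over precisely the right family of shape/null-sharing patterns — is where the real work lies; everything else is a lengthy but routine verification.
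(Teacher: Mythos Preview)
Your high-level architecture matches the paper: Skolemize to obtain (weakly acyclic) second-order tgds, then exploit the bound on term depth to encode each target atom by a ``shape'' plus a tuple of source constants, yielding GAV s-t tgds and GAV target tgds over an expanded target schema. This is exactly the paper's \emph{skeleton rewriting}.

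The gap is in your treatment of the target egds. You propose to translate an egd that equates two nulls into componentwise egds on their encodings (when the shapes agree) and to ``fix a canonical representative among the (boundedly many, each bounded-size) names of each merged null'' otherwise. The parenthetical is false: the number of Skolem-term names of a single merged null is \emph{not} bounded by the schema mapping. Take $\sigst=\{R(x)\to\exists y\,T(x,y)\}$ and $\sigt=\{T(x,y)\wedge T(x',y')\to y=y'\}$; on the source $\{R(a_1),\ldots,R(a_k)\}$ the chase creates nulls $f(a_1),\ldots,f(a_k)$ and the egd merges all of them, so the resulting null has $k$ distinct names. Consequently there is no data-independent choice of canonical representative, and no fixed finite set of GAV ``propagation'' rules can normalize every encoded atom to a canonical shape. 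Relatedly, your query rewriting $\hat q$ only ranges over shape assignments \emph{consistent with the variable-sharing pattern of $q$}; it will miss answers whenever two occurrences of the same variable are witnessed by equal nulls arising from different shapes.

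The paper's fix is the step you are missing: \emph{equality singularization}. Before the skeleton rewriting, replace each egd $\phi\to x_i{=}x_j$ by the GAV tgd $\phi'\to \rel{Eq}(x_i,x_j)$ (with $\phi'$ the body with repeated variables split and linked by $\rel{Eq}$), add the congruence axioms for $\rel{Eq}$, and singularize the query the same way. Now nulls are never merged, so every value is the denotation of a \emph{unique} bounded-depth term (``freeness''), and the skeleton encoding is sound on the nose; the $\rel{Eq}$ relation itself is skeleton-rewritten into the family $\rel{Eq}_{s,s'}$. Only at the very end does one add back a \emph{single} egd, $\rel{Eq}_{\bullet,\bullet}(x,y)\to x=y$, whose sole role is to make the encoded chase fail exactly when the original chase would equate two distinct constants. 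Your final paragraph correctly anticipates that the interaction of egds with the encoding is where the real work lies; equality singularization is precisely the device that discharges it.
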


The proof of Theorem~\ref{thm:glavtogav} is given in Section~\ref{sec:glav-to-gav} (it is entailed by Theorem~\ref{thm:sotgdtogav}).  Theorem~\ref{thm:reachability-counterexample-gav} shows that the CQA-rewriting approach studied in Section~\ref{sec:CQA-rewritability} is, in general, not applicable to \mappingtypethree{\gav}{\gav}{\egd} schema mappings and unions of conjunctive queries. To address this problem, we will now consider a different approach to computing XR-certain answers, using disjunctive logic programs.  Although stable models are popular in the literature, including for database repairs, we find that the selective minimization offered by parallel circumscription is a better fit for \xrcertain{} semantics because our minimality condition applies only to the source-part of the schema.  We then use a result from \cite{DBLP:conf/jelia/JanhunenO04} to translate back into the realm of stable models.

Stable models of disjunctive logic programs have been well-studied as a way to compute database repairs (\cite{DBLP:journals/dke/MarileoB10} provides a thorough treatment).  In~\cite{DBLP:conf/ijcai/CaliLR03}, Cal\`{\i} et al. give an encoding of their loosely-sound semantics for data integration as a disjunctive logic program.  Their encoding is applicable for {\em non-key-conflicting} sets of constraints, a syntactic condition that is orthogonal to {\em weak acyclicity}, and which eliminates the utility of named nulls.  Although their semantics use a notion of minimality that is similar to ours, our setting and our syntactic condition differ sufficiently that our results are complementary.

Fix a domain $Const$. A \emph{disjunctive logic program} (DLP) $\Pi$ over a schema $\mathbf{R}$ is a 
finite collection of rules of the form 
	$$ \alpha_1 \vee \ldots \vee \alpha_n \leftarrow \beta_1,\ldots,\beta_m, \neg\gamma_1,\ldots,\neg\gamma_k. $$
where $n,m,k \ge 0$ and $\alpha_1,\ldots,\alpha_n,\beta_1,\ldots,\beta_m,\gamma_1,\ldots,\gamma_k$ are atoms formed
from the relations in $\mathbf{R}\cup\{=\}$, using the constants in $Const$ and
first-order variables.
A DLP is said to be \emph{positive} if it consists of rules that do not contain
negated atoms except possibly for inequalities.  A DLP is said to be \emph{ground}
if it consists of rules that do not contain any first-order variables.
A \emph{model} of $\Pi$ is an $\mathbf{R}$-instance $I$ over domain $Const$ that
satisfies all rules of $\Pi$ (viewed as universally quantified
first-order sentences).  A rule in which
$n=0$ is called a constraint, and is satisfied only if its body is not satisfied.
A \emph{minimal model} of $\Pi$ is a model $M$ of $\Pi$ such
that there does not exist a model $M'$ of $\Pi$ where
the facts of $M'$ form a strict subset of the facts of $M$.
More generally, for subsets $\mathbf{R}_{\textsc{m}},\mathbf{R}_{\textsc{f}} \subseteq \mathbf{R}$, 
an \emph{$\langle\mathbf{R}_{\textsc{m}},\mathbf{R}_{\textsc{f}}\rangle$-minimal model} of $\Pi$ is a model $M$ of $\Pi$ such
that there does not exist a model $M'$ of $\Pi$ where
the facts of $M'$ involving relations from $\mathbf{R}_{\textsc{m}}$ form a
strict subset of the facts of $M$ involving relations from
$\mathbf{R}_{\textsc{m}}$, and the set of facts of $M'$ involving relations from $\mathbf{R}_{\textsc{f}}$
is equal to the set of facts of $M$ involving relations from $\mathbf{R}_{\textsc{f}}$ \cite{DBLP:conf/jelia/JanhunenO04}.
Although minimal models are a well-behaved semantics for positive
DLPs, it is not well suited for programs with negations. 
 The \emph{stable model} semantics is a widely
used semantics of DLPs that are not necessarily positive. For positive DLPs, it coincides with the
minimal model semantics. For a ground DLP $\Pi$ over a schema $\mathbf{R}$ and an $\mathbf{R}$-instance $M$ over the domain $Const$, the \emph{reduct}
$\Pi^M$ of $\Pi$ with respect to $M$ is the DLP containing, for each
rule $ \alpha_1 \vee \ldots \vee \alpha_n \leftarrow
\beta_1,\ldots,\beta_m, \neg\gamma_1,\ldots,\neg\gamma_k$, with
$M\not\models\gamma_i$ for all $i\leq k$, the rule $ \alpha_1 \vee
\ldots \vee \alpha_n \leftarrow \beta_1,\ldots,\beta_m$.  A
\emph{stable model} of a ground DLP $\Pi$ is an $\mathbf{R}$-instance
$M$ over the domain $Const$ such that $M$ is a minimal model of the reduct $\Pi^M$. See 
\cite{DBLP:conf/iclp/GelfondL88} for more details.

In this section, we will construct positive DLP programs whose 
$\langle\mathbf{R}_{\textsc{m}},\mathbf{R}_{\textsc{f}}\rangle$-minimal models correspond to XR-solutions.
In light of Theorem~\ref{thm:glavtogav}, we may restrict our attention to \mappingtypethree{\gav}{\gav}{\egd}
schema mappings.

In \cite{DBLP:conf/jelia/JanhunenO04} it was shown that a positive ground DLP
$\Pi$ over a schema $\mathbf{R}$, together with subsets $\mathbf{R}_{\textsc{m}},\mathbf{R}_{\textsc{f}} \subseteq\mathbf{R}$, can be translated in polynomial time to a
(not necessarily positive) DLP $\Pi'$ over a possibly larger schema
that includes $\mathbf{R}$,
such that there is a bijection between the
$\langle\mathbf{R}_{\textsc{m}},\mathbf{R}_{\textsc{f}}\rangle$-minimal models of $\Pi$ and the stable models of $\Pi'$,
where every pair of instances that stand in the bijection agree on all facts over the schema
$\mathbf{R}$. This shows that DLP reasoners based on the
stable model semantics, such as DLV \cite{DBLP:journals/tocl/LeonePFEGPS06,DBLP:conf/datalog/AlvianoFLPPT10}, can be used to evaluate positive
ground disjunctive logic programs under the $\langle\mathbf{R}_{\textsc{m}},\mathbf{R}_{\textsc{f}}\rangle$-minimal
model semantics.  Although stated only for ground programs in  
\cite{DBLP:conf/jelia/JanhunenO04}, this technique can be used for arbitrary positive
DLPs through grounding.  Note that, when a program is grounded,
inequalities are reduced to $\top$ or $\bot$.

\begin{theorem}\label{thm:dlp-rewritability-gav}
Given a \mappingtypethree{\gav}{\gav}{\egd} schema mapping  \mapping{},
we can construct in
linear time
a positive DLP $\Pi$ over a schema $\mathbf{R}$ that contains
$\mathbf{S}\cup\mathbf{T}$, and subsets $\mathbf{R}_{\textsc{m}},\mathbf{R}_{\textsc{f}} \subseteq \mathbf{R}$, such that for every union $q$ of conjunctive queries
over $\mathbf{T}$ and for every $\mathbf{S}$-instance $I$, we have that $\xrcertain(q,I,\sm{M}) =
\bigcap\{q(M)\mid M$ is an $\langle\mathbf{R}_{\textsc{m}},\mathbf{R}_{\textsc{f}}\rangle$-minimal model of $\Pi\cup I\}$.
\end{theorem}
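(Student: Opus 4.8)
The plan is to construct $\Pi$ so that its $\langle\mathbf{R}_{\textsc{m}},\mathbf{R}_{\textsc{f}}\rangle$-minimal models are exactly (canonical representatives of) the source-repairs of $I$, each decorated with a chase of its retained source part. Because $\sm{M}$ is a \mappingtypethree{\gav}{\gav}{\egd} mapping, all its tgds are full, so no labeled nulls are ever generated and the whole construction lives over $Const$. Take $\mathbf{R}=\mathbf{S}\cup\mathbf{T}\cup\{\hat R,\overline R:R\in\mathbf{S}\}$, where $\hat R$ and $\overline R$ are fresh relations of the same arity as $R$ (the ``retained'' and ``deleted'' tuples of $R$), and let $\Pi$ consist of: (i) for each $R\in\mathbf{S}$, the rule $\hat R(\mathbf{x})\vee\overline R(\mathbf{x})\leftarrow R(\mathbf{x})$; (ii) for each GAV s-t tgd $\phi(\mathbf{x})\to T(\mathbf{x})$ in $\sigst$, the rule $T(\mathbf{x})\leftarrow\hat\phi(\mathbf{x})$, where $\hat\phi$ is $\phi$ with every atom $R(\cdot)$ replaced by $\hat R(\cdot)$; (iii) for each GAV target tgd $\psi(\mathbf{x})\to T(\mathbf{x})$ in $\sigt$, the rule $T(\mathbf{x})\leftarrow\psi(\mathbf{x})$; (iv) for each target egd $\chi(\mathbf{x})\to x_i=x_j$ in $\sigt$, the constraint $\leftarrow\chi(\mathbf{x}),\,x_i\neq x_j$. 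Put $\mathbf{R}_{\textsc{m}}=\mathbf{S}\cup\{\overline R:R\in\mathbf{S}\}$ and $\mathbf{R}_{\textsc{f}}=\emptyset$. This $\Pi$ is positive (apart from the inequalities allowed in constraints) and of size linear in $|\sm{M}|$, so it is produced in linear time.

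The first step is the model/repair correspondence. Given a source-repair $I'$ of $I$, let $M_{I'}$ agree with $I$ on $\mathbf{S}$, have $\hat R^{M_{I'}}=R^{I'}$ and $\overline R^{M_{I'}}=R^{I\setminus I'}$, and have $\mathbf{T}$-restriction equal to $chase(I',\sm{M})$. Using that $chase(I',\sm{M})$ is a null-free solution for $I'$, one checks $M_{I'}$ is a model of $\Pi\cup I$: rules (i)--(iii) are immediate and (iv) holds since $chase(I',\sm{M})\models\sigt$. It is $\langle\mathbf{R}_{\textsc{m}},\mathbf{R}_{\textsc{f}}\rangle$-minimal because any model $M'$ of $\Pi\cup I$ with fewer $\mathbf{R}_{\textsc{m}}$-facts has $R^{M'}=R^I$ for all $R$ (a model with a strictly larger $\mathbf{S}$-part can be restricted back to $R^I$ and closed off, so it is not minimal) and $\overline R^{M'}\subsetneq\overline R^{M_{I'}}$ for some $R$; hence the retained instance $I''=\{R(\mathbf{a}):\hat R^{M'}(\mathbf{a})\}$ strictly contains $I'$, while closure of $T^{M'}$ under (ii)--(iii) plus constraint (iv) shows the chase of $I''$ with $\sigst\cup\sigt$ succeeds, i.e.\ $I''$ has a solution --- contradicting maximality of $I'$. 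Conversely, from any $\langle\mathbf{R}_{\textsc{m}},\mathbf{R}_{\textsc{f}}\rangle$-minimal model $M$, minimality forces $R^M=R^I$ and forces exactly one of $\hat R(\mathbf{a}),\overline R(\mathbf{a})$ for each $R(\mathbf{a})\in I$; set $I'=\{R(\mathbf{a}):\hat R^M(\mathbf{a})\}\subseteq I$. Closure of $T^M$ under (ii)--(iii) together with (iv) again shows $chase(I',\sm{M})$ succeeds and that its $\mathbf{T}$-facts lie inside $T^M$; and maximality of $I'$ follows exactly as above, since any strictly larger $I''\subseteq I$ with a solution yields a model $M_{I''}$ with fewer $\mathbf{R}_{\textsc{m}}$-facts than $M$. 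Throughout, the monotonicity of the chase for full tgds (the full-tgd instance of Lemma~\ref{subset-trickledown-lemma}) is what makes deletion-minimality of $\Pi$ coincide with subset-maximality of source-repairs.

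The second step derives the query identity. For each source-repair $I'$, the canonical model $M_{I'}$ is $\langle\mathbf{R}_{\textsc{m}},\mathbf{R}_{\textsc{f}}\rangle$-minimal with $\mathbf{T}$-restriction the null-free universal solution $chase(I',\sm{M})$, so $q(M_{I'})=q(chase(I',\sm{M}))=\certain(q,I',\sm{M})$; intersecting over all source-repairs, $\bigcap\{q(M)\}\subseteq\bigcap_{I'}\certain(q,I',\sm{M})=\xrcertain(q,I,\sm{M})$. For the reverse inclusion, an arbitrary $\langle\mathbf{R}_{\textsc{m}},\mathbf{R}_{\textsc{f}}\rangle$-minimal model $M$ has an associated source-repair $I'$ with $chase(I',\sm{M})$ contained in the $\mathbf{T}$-restriction of $M$, so by monotonicity of unions of conjunctive queries $q(M)\supseteq q(chase(I',\sm{M}))=\certain(q,I',\sm{M})\supseteq\xrcertain(q,I,\sm{M})$; as this holds for every such $M$, $\bigcap\{q(M)\}\supseteq\xrcertain(q,I,\sm{M})$. (The set of such $M$ is nonempty, since source-repairs always exist.) The two inclusions give the theorem.

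The main obstacle is that $\mathbf{T}$ is neither minimized nor fixed, so an $\langle\mathbf{R}_{\textsc{m}},\mathbf{R}_{\textsc{f}}\rangle$-minimal model may carry $\mathbf{T}$-facts strictly above $chase(I',\sm{M})$, and one cannot cure this by adding rules forcing $T$ down to the chase without wrecking the deletion-minimality the construction relies on. The resolution, as in the second step, is to tolerate the extra facts: every such model sits above the canonical $M_{I'}$ on $\mathbf{T}$, $M_{I'}$ is itself minimal, and $q$ is monotone, so spurious target facts cannot affect $\bigcap\{q(M)\}$. The accompanying delicate point is choosing $\mathbf{R}_{\textsc{m}}$ and $\mathbf{R}_{\textsc{f}}$ so that both directions of the correspondence go through --- in particular so that minimality rules out spurious facts over $\mathbf{S}$, $\hat R$, and $\overline R$ while still capturing exactly the subset-maximal retained instances; placing $\mathbf{S}$ (and not $\mathbf{T}$) into $\mathbf{R}_{\textsc{m}}$ is what achieves this.
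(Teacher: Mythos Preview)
Your construction and argument follow essentially the same strategy as the paper's, though with a different choice of $\mathbf{R}_{\textsc{m}}$ and $\mathbf{R}_{\textsc{f}}$ and a slightly leaner rule set. The paper takes $\mathbf{R}_{\textsc{m}}=\{S_d : S\in\mathbf{S}\}$ and $\mathbf{R}_{\textsc{f}}=\mathbf{S}$, and includes two additional rules per source relation: the constraint $\bot\leftarrow S_k(\mathbf{x}),S_d(\mathbf{x})$ and the rule $S(\mathbf{x})\leftarrow S_k(\mathbf{x})$. The latter, together with $\mathbf{S}$ being fixed, is precisely what guarantees $S_k^M\subseteq S^M=I$ in every model, so that the retained instance is automatically a sub-instance of $I$.

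Your encoding drops these two rules and instead places $\mathbf{S}$ into $\mathbf{R}_{\textsc{m}}$ with $\mathbf{R}_{\textsc{f}}=\emptyset$. This does force $R^M=R^I$ in minimal models, as you argue. But it leaves a gap: $\hat R$ is neither minimized nor fixed, so nothing in your program or in $\langle\mathbf{R}_{\textsc{m}},\mathbf{R}_{\textsc{f}}\rangle$-minimality prevents a minimal model $M$ from containing facts $\hat R(\mathbf{a})$ with $\mathbf{a}\notin R^I$; your assertion that $I'=\{R(\mathbf{a}):\hat R^M(\mathbf{a})\}\subseteq I$ is therefore unjustified as stated. The fix is immediate: define $I'$ as $\{R(\mathbf{a}):\mathbf{a}\in R^I \text{ and } \hat R^M(\mathbf{a})\}$ (equivalently, $R^I\setminus\overline R^M$). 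Then $I'\subseteq I$ by construction; since $I'\subseteq \hat R^M$ and rules (ii)--(iii) are monotone, $chase(I',\sm{M})\subseteq T^M$ still holds; and egds, being universal, are inherited by this subinstance, so $I'$ has a solution. The maximality argument and the query-monotonicity step in your second paragraph then go through unchanged. With this small correction your proof is complete and equivalent in spirit to the paper's; alternatively, you could add the rule $R(\mathbf{x})\leftarrow\hat R(\mathbf{x})$ (as the paper does) and avoid the issue entirely.
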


\begin{proof}
 We construct a disjunctive logic program $\dlp{\sm{M}}$ for a \mappingtypethree{\gav}{\gav}{\egd} schema mapping \mapping{} as follows:
	\begin{enumerate}
		\item For each source relation $S$ with arity $n$, add the rules
                       \[\begin{array}{l}
 			S_k(x_1,\ldots,x_n) \vee S_d(x_1,\ldots,x_n) \leftarrow S(x_1,\ldots,x_n)\\
			\bot\leftarrow S_k(x_1,\ldots,x_n), S_d(x_1,\ldots,x_n)\\
			S(x_1,\ldots,x_n) \leftarrow S_k(x_1,\ldots,x_n)
                        \end{array}\]
			where $S_k$ and $S_d$ represent the {\em kept} and {\em deleted} atoms of $S$, respectively.
		\item For each s-t tgd $\phi(\mathbf{x}) \to T(\mathbf{x}')$ in \sigst, add the rule
			\[ T(\mathbf{x}') \leftarrow \alpha_1, \ldots, \alpha_m \]
			where $\alpha_1,\ldots,\alpha_m$ are the atoms in $\phi(\mathbf{x})$, in which each relation $S$ has been uniformly replaced by $S_k$.
		\item For each tgd $\phi(\mathbf{x}) \to T(\mathbf{x}')$ in \sigt, add the rule
			\[ T(\mathbf{x}) \leftarrow \alpha_1, \ldots, \alpha_m \]
			where $\alpha_1,\ldots,\alpha_m$  are the atoms in $\phi(\mathbf{x})$.
		\item For each egd $\phi(\mathbf{x}) \to x_1 = x_2$, where $x_1,x_2 \in \mathbf{x}$, add the rule
			\[ \bot\leftarrow \alpha_1, \ldots,  \alpha_m, x_1 \neq x_2, \]
			where $\alpha_1,\ldots,\alpha_m$  are the atoms in $\phi(\mathbf{x})$.
	\end{enumerate}

We minimize the model w.r.t. $\mathbf{R}_{\textsc{m}} = \{ S_d \mid S \in \mathbf{S}\}$, and fix $\mathbf{R}_{\textsc{f}} = \{ S \mid S \in \mathbf{S} \}$.  The disjunctive logic program for \sm{M}, denoted $\dlp{\sm{M}}$, is a straightforward encoding of the constraints in \sigst{} and \sigt{} as disjunctive logic rules over an indefinite view of the source instance.  Since the source instance is fixed, the rules of the form $ S(x_1,\ldots,x_n) \leftarrow S_k(x_1,\ldots,x_n) $ in $\dlp{\sm{M}}$ force the kept atoms to be a sub-instance of the source instance.  Notice that egds are encoded as denial constraints, and that disjunction is used only to non-deterministically choose a subset of the source instance.

To prove the theorem, we first show that the restriction of every $\langle\mathbf{R}_{\textsc{m}},\mathbf{R}_{\textsc{f}}\rangle$-minimal model of $\dlp{\sm{M}} \cup I$ to the schema $\{S_k\mid S\in\mathbf{S}\}\cup\mathbf{T}$ constitutes an exchange-repair solution.  We then show that for every exchange-repair solution, we can build a corresponding $\langle\mathbf{R}_{\textsc{m}},\mathbf{R}_{\textsc{f}}\rangle$-minimal model of $\dlp{\sm{M}} \cup I$.

		Let \mapping{} be a \mappingtypethree{\gav}{\gav}{\egd} schema mapping.
		Let $\Pi = \dlp{\sm{M}}$.  Let $\mathbf{R}$ be the schema of $\Pi$, and let $\mathbf{R}_{\textsc{m}} = \{ S_d \mid S \in \mathbf{S}\}$, and $\mathbf{R}_{\textsc{f}} = \{ S \mid S \in \mathbf{S} \}$.
		Let $q$ be a union of conjunctive queries over $\mathbf{T}$, and let $I$ be an $\mathbf{S}$-instance.

		We first prove that a certain restriction of every $\langle\mathbf{R}_{\textsc{m}},\mathbf{R}_{\textsc{f}}\rangle$-minimal model of $\Pi \cup I$ is an exchange-repair solution.  
		Let $M$ be an $\langle\mathbf{R}_{\textsc{m}},\mathbf{R}_{\textsc{f}}\rangle$-minimal model of $\Pi \cup I$.  Then for each source atom $S(c_1,\ldots,c_n)$, $M$ satisfies $S_k(c_1,\ldots,c_n) \vee S_d(c_1,\ldots,c_n) \leftarrow S(c_1,\ldots,c_n)$ and $\leftarrow S_k(c_1,\ldots,c_n), S_d(c_1,\ldots,c_n)$ and therefore contains exactly one of $S_k(c_1,\ldots,c_n)$ or $S_d(c_1,\ldots,c_n)$.  Furthermore, for every atom $S_k(d_1,\ldots,d_n) \in M$, $M$ satisfies $S(d_1,\ldots,d_n) \leftarrow S_k(d_1,\ldots,d_n)$.  Let $I'$ be a renaming of the restriction of $M$ to the {\em kept} predicates (by removal of the $_k$ subscript), and observe that since $I$ is fixed, $I'$ is a sub-instance of $I$.  Furthermore, since $M \models \Pi \cup I$ (which contains copies of the constraints of \sm{M} over its {\em kept} predicates), $I'$ has a solution w.r.t. \sm{M}.  Finally, an appropriate renaming (by removal of the $_d$ subscript) of the restriction of $M$ to $\mathbf{R}_{\textsc{m}}$ (the {\em deleted} predicates) is equal to $I \setminus I'$, and since $M$ is a $\langle\mathbf{R}_{\textsc{m}},\mathbf{R}_{\textsc{f}}\rangle$-minimal model of $\Pi \cup I$, we have that there is no $I''$ such that $I' \subset I'' \subseteq I$ and a solution exists for $I''$ w.r.t. \sm{M}.  Therefore, $I'$ is a source repair of $I$ w.r.t. \sm{M}, and $I'$ along with the restriction of $M$ to $\mathbf{T}$ is an exchange-repair solution for $I$ w.r.t. \sm{M}.

		We now prove that for every exchange-repair solution, there exists an $\langle\mathbf{R}_{\textsc{m}},\mathbf{R}_{\textsc{f}}\rangle$-minimal model of $\Pi \cup I$.  
		Let $(I',J')$ be an exchange-repair solution for $I$ w.r.t \sm{M}.  Let $M = I \cup I'_k \cup (I \setminus I')_d \cup J'$, where $I'_k$ is $I'$ renamed over the {\em kept} predicates, and $(I \setminus I')_d$ is $I \setminus I'$ renamed over the {\em deleted} predicates.  Since $I'$ is a subset of $I$, and $I \setminus I'$ is disjoint from $I'$, we have that the rules of the forms $ S_k(x_1,\ldots,x_n) \vee S_d(x_1,\ldots,x_n) \leftarrow S(x_1,\ldots,x_n), $ and $ \leftarrow S_k(x_1,\ldots,x_n), S_d(x_1,\ldots,x_n), $ and $ S(x_1,\ldots,x_n) \leftarrow S_k(x_1,\ldots,x_n) $ are satisfied.  It also holds that $(I',J')$ satisfy \sm{M}, and therefore $M$ is a model of $\Pi \cup I$.  Finally, since there is no $I''$ such that $I' \subset I'' \subseteq I$ and a solution exists for $I''$ w.r.t. \sm{M}, $M$ is also $\langle\mathbf{R}_{\textsc{m}},\mathbf{R}_{\textsc{f}}\rangle$-minimal.
	
		Therefore, $\xrcertain(q,I,\sm{M}) = \bigcap\{q(M)\mid M$ is an $\langle\mathbf{R}_{\textsc{m}},\mathbf{R}_{\textsc{f}}\rangle$-minimal model of $\Pi\cup I\}$.
	
	\end{proof}
	
	Figure~\ref{fig:DLP} illustrates the disjunctive logic program obtained for the schema mapping from Figure~\ref{example:mapping}.
	
\begin{figure*}
\eat{
    Schema Mapping and Query:\\
	\begin{tabular}{|lll|}
		\hline
		\sigst & \hspace{-2mm}$=$ & \hspace{-2mm}$\left\{ \begin{array}{lllll} {\tt Task\_Assignments}(p,t,d) & \to & {\tt Departments}(p,d) \wedge {\tt Tasks}(p,t) \\
															{\tt Stakeholders\_old}(t,s) & \to & {\tt Stakeholders\_new}(t,s) \end{array} \right\}$ \\
		\sigt & \hspace{-2mm}$=$ & \hspace{-2mm}$\left\{ \begin{array}{l} {\tt Departments}(p,d) \wedge {\tt Departments}(p,d') \to d=d' \end{array} \right\}$\\
		\multicolumn{3}{|l|}{${\tt boss}(person,stakeholder) = \exists task ({\tt Tasks}(person, task) \wedge {\tt Stakeholders\_new}(task,stakeholder))$}\\
		\hline
	\end{tabular}
	\vspace{2mm}
	
	\vspace{2mm} Disjunctive Logic Program:\\
}
\centering\normalsize
	\fbox{\parbox{15.5cm}{
		${\tt Task\_Assignments}_k(p,t,d) \vee {\tt Task\_Assignments}_d(p,t,d) \leftarrow {\tt Task\_Assignments}(p,t,d).$ \\
		~~$\bot \leftarrow {\tt Task\_Assignments}_k(p,t,d), {\tt Task\_Assignments}_d(p,t,d).$ \\
		~~${\tt Task\_Assignments}(p,t,d) \leftarrow {\tt Task\_Assignments}_k(p,t,d).$ \\
		~\\
		~~${\tt Stakeholders\_old}_k(t,s) \vee {\tt Stakeholders\_old}_d(t,s) \leftarrow {\tt Stakeholders\_old}(t,s).$ \\
		~~$\bot \leftarrow {\tt Stakeholders\_old}_k(t,s) \wedge {\tt Stakeholders\_old}_d(t,s).$ \\
		~~${\tt Stakeholders\_old}(t,s) \leftarrow {\tt Stakeholders\_old}_k(t,s).$ \\
		~\\
		~~${\tt Departments}(p,d) \leftarrow {\tt Task\_Assignments}_k(p,t,d).$ \\
		~~${\tt Tasks}(p,t) \leftarrow {\tt Task\_Assignments}_k(p,t,d).$ \\
		~~${\tt Stakeholders\_new}(t,s) \leftarrow {\tt Stakeholders\_old}_k(t,s).$ \\
		~\\
		~~$\bot \leftarrow {\tt Departments}(p,d), {\tt Departments}(p,d'), d \neq d'.$\\
		~\\
		~~${\tt boss}(person,stakeholder) \leftarrow {\tt Tasks}(person,task), {\tt Stakeholders\_new}(task,stakeholder)$
	}}
	\caption{The disjunctive logic program over $\langle\mathbf{R}_{\textsc{m}},\mathbf{R}_{\textsc{f}}\rangle$-minimal models for the schema mapping and query given in Figure~\ref{example:mapping}.  Here, $\mathbf{R}_{\textsc{m}}~=~\{{\tt Task\_Assignments}_d,{\tt Stakeholders\_old}_d\}$ and $\mathbf{R}_{\textsc{f}}~=~\{{\tt Task\_Assignments},{\tt Stakeholders\_old}\}$ }
	\label{fig:DLP}
\end{figure*}

\section{From \mappingtypethree{\glav}{\waglav}{\egd} to \mappingtypethree{\gav}{\gav}{\egd} }\label{sec:glav-to-gav}

\newcommand{\ucqreduces}{\rightsquigarrow_\textup{UCQ}}

In this section, we prove Theorem~\ref{thm:glavtogav}, and discuss some additional literature related to this particular result.
Let $\sm{M}_1$ and $\sm{M}_2$ be schema mappings with the same source schema. We will write 
$\sm{M}_1\ucqreduces\sm{M}_2$ if for every UCQ $q$ over the target schema of $\sm{M}_1$, there is a UCQ $q'$ over the 
target schema of $\sm{M}_2$ such that for all source instance $I$, $\xrcertain(q,I,\sm{M}_1) = \xrcertain(q',I,\sm{M}_2)$.
Using this notation, 
Theorem~\ref{thm:glavtogav} states that \emph{for every \mappingtypethree{\glav}{\waglav}{\egd} schema mapping $\sm{M}$ there is a \mappingtypethree{\gav}{\gav}{\egd}  schema mapping $\sm{M}'$
with $\sm{M}\ucqreduces\sm{M}'$.}
We will in fact prove a stronger statement that applies to schema mappings defined by \emph{second-order tgds}.  Second-order tgds serve not only to strengthen the result, but also to make its proof more natural.

\subsection{Second-order TGDs}

Second-order tgds are a natural extension of tgds that was introduced in \cite{DBLP:journals/tods/FaginKPT05} 
in the context of schema mapping composition. We recall the definition.

Let ${\bf f}$ be a collection of function symbols, each having a designated arity.  A {\em simple term} is a constant or variable.  A {\em compound term} is a function applied to a list of terms, such that the arity of the function symbol is respected.  By an {\em $\bf f$-term}, we mean either a simple term, or a compound term built up from variables and/or constants using the function symbols in $\bf f$.  We will omit $\bf f$ from the notation when it is understood from context.
The \emph{depth} of a term is the maximal nesting of function symbols, with {\em depth}$(e)=0$ when $e$ is a simple term. A {\em ground term} is a term in which no variables appear.

A \emph{second-order tgd} (SO tgd) over a schema $\bf R$ is an expression of the form 
\[\sigma = \exists {\bf f} \left(\forall {\bf x_1}(\phi_1 \to \psi_1) \wedge \cdots \wedge \forall {\bf x_n}(\phi_n \to \psi_n) \right)\]
 where ${\bf f}$ is a collection of function symbols, and
 \begin{enumerate}
 \item each $\phi_i$ is a conjunction of (a) atoms $S(y_1, \ldots, y_k)$ where $S\in\bf R$ and $y_1, \ldots, y_k$ are variables from $\bf x_i$; and (b) equalities of the form $t_1=t_2$ where $t_1, t_2$ are terms over $\bf x_i$ and $\bf f$.
 \item each $\psi_i$ is a conjunction of atoms $S(t_1, \ldots, t_k)$ where $S\in \bf R$ and $t_1, \ldots, t_k$ are $\bf f$-terms built from $\bf x_i$.  
 \item each variable in $\bf x_i$ occurs in a relational atom in $\phi_i$. 
 \end{enumerate}
We say that an $\bf R$-instance $I$ satisfies $\sigma$ if there exists a collection of functions $\bf f^0$ (whose
domain and range are $\Const \cup \Nulls$) such that each ``clause'' $\forall {\bf x_i}(\phi_i \to \psi_i)$
of $\sigma$ is satisfied in $I$ where each function symbol in $\bf f$ is interpreted by the corresponding function in $\bf f^0$. 
We will write $I\models\sigma$ when this is the case, or, if we wish to make $\bf f^0$ explicit in the notation, 
 $I\models\sigma~[\bf f \mapsto \bf f^0]$.

A \emph{source-to-target} SO tgd for source schema $\bf S$ and target schema $\bf T$ is an SO tgd over $\bf S \cup \bf T$, of the above form, where each $\phi_i$ contains only relation symbols from $\bf S$ and each $\psi_i$ contains only relation symbols from $\bf T$. We note that, in \cite{DBLP:journals/tods/FaginKPT05}, only source-to-target SO tgds were considered.

An \emph{equality-free} SO tgd (\efsotgd) is an SO tgd that does not contain term equalities. 
We denote by \mappingtypethree{\sotgd}{\sotgd}{\egd} the class of schema mappings \mapping where $\Sigma_{st}$ is a set of
source-to-target SO tgds over $\bf S$ and $\bf T$, and $\Sigma_t$ is a set of SO tgds and/or egds over $\bf T$.
Other classes of schema mappings, such as \mappingtypetwo{\sotgd}{\sotgd} and 
\mappingtypetwo{\efsotgd}{\efsotgd}, are defined analogously.  Note that an important subclass of equality-free SO tgds are the {\em plain} SO tgds, introduced in~\cite{DBLP:journals/jcss/Arenas0RR13}, in which no terms contain nested functions.

It is  known that every tgd is logically equivalent to a SO tgd, which can be obtained from it by \emph{skolemization}
\cite{DBLP:journals/tods/FaginKPT05}. Although stated in the literature only for the case of source-to-target tgds \cite{DBLP:journals/tods/FaginKPT05}, the same applies to target tgds. 
Figure~\ref{fig:skolemized} shows the skolemization of the example schema mapping in Figure~\ref{fig:example}.

Moreover, if we adapt the concept of weak acyclicity to SO tgds in the
appropriate way, then every weakly acyclic set of tgds is logically equivalent to a weakly acyclic SO tgd. 

More precisely, we say that a set $\Sigma$ of SO tgds is weakly acyclic if there is no cycle in its dependency graph containing a special edge,
where the dependency graph associated to a set of SO tgds is defined as follows:
\begin{enumerate}
\item  the directed graph whose nodes are positions $(R,i)$ where $R$ is a relation symbol and $i$ is an attribute position of $R$ (as before) 
\item  there is a normal edge from $(R,i)$ to $(S,j)$ if $\Sigma$ contains a SO tgd of the form 
    \[\sigma = \exists {\bf f} \left(\forall {\bf x_1}(\phi_1 \to \psi_1) \wedge \cdots \wedge \forall {\bf x_n}(\phi_n \to \psi_n) \right)\]
    and for some $i\leq n$, $\phi_i$ contains a variable in position $(R,i)$ and $\psi_i$ contains the same variable in position $(S,j)$.
\item  there is a special edge from $(R,i)$ to $(S,j)$ if $\Sigma$ contains a SO tgd of the form 
    \[\sigma = \exists {\bf f} \left(\forall {\bf x_1}(\phi_1 \to \psi_1) \wedge \cdots \wedge \forall {\bf x_n}(\phi_n \to \psi_n) \right)\]
    and for some $i\leq n$, $\phi_i$ contains a variable in position $(R,i)$ and $\psi_i$ contains a compound term in position $(S,j)$ containing the same variable. 
\end{enumerate}
We then have: 

\begin{figure}[t]
	\centering
	\begin{tabular}{|lll|}
		\hline
		$\bf S$ & $=$ & $\left\{ \rel R \right\}$ \\
		$\bf T$ & $=$ & $\left\{ \rel T \right\}$ \\
		\sigst & $=$ & $\left\{ \begin{array}{lllll} \rel R(x,y) & \to & \exists u (\rel T(x,u) \wedge \rel T(y,u))\end{array} \right\}$ \\
		\sigt & $=$ & $\left\{ \begin{array}{l} \rel T(x,y) \wedge \rel T(x,y') \to y=y'\end{array} \right\}$\\
		\multicolumn{3}{|l|}{$\rel{q}(x,y) \defeq \exists u (\rel T(x, u) \wedge \rel T(y,u))$}\\
		\hline
	\end{tabular}
    \caption{An example schema mapping and query.}
    \label{fig:example}
\end{figure}

\begin{figure}[t]
	\centering
	\begin{tabular}{|lll|}
		\hline
		$\bf S$ & $=$ & $\left\{ \rel R \right\}$ \\
		$\bf T$ & $=$ & $\left\{ \rel T \right\}$ \\
		\sigst & $=$ & $\left\{ \exists f \left( \begin{array}{lllll}
			\rel R(x,y) & \to & \rel T(x,f(x,y)) \wedge \\
			\rel R(x,y) & \to & \rel T(y,f(x,y))
		\end{array} \right) \right\}$ \\
		\sigt & $=$ & $\left\{ \begin{array}{l} \rel T(x,y) \wedge \rel T(x,y') \to y=y'\end{array} \right\}$\\
		\multicolumn{3}{|l|}{${\tt q}(x,y) \defeq \exists u (\rel T(x, u) \wedge \rel T(y,u))$}\\
		\hline
	\end{tabular}
    \caption{Result of skolemizing the schema mapping in Figure~\ref{fig:example}.}
    \label{fig:skolemized}
\end{figure}

\begin{proposition}
Every  \mappingtypethree{\glav}{\waglav}{\egd} schema mapping is logically equivalent to a weakly acyclic \mappingtypethree{\efsotgd}{\efsotgd}{\egd} schema mapping.
\end{proposition}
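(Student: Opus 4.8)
The plan is to obtain $\sm{M}'$ from $\sm{M} = (\mathbf{S},\mathbf{T},\Sigma_{st},\Sigma_t)$ by \emph{Skolemizing all the tgds while leaving the egds alone}. Write $\Sigma_t = \Sigma_t^{\mathrm{tgd}} \cup \Sigma_t^{\mathrm{egd}}$. For each tgd $\sigma = \forall\mathbf{x}\,(\phi(\mathbf{x})\to\exists\mathbf{y}\,\psi(\mathbf{x},\mathbf{y}))$ occurring in $\Sigma_{st}$ or in $\Sigma_t^{\mathrm{tgd}}$, introduce a fresh function symbol $f^\sigma_y$ for each $y\in\mathbf{y}$, of arity the number of \emph{frontier} variables of $\sigma$ (the variables of $\mathbf{x}$ that occur in $\psi$), and form the clause $\forall\mathbf{x}\,(\phi(\mathbf{x})\to\psi^\sigma)$, where $\psi^\sigma$ is $\psi$ with every occurrence of each $y$ replaced by $f^\sigma_y$ applied to the frontier variables of $\sigma$. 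Collecting all clauses coming from $\Sigma_{st}$ under a single existential second-order quantifier block (the fresh symbols can be renamed apart) yields one source-to-target SO tgd $\hat\sigma_{st}$; likewise the clauses from $\Sigma_t^{\mathrm{tgd}}$ yield one target SO tgd $\hat\sigma_t$. Set $\sm{M}' = (\mathbf{S},\mathbf{T},\{\hat\sigma_{st}\},\{\hat\sigma_t\}\cup\Sigma_t^{\mathrm{egd}})$. Three things then have to be verified.

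First, \textbf{syntactic shape.} Because the bodies $\phi$ of the original tgds are conjunctions of relational atoms, and their heads $\psi$ are conjunctions of relational atoms, no term equalities are ever introduced, and every term appearing in $\hat\sigma_{st}$ or $\hat\sigma_t$ is either a variable or a single function symbol applied to variables; moreover each universally quantified variable still occurs in a relational atom of its clause's body (the body is unchanged). Hence $\hat\sigma_{st}$ is a source-to-target \efsotgd{} (in fact a \emph{plain} one), $\hat\sigma_t$ is an \efsotgd{} over $\mathbf{T}$, and $\sm{M}'$ is a \mappingtypethree{\efsotgd}{\efsotgd}{\egd} schema mapping.

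Second, \textbf{logical equivalence.} By the standard Skolemization fact recalled just above (stated for s-t tgds in \cite{DBLP:journals/tods/FaginKPT05} and equally valid for target tgds), each tgd $\sigma$ is logically equivalent to its Skolemized SO tgd: for one direction a witnessing interpretation of the Skolem functions composes with any homomorphism of $\phi$ into an instance to produce witnesses for $\exists\mathbf{y}$; for the other, since the atoms of $\psi$ use only frontier variables and $\mathbf{y}$, a witness for $\mathbf{y}$ can be chosen uniformly as a function of the image of the frontier, and this defines the required Skolem functions. Combined with the observation that, when the function symbols are disjoint, $\exists\mathbf{f}\exists\mathbf{g}\,(\alpha(\mathbf{f})\wedge\beta(\mathbf{g}))$ is equivalent to $\exists\mathbf{f}\,\alpha(\mathbf{f})\wedge\exists\mathbf{g}\,\beta(\mathbf{g})$, this gives $\Sigma_{st}\equiv\hat\sigma_{st}$ and $\Sigma_t^{\mathrm{tgd}}\equiv\hat\sigma_t$; since the egds are carried over verbatim, $\sm{M}'$ is logically equivalent to $\sm{M}$.

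Third --- and this is the step that really uses weak acyclicity, and the reason the Skolem functions must be given \emph{exactly} the frontier variables as arguments --- \textbf{$\hat\sigma_t$ is weakly acyclic.} I will show that the dependency graph of $\hat\sigma_t$, in the sense of the definition preceding the proposition, is \emph{identical} --- same nodes, same edges, same normal/special labels --- to the ordinary dependency graph of the set $\Sigma_t^{\mathrm{tgd}}$. The node sets agree (positions $(R,i)$ with $R\in\mathbf{T}$). For the edges I would check both inclusions clause by clause, using two facts about the Skolemization: (i) a standalone occurrence of a universally quantified variable in a head of $\hat\sigma_t$ is precisely a standalone occurrence of that variable in the corresponding original head, since Skolemization rewrites only occurrences of existential variables, so the \emph{normal} edges match (in both graphs they emanate only from positions of variables that also occur in the head); and (ii) a compound term in a head of $\hat\sigma_t$ is precisely a term $f^\sigma_y(\text{frontier of }\sigma)$ put where an existential variable $y$ of $\sigma$ occurred, and the set of variables it contains is exactly the frontier of $\sigma$ --- which is exactly the side condition ``$x$ occurs in $\psi$'' in the special-edge clause of the ordinary dependency graph --- so the \emph{special} edges match as well. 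Consequently one graph has a cycle through a special edge iff the other does; since $\Sigma_t^{\mathrm{tgd}}$ is weakly acyclic, so is $\hat\sigma_t$. (No acyclicity condition is imposed on the s-t side, so $\hat\sigma_{st}$ needs no such check.) I expect the bookkeeping in this last step --- in particular keeping the Skolem functions' argument lists equal to the frontier, so that no spurious special edges are introduced that could close a new cycle --- to be the only delicate point; everything else is routine.
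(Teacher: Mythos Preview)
Your proposal is correct and follows essentially the same approach as the paper: Skolemize the tgds to obtain equality-free SO tgds, keep the egds, and argue that the resulting schema mapping has the same dependency graph as the original (hence inherits weak acyclicity). The paper dispatches the last point in one sentence (``it is easy to see that $\sm{M}$ and $\sm{M}'$ have the same dependency graph''), whereas you spell out the edge-by-edge correspondence and, in particular, make explicit that the Skolem functions must take exactly the \emph{frontier} variables as arguments so that the special edges of the SO-tgd dependency graph coincide with those of the original one --- a point the paper leaves implicit but which is indeed needed for the graphs to be literally equal.
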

Indeed, if $\sm{M}$ is a \mappingtypethree{\glav}{\waglav}{\egd} schema mapping, and $\sm{M}'$ is the \mappingtypethree{\efsotgd}{\efsotgd}{\egd}  schema mapping obtained from $\sm{M}$ by skolemization,
then $\sm{M}$ and $\sm{M'}$ are logically equivalent. Moreover, it is easy to see that $\sm{M}$ and $\sm{M}'$ have the same dependency graph, and, therefore, 
$\sm{M}'$ is weakly acyclic. 



In the remainder of this section, we will establish:

\begin{theorem}\label{thm:sotgdtogav}
For every weakly acyclic \mappingtypethree{\sotgd}{\sotgd}{\egd} schema mapping $\sm M$ there is a 
\mappingtypethree{\gav}{\gav}{\egd} schema mapping $\sm M'$ such that $\sm M\ucqreduces \sm M'$.
\end{theorem}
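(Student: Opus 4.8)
The plan is to eliminate the Skolem functions by \emph{promoting the terms manufactured during the chase to ordinary data}. We may take $\sm{M}=({\bf S},{\bf T},\Sigma_{st},\Sigma_t)$ to be given by a set of source-to-target SO tgds, a weakly acyclic set of target SO tgds, and target egds (this subsumes the \mappingtypethree{\glav}{\waglav}{\egd} case by skolemization, using the Proposition above). The key consequence of weak acyclicity is that there is a uniform bound on the nesting depth of the function terms that any chase of any source instance with $\sm{M}$ can produce; hence the set $\mathcal{T}$ of \emph{term types} --- terms built from the function symbols of $\sm{M}$ with distinct placeholder variables at the leaves, up to that depth --- is finite, and for each target position $(T,i)$ one can compute the subset of $\mathcal{T}$ whose instances may occur there. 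I would then represent a manufactured element by the pair consisting of its term type $\tau$ and the tuple of source constants at the leaves of $\tau$, with a distinguished type $\tau_0$ standing for ``a plain value''. The target schema of $\sm{M}'$ consists of: for every $T\in{\bf T}$ of arity $m$ and every admissible pattern $(\tau_1,\dots,\tau_m)$, a relation $T_{\tau_1,\dots,\tau_m}$ whose arity is the total number of leaves of $\tau_1,\dots,\tau_m$; together with a family of auxiliary equality-tracking relations introduced below.

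The dependencies are translated as follows. In a clause $\forall{\bf x}(\phi\to\psi)$ of a source-to-target SO tgd, $\phi$ is over ${\bf S}$, so every variable of ${\bf x}$ is bound to a source constant and every head atom $T(t_1,\dots,t_m)$ has a fixed type pattern; I replace it by the atom $T_{\tau(t_1),\dots,\tau(t_m)}$ over the source variables occurring at the leaves of the $t_k$, obtaining a \gav{} s-t tgd. In a clause $\forall{\bf x}(\phi\to\psi)$ of a target SO tgd, a body variable may now denote a manufactured element, so I case-split over all assignments of a type in $\mathcal{T}$ to each body variable (and over all ways the repeated variables and body term-equalities of $\phi$ are unified); for each coherent assignment I rewrite the body atoms over the $T_{\bar\tau}$ relations, propagate the chosen types through the head terms --- which, by weak acyclicity, still lie in $\mathcal{T}$ --- and emit the resulting \gav{} target tgd(s), discarding incoherent assignments. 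Since $\mathcal{T}$ is finite, this produces finitely many \gav{} target tgds.

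The delicate step is the target egds, since the $\sm{M}$-chase may use an egd to identify a manufactured element with a constant, or two manufactured elements with each other, and it is a cascade of such identifications that can ultimately force two distinct constants to be equated, i.e.\ make the chase fail. To mirror this, I would add, for the relevant type pairs, auxiliary relations recording that two coded elements (resp.\ a coded element and a constant) have been identified, together with \gav{} target tgds implementing reflexivity, symmetry, transitivity and congruence with respect to all the $T_{\bar\tau}$ relations (these are full tgds, hence \gav{} after the usual splitting), so that identifying a coded element with a constant $c$ forces the corresponding ``plain'' $T_{\dots\tau_0\dots}(\dots c\dots)$ facts to hold. Each target egd $\phi\to x_k=x_l$ is then case-split exactly as above: under a given type assignment it either populates the appropriate equality-tracking relation, or (when both sides already have plain type) becomes an ordinary egd over leaf variables, or (when the types are incompatible, so the equality is unsatisfiable in the term structure) becomes a denial, which is realized within \mappingtypethree{\gav}{\gav}{\egd} by a standard denial-constraint gadget. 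A final egd of the form ``this coded element equals $c$ and also equals $c'$, hence $c=c'$'' reproduces precisely the configuration in which the $\sm{M}$-chase fails.

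To conclude, I would prove by induction on the chase the key correspondence: for every $I'\subseteq I$, the chase of $I'$ with $\sm{M}$ succeeds iff the chase of $I'$ with $\sm{M}'$ succeeds, and when it does, the canonical universal solution of $\sm{M}'$ on $I'$ is the encoding of the canonical universal solution of $\sm{M}$ on $I'$, modulo the identifications recorded in the auxiliary relations. This yields that the source-repairs of $I$ with respect to $\sm{M}$ and with respect to $\sm{M}'$ coincide, hence a correspondence between XR-solutions on the two sides. For the rewriting, given a UCQ $q$ over ${\bf T}$ I take $q'$ to be the union, over all type assignments to the variables of each disjunct of $q$, of the query obtained by replacing each atom $T(\bar z)$ by the matching $T_{\bar\tau}$-atom over the corresponding leaf variables (inserting the auxiliary equality relations to account for collapsed elements), discarding any disjunct in which an answer variable would be forced to a compound type --- which loses nothing, since certain answers consist only of constants. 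One then checks $\certain(q,I',\sm{M})=\certain(q',I',\sm{M}')$ for every source-repair $I'$, and therefore $\xrcertain(q,I,\sm{M})=\xrcertain(q',I,\sm{M}')$. The main obstacle is this last bundle of verifications around the egds: rigorously extracting the term-depth bound from weak acyclicity, and showing that the auxiliary equality machinery faithfully reproduces the $\sm{M}$-chase --- its failures in particular --- without introducing spurious inconsistencies or destroying solutions.
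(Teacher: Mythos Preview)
Your proposal is essentially correct and follows the same high-level strategy as the paper: encode each null by its term shape (what the paper calls a \emph{skeleton}) together with the tuple of leaf constants, create a relation $T_{\tau_1,\ldots,\tau_m}$ for each admissible pattern, and track identifications forced by egds in an auxiliary equality predicate. The paper, however, decomposes the construction more cleanly into two independent steps. First, \emph{equality singularization} replaces every egd and every term equality by a tgd that populates a fresh binary relation $\mathrm{Eq}$, and rewrites all bodies (of tgds and of queries) so that repeated variables are linked through $\mathrm{Eq}$; this turns $\sm{M}$ into a weakly acyclic \mappingtypetwo{\efsotgd}{\efsotgd} mapping that admits \emph{free} universal solutions of bounded rank. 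Second, \emph{skeleton rewriting} is applied to this equality-free mapping, which is now straightforward since no egds interfere. Only at the very end is a single egd $\mathrm{Eq}_{\bullet,\bullet}(x,y)\to x=y$ added back, reproducing exactly the failure condition of the original chase. Your congruence rules for the $T_{\bar\tau}$ relations are not needed in the paper's version because the singularized bodies already join through $\mathrm{Eq}$.

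One point in your egd case analysis is off: the ``denial when the two types are incompatible'' branch is not sound. In the original chase, an egd that equates two nulls produced by different function symbols (say $f(\bar a)$ and $g(\bar b)$) does \emph{not} fail; it simply unifies the two nulls. Emitting a denial in that case would create spurious inconsistencies and shrink the set of source instances that have solutions, hence alter the source repairs. The correct behaviour --- and what the paper does --- is always to populate the appropriate $\mathrm{Eq}_{\tau,\tau'}$ relation, and let the single egd on $\mathrm{Eq}_{\bullet,\bullet}$ detect the only genuine failure, namely two distinct \emph{constants} being equated. If you drop the denial branch (and fold your ``ordinary egd'' branch into populating $\mathrm{Eq}_{\bullet,\bullet}$), your construction coincides with the paper's.
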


The proof borrows ideas from previous literature, and we discuss  relevant related work at the end of the section.


\subsection{Eliminating Equalities to Establish Freeness}
In this section, we will rewrite our schema mapping to eliminate egds as well as equality conditions in SO tgds. This allows us to work with solutions in which there is a one-to-one correspondence between ground terms (of any depth) and their values.  This property, called {\em freeness}, which we define below, is used in Section~\ref{sec:skeleton_step}.

For simplicity we first restrict attention to \mappingtypethree{\efsotgd}{\sotgd}{\egd} schema mappings.

\begin{definition}[Equality Singularization]
	Fix a fresh binary relation symbol Eq.

	\begin{itemize}
	\item 
	The \emph{equality singularization} of a conjunctive query $q({\bf x}) = \exists {\bf y} \phi({\bf x},{\bf y})$, denoted by 
    $q^{\rm Eq}({\bf x})$, is the conjunctive query $\exists {\bf y}{\bf z} \phi'({\bf x},{\bf y},{\bf z})$ obtained from $q$ as follows:
	whenever a variable $u$  (free or quantified) occurs more than once in $\phi$, we replace each occurence other than the first occurrence
	by a fresh distinct variable $z$ and we add the atom $Eq(u,z)$.

	\item 
	The \emph{equality singularization} of an egd 
	    \[\sigma = \forall {\bf x}(\phi\to x_i=x_j)\]
	is the GAV tgd
	    \[\sigma^{Eq} = \forall {\bf x}{\bf z}(\phi'\to Eq(x_i,x_j))\]
	 where $\phi^{Eq} = \exists {\bf z}\phi'$

	\item 
	The \emph{equality singularization} of an SO tgd 
	    \[\sigma = \exists {\bf f} \bigwedge_{i=1\ldots n}\left(\forall {\bf x}_i(\phi_i\land\alpha_i \to \psi_i)\right)\]
	 (where each $\phi_i$ is a conjunction of relational atoms and each $\alpha_i$ is a conjunction of equalities) is the equality-free SO tgd 

	    \[\sigma' = \exists {\bf f} \bigwedge_{i=1\ldots n}\left(\forall {\bf x}_i {\bf z}_i (\phi_1'\land\alpha_i' \to \psi_i) \right)\]

	 where $\phi_i^{Eq} = \exists {\bf z}_i\phi'$ and $\alpha'_i$ is obtained from $\alpha_i$ by replacing each equality $s=t$ by 
	 $Eq(s,t)$.

	 \item 
	The \emph{equality singularization} of 	a \mappingtypethree{\efsotgd}{\sotgd}{\egd} schema mapping  \mapping is
	the \mappingtypetwo{\efsotgd}{\efsotgd} schema mapping 
\[{\cal M}^{\rm Eq} = ({\bf S},{\bf T} \cup \{ \rel{Eq} \}, \sigst, \{ \sigma^{\rm Eq} \mid \sigma \in \sigt \} \cup \text{eqAx}({\bf T}))\]
	where eqAx$({\bf T})$ is the set of (full) tgds of the form $\rel T(x_1,\ldots,x_n) \to \rel{Eq}(x_1,x_1) \wedge \cdots \wedge \rel{Eq}(x_n,x_n)$ where $\rel T$ is a relation in $\bf T$, along with the tgds $\rel{Eq}(x_1,x_2) \to \rel{Eq}(x_2,x_1)$ and $\rel{Eq}(x_1,x_2) \wedge \rel{Eq}(x_2,x_3) \to \rel{Eq}(x_1,x_3)$.
\end{itemize}
\end{definition}

\begin{figure}[t]
	\centering
	\begin{tabular}{|lll|}
		\hline
		$\bf S$ & $=$ & $\left\{ \rel R \right\}$ \\
		$\bf T$ & $=$ & $\left\{ \rel T, \rel{Eq} \right\}$ \\
		\sigst & $=$ & $\left\{ \exists f \left( \begin{array}{lllll}
			\rel R(x,y) & \to & \rel T(x,f(x,y)) \wedge \\
			\rel R(x,y) & \to & \rel T(y,f(x,y))
		\end{array} \right) \right\}$ \\
		$\sigt^{\rm Eq}$ & $=$ & $\left\{ \begin{array}{l}
			\rel T(x,y) \wedge \rel{Eq}(x,x') \wedge \rel T(x',y') \to \rel{Eq}(y,y')\\
			\rel T(x,y) \to \rel{Eq}(x,x)\\
			\rel T(x,y) \to \rel{Eq}(y,y)\\
			\rel{Eq}(x_1,x_2) \to \rel{Eq}(x_2,x_1)\\
			\rel{Eq}(x_1,x_2) \wedge \rel{Eq}(x_2,x_3) \to \rel{Eq}(x_1,x_3)\\
		\end{array} \right\}$\\
		\multicolumn{3}{|l|}{$\rel{q}^{\rm Eq}(x,y) \defeq \exists u,u' (\rel T(x, u) \wedge \rel{Eq}(u,u') \wedge \rel T(y,u'))$}\\
		\hline
	\end{tabular}
    \caption{Equality singularization of the schema mapping and query from Figure~\ref{fig:skolemized}.}
    \label{fig:singularized}
\end{figure}

Figure~\ref{fig:singularized} shows equality singularization in action.  

\begin{proposition}\label{prop:singularization}
	Let \mapping be an \mappingtypethree{\sotgd}{\sotgd}{\egd} schema mapping and let
	 ${\cal M}^{\rm Eq}$ be its equality singularization.  
	For every $\bf S$-instance $I$,
	\begin{enumerate}
	\item
	$I$ has a solution w.r.t. $\sm{M}$  if and only if $I$ has a solution $J'$ w.r.t. ${\cal M}^{\rm Eq}$ such that there is no pair of distinct constants $a,b$ where $J' \models \rel{Eq}(a,b)$. 
	\item If $I$ has a solution w.r.t.~$\sm{M}$, then for every UCQ $q$ over $\bf T$, 
	$\certain(q,I,\sm{M})= \certain(q^{\rm Eq},I,\sm{M}^{\rm Eq})$.
	\end{enumerate}
\end{proposition}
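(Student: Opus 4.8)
My plan is to set up a correspondence between the solutions of $\sm{M}$ and those of $\sm{M}^{\rm Eq}$ via two operations. The first is \emph{diagonal expansion}: given a $\mathbf{T}$-instance $J$, let $J^{+}=J\cup\{\rel{Eq}(a,a)\mid a\in\mathrm{adom}(J)\}$. The second is \emph{$\rel{Eq}$-collapse}: given a solution $J'$ of $\sm{M}^{\rm Eq}$ in which no two distinct constants are $\rel{Eq}$-related, the axioms $\mathrm{eqAx}(\mathbf{T})$ guarantee that, on the elements occurring in $\rel{Eq}$-atoms, $\rel{Eq}^{J'}$ is already an equivalence relation that is reflexive on every element of $\restr{J'}{\mathbf{T}}$; collapsing $\restr{J'}{\mathbf{T}}$ by this relation, using a quotient map $h$ that is the identity on constants and sends a class to its (unique) constant when it has one, yields a legitimate $\mathbf{T}$-instance $J'/{\sim}$. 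Throughout, whenever such a target-side map $h$ is applied I transport the witnessing functions of an SO tgd by post-composing with $h$ (and, where a preimage is needed, pre-composing with a section of $h$); checking that this is consistent requires only an induction on the nesting depth of $\mathbf{f}$-terms and is routine. The engine of the proof is the observation that, for $J'\models\mathrm{eqAx}(\mathbf{T})$ and $\sigma$ an egd or SO tgd over $\mathbf{T}$, a match of the singularized body of $\sigma^{\rm Eq}$ in $J'$ induces a match of the body of $\sigma$ in the collapse $J'/{\sim}$ (the ``copy'' variables of the singularization, being $\rel{Eq}$-linked to their originals, land in the same class), and conversely a match of the body of $\sigma$ in a $\mathbf{T}$-instance $K$ lifts to a match of the singularized body in $K^{+}$.

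\emph{Part 1.} ($\Rightarrow$) If $J$ is a solution for $I$ w.r.t.\ $\sm{M}$ with witnessing functions $\mathbf{f}^0$, then $J^{+}$ is a solution for $I$ w.r.t.\ $\sm{M}^{\rm Eq}$ with the same functions: $\sigst$ is untouched and has its head over $\mathbf{T}$, so $(I,J^{+})$ still satisfies $\sigst$; $\mathrm{eqAx}(\mathbf{T})$ holds because the diagonal is reflexive on $\mathrm{adom}(J)$, symmetric and transitive; and for each $\sigma\in\sigt$, a match of the body of $\sigma^{\rm Eq}$ in $J^{+}$ forces all copy variables to equal their originals (since $\rel{Eq}^{J^{+}}$ is the identity), hence is a match of the body of $\sigma$ in $J$, so $J\models\sigma$ delivers the required $\rel{Eq}$-atom (egd case) or head atoms (SO tgd case). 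Clearly $\rel{Eq}^{J^{+}}$ relates no two distinct constants. ($\Leftarrow$) If $J'$ is such a ``good'' solution and $J=J'/{\sim}$ with collapse map $h$, then $(I,J)$ satisfies $\sigst$ because $h\colon\restr{J'}{\mathbf{T}}\to J$ is a homomorphism that is the identity on constants and s-t SO tgds are preserved under target-side homomorphisms; and $J\models\sigt$ because any match of the body of $\sigma\in\sigt$ in $J$ lifts (read off the components of preimage atoms for the copy variables) to a match of the body of $\sigma^{\rm Eq}$ in $J'$, whence $J'\models\sigma^{\rm Eq}$ forces $h(x_i)=h(x_j)$, i.e.\ the equality, to hold in $J$ (resp.\ forces the head atoms of the SO tgd). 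Hence a solution for $I$ w.r.t.\ $\sm{M}$ exists.

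\emph{Part 2.} Assume $I$ has a solution w.r.t.\ $\sm{M}$; as this is the setting of interest I may take $\sm{M}$ to be weakly acyclic, so that $U:=chase(I,\sm{M})$ exists and $\certain(q,I,\sm{M})=q\!\!\downarrow(U)$ for every UCQ $q$. Since $\sm{M}^{\rm Eq}$ contains no egds, $U':=chase(I,\sm{M}^{\rm Eq})$ always exists, and $\certain(q^{\rm Eq},I,\sm{M}^{\rm Eq})=q^{\rm Eq}\!\!\downarrow(U')$. The crux is to show $U'/{\sim}\cong U$: the chase of $\sm{M}^{\rm Eq}$ can be run in lockstep with the chase of $\sm{M}$, each egd step of the latter that identifies two elements being mirrored by applications of a singularized egd together with the $\mathrm{eqAx}$ closure that add the corresponding $\rel{Eq}$-fact; in particular an egd step that would fail (equating distinct constants) is mirrored by the derivation of an $\rel{Eq}$-fact on distinct constants, so, since by hypothesis $chase(I,\sm{M})$ does not fail, $U'$ has no such fact and is ``good'', and $U'/{\sim}$ is (isomorphic to) $U$. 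Combined with universality of $U'$ among solutions of $\sm{M}^{\rm Eq}$ and Part 1, this also re-derives the ``if and only if'' of the statement. Given $U'/{\sim}\cong U$ with collapse map the identity on constants, the match correspondence above restricts to a bijection between the answers of $q$ on $U$ and the answers of $q^{\rm Eq}$ on $U'$ that consist only of constants; here the delicate point is that a source constant occurring in a $\mathbf{T}$-atom of $U$ must still be \emph{seen} by $q^{\rm Eq}$ on $U'$ at the right position, which one tracks through the lockstep analysis (such a constant enters $\mathbf{T}$-atoms only via $\sigst$-steps, common to both chases). Therefore $\certain(q,I,\sm{M})=q\!\!\downarrow(U)=q^{\rm Eq}\!\!\downarrow(U')=\certain(q^{\rm Eq},I,\sm{M}^{\rm Eq})$.

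The main obstacle is exactly this lockstep analysis in Part 2. An egd step of the $\sm{M}$-chase performs an identification, whereas in the $\sm{M}^{\rm Eq}$-chase it is only \emph{recorded} via $\rel{Eq}$; so the correspondence between the two runs must be maintained modulo the generated equivalence, one must verify that nothing in the $\sm{M}^{\rm Eq}$-chase is forced \emph{beyond} this mirror image, and one must check that the resulting bijection on answer tuples correctly tracks which values are constants rather than nulls (so that null-free answers correspond to null-free answers). The transport of witnessing functions for the SO tgds across the collapse and expansion maps is a secondary, routine-but-tedious, source of bookkeeping throughout.
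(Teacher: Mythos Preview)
Your Part~1 is the paper's proof: the diagonal expansion $J\mapsto J^{+}$ for $(\Rightarrow)$ and the $\rel{Eq}$-collapse $J'\mapsto J'/{\sim}$ for $(\Leftarrow)$, with the same choice of representative per class and the same post-composition of witnessing functions with the quotient map $\pi$.

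For Part~2 you take a genuinely different and more fragile route. The paper does not invoke the chase at all; it extracts Part~2 directly from the solution-level answer correspondence already established while proving Part~1. The $(\Rightarrow)$-construction gives, for every $\sm{M}$-solution $J$, an $\sm{M}^{\rm Eq}$-solution $J^{+}$ with $q{\downarrow}(J)=q^{\rm Eq}{\downarrow}(J^{+})$; the $(\Leftarrow)$-construction gives, for every good $\sm{M}^{\rm Eq}$-solution $J'$, an $\sm{M}$-solution $J'/{\sim}$ with $q^{\rm Eq}{\downarrow}(J')=q{\downarrow}(J'/{\sim})$. Intersecting over all solutions on each side yields the two inclusions of Part~2, with no appeal to canonical instances, termination, or weak acyclicity.

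Your approach instead (i) imports a weak-acyclicity hypothesis that the proposition does not carry --- ``this is the setting of interest'' is not a proof step; (ii) asserts that $chase(I,\sm{M}^{\rm Eq})$ ``always exists'' because $\sm{M}^{\rm Eq}$ has no egds, which is incorrect: absence of egds rules out chase \emph{failure}, not non-termination (and the singularization injects the $\rel{Eq}$-transitivity axiom and $\rel{Eq}$-atoms into tgd bodies, so even deducing weak acyclicity of $\sm{M}^{\rm Eq}$ from that of $\sm{M}$ requires a separate argument --- the paper does this later, as Proposition~\ref{prop:free-universal}, and only under weak acyclicity of $\sm{M}$); and (iii) replaces a short intersection argument by a lockstep-chase bisimulation that you yourself flag as ``the main obstacle''. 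Your match-correspondence engine is precisely what the paper uses; you simply do not need to route it through a chase analysis to obtain Part~2.
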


\begin{proof}{}

[$\Rightarrow$]
	Let $J$ be any $\bf T$-instance that is a solution for $I$ with respect to $\sm{M}$. 
	Take $J'$ to be the ${\bf T}\cup\{Eq\}$-instance that extends $J$ with all facts of
	the form $Eq(a,a)$ with $a\in adom(J)$. It is easy to see that $J'$ is a solution
	for $I$ with respect to $\sm{M}^{\rm Eq}$, and that, for all UCQs $q$, we have that
	$q{\downarrow(J)} = q^{\rm Eq}{\downarrow}(J')$. Moreover, it is 
	immediate from the construction of $J'$ that there is no pair of distinct constants
	$a, b$ where $J'\models Eq(a,b)$.

[$\Leftarrow$]
	Let $\bf f$ be the collection of function symbols appearing in ${\cal M}^{\rm Eq}$.
	Let $J$ be a ${\bf T} \cup \{\rel{Eq}\}$-instance that is a solution for $I$ with respect to $\sm{M}^{\rm Eq}$, such that there is no pair of distinct constants $a,b$ where $J \models \rel{Eq}(a,b)$. 
	Note that $\rel{Eq}$ is an equivalence relation and that each equivalence class contains at most one constant (but possibly many null values).
	Let $\bf f^0$ be a witnessing collection of functions, such that $(I,J) \models {\cal M}^{\rm Eq}~[{\bf f}\mapsto {\bf f^0}]$. 
	 We will construct a ${\bf T}$-instance $J'$ and a collection $\bf f^1$ of functions, as follows:
	\begin{itemize}
		\item For every $\rel{Eq}$-equivalence class, choose a single representative member. 
				If an equivalence class contains a constant, we use that constant as the representative member.
				For every value $u\in adom(J)$, denote by $\pi(u)$ the representative member of the $\rel{Eq}$-equivalence class to which $u$ belongs.
		\item $J'$ contains, for every fact $\rel T(v_1,\ldots, v_n)$ of $J$ (where $\rel T \in \bf T$), 
		the corresponding fact $\rel T(\pi(v_1),\ldots,\pi(v_n))$ 
		\item $\bf f^1$ contains, for each function $f$ in $\bf f^0$, the corresponding function $f'$ given by $f'({\bf u})=\pi(f({\bf u}))$.  
	\end{itemize}
	By construction, we have that, for any image $q^{\rm Eq}({\bf a})$ in $J$ of the equality singularization of a conjunctive query $q({\bf x})$, we have an image $q({\bf a})$ in $J'$, and vice versa. This tells us both that $J'$ is a solution for $I$ w.r.t.~\sm{M} and that for any UCQ $q$ over $\bf T$, we have $q{\downarrow}(J') = q^{\rm Eq}{\downarrow}(J)$.    Additionally, since each $\rel{Eq}$-class is represented by a single member in $J'$, we have that,
		for each egd in $\sigma\in\sigt$, the fact that $J$ satisfies $\sigma^\rel{Eq}$
		 implies that $J'$ satisfies $\sigma$. 
\end{proof}

The importance of Proposition~\ref{prop:singularization} comes from the following observation.
Consider any \mappingtypetwo{\sotgd}{\sotgd} schema mapping \mapping. 
Let ${\bf f}$ be the collection of all function symbols occurring in SO tgds in $\Sigma_{st}\cup\Sigma_t$.
A solution $J$  for a source instance $I$ with respect to $\sm{M}$ is 
said to be a \emph{free} solution if there is a collection of functions ${\bf f^0}$ such that
$(I,J)\models\Sigma_{st}\cup\Sigma_t~[{\bf f}\mapsto {\bf f^0}]$, and such that
each function in ${\bf f^0}$
   is injective and the functions all have mutually disjoint ranges.
Equivalently, in a free solution, each value in $adom(J)$ is the denotation of 
exactly one ground term.  If, furthermore, we have that each value in $adom(J)$ is the denotation of a (unique) term of depth $k$, then we say that $J$ is a \emph{free solution of rank $k$.}



	
	

\begin{proposition}\label{prop:free-universal}
Let $\sm{M}$ be the equality singularization of a weakly acyclic \mappingtypetwo{\efsotgd}{\sotgd} schema mapping.
There is a natural number $k\geq 0$ such that
every source instance $I$ has a free universal solution $J$ of rank $k$.
\end{proposition}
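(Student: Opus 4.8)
The plan is to construct the required free universal solution explicitly, as the result of a \emph{free chase} of $\sm{M}$ in which every existentially quantified function symbol is interpreted by the corresponding term-building operation. Let $\sm{N}$ denote the weakly acyclic \mappingtypetwo{\efsotgd}{\sotgd} schema mapping of which $\sm{M}$ is the equality singularization, and let ${\bf f}$ be the set of function symbols occurring in $\sm{M}$ (equivalently, in $\sm{N}$: singularization neither adds nor removes function symbols). First observe that $\sm{M}$ contains no egds and no term equalities, since singularization turns egds and the equality conditions of SO tgds into $\rel{Eq}$-atoms and adds only the full tgds $\text{eqAx}$ of the definition. Now, for a source instance $I$, let $\mathcal{T}_I$ be the set of ground ${\bf f}$-terms over $adom(I)$, and let ${\bf f}^0$ interpret each $m$-ary $f\in{\bf f}$ by $(t_1,\dots,t_m)\mapsto f(t_1,\dots,t_m)\in\mathcal{T}_I$. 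I would take $J$ to be the least instance over $adom(I)\cup\mathcal{T}_I$ that is closed, under the interpretation ${\bf f}^0$, under every clause $\forall{\bf x}_i(\phi_i\to\psi_i)$ of every dependency of $\sm{M}$: whenever an assignment $g$ of ${\bf x}_i$ into $adom(I)\cup\mathcal{T}_I$ satisfies $g(\phi_i)\subseteq I\cup J$, one requires $g(\psi_i)\subseteq J$ (occurrences of the same function symbol across different clauses of a single SO tgd being automatically reconciled by the free interpretation). This is simply the ordinary chase of $I$ with $\sm{M}$, carried out with the function symbols interpreted freely; because $\sm{M}$ has no egds and no SO-tgd equality conditions, no step ever merges or renames an element, so $J$ is well defined.

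The main step --- and the one I expect to be the real obstacle --- is to show that this free chase terminates. Since ${\bf f}$ and $adom(I)$ are finite, it suffices to bound the depth of the ground terms occurring in $J$ by a constant depending only on $\sm{M}$. This is where weak acyclicity is used, via (an adaptation to SO tgds of) the rank argument of Fagin et al.~\cite{DBLP:journals/tcs/FaginKMP05}: a term $f(\bar t)$ can enter a position $(S,j)$ only as an instance of a compound term occurring in the head of a source-to-target or target SO-tgd clause, and the variable of that term flows from an occurrence in a relational atom of the body over ${\bf S}\cup{\bf T}$; tracing such flows bounds the depth of any term at $(S,j)$ by $c\cdot rank(S,j)$, where $c$ is the largest nesting depth of a term in a head of $\sm{M}$. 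The care required here is that singularization can destroy weak acyclicity of the dependency graph as a whole: the reflexivity axioms $\rel T(\bar x)\to\rel{Eq}(x_i,x_i)$ together with the $\rel{Eq}$-atoms inserted into bodies can create cycles through special edges. However, $\rel{Eq}$-facts never carry function terms and are never used to manufacture a function term of greater depth, so the only edges relevant to the depth count are those among the non-$\rel{Eq}$ positions; and the restriction of the dependency graph of $\sm{M}$ to those positions is a subgraph of the (weakly acyclic) dependency graph of $\sm{N}$, which has no $\rel{Eq}$ positions at all. Hence every term in $J$ has depth at most $k:=c\cdot rank(\sm{N})$, so the free chase terminates and $J$ is finite.

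It then remains to check that $J$ is a free universal solution of rank $k$. By the closure conditions, $(I,J)$ satisfies every dependency of $\sm{M}$ under the interpretation ${\bf f}^0$, so $J$ is a solution for $I$ w.r.t.\ $\sm{M}$. It is \emph{free}: ${\bf f}^0$ interprets distinct function symbols by injective operations with pairwise disjoint ranges, all disjoint from $adom(I)$, so each element of $adom(J)$ is the denotation of a unique ground ${\bf f}$-term, which by the previous step has depth at most $k$; thus $J$ has rank $k$. For \emph{universality}, let $J^*$ be any solution for $I$ w.r.t.\ $\sm{M}$, witnessed by a collection of functions ${\bf f}^*$, and define $h\colon adom(J)\to adom(J^*)$ by $h(a)=a$ for $a\in adom(I)$ and $h(f(t_1,\dots,t_m))=f^*(h(t_1),\dots,h(t_m))$, i.e.\ $h$ evaluates a ground term in $J^*$ using ${\bf f}^*$. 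A straightforward induction on the construction of $J$ shows that $h$ is a homomorphism: each fact added to $J$ is the image $g(\psi_i)$ of the head of a clause whose body was matched in $I\cup J$, and by the induction hypothesis $h\circ g$ maps the body atoms into $I\cup J^*$; since $(I,J^*)$ satisfies that clause under ${\bf f}^*$, the image $h(g(\psi_i))$ lies in $J^*$, using $h(g(f(\bar s)))=f^*(h(g(\bar s)))$. As $h$ is the identity on $adom(I)$, this shows $J$ is universal, which would complete the proof.
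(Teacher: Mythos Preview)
Your approach is essentially the paper's own: construct the \emph{free chase} (the paper picks ``an arbitrary collection of injective and mutually range-disjoint functions'' and chases; you build ground terms syntactically, which is the same thing up to renaming of nulls), bound the depth of terms by a rank argument on the dependency graph, and verify universality by evaluating ground terms in an arbitrary solution. The universality argument you give is exactly right.

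Where your proof is genuinely more careful than the paper's sketch is the weak-acyclicity issue. The paper's sketch says ``associate to each position $(R,i)$ a rank, namely the maximal number of special edges on an incoming path \ldots'' implicitly in the dependency graph of $\sm{M}$, but you are right to worry that singularization can break weak acyclicity of $\sm{M}$ as stated. For instance, if $\sm{N}$ has the target constraints $T(x)\to S(f(x))$ and $U(x)\land U(x)\to T(x)$, then after singularization the reflexivity axiom $S(x)\to\rel{Eq}(x,x)$ together with the singularized body $U(x)\land U(z)\land\rel{Eq}(x,z)\to T(x)$ yields the cycle $(T,1)\to_{sp}(S,1)\to(\rel{Eq},1)\to(T,1)$, so the rank of $(T,1)$ in $\sm{M}$ is infinite. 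Your fix---bounding depth only through non-$\rel{Eq}$ positions and observing that this restricted graph is a subgraph of $\sm{N}$'s---closes the gap. The crucial fact you use (and should state explicitly) is that \emph{every variable occurring in the head of a clause of $\sm{M}$ already occurs in a non-$\rel{Eq}$ relational body atom}: original variables retain their first body occurrence, and the fresh $z$'s introduced by singularization never reach the head.

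One phrasing to tighten: ``$\rel{Eq}$-facts never carry function terms'' is not literally true (if $S(f(a))\in J$ then reflexivity gives $\rel{Eq}(f(a),f(a))$). What you need---and what you effectively use---is that $\rel{Eq}$ atoms \emph{in rule heads} always have simple-variable arguments, so there are no special edges \emph{into} $\rel{Eq}$ positions, and (as noted above) any edge \emph{out of} an $\rel{Eq}$ position into a non-$\rel{Eq}$ position is paralleled by a direct edge from a non-$\rel{Eq}$ body position. With that correction, the argument is complete.
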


\begin{proof} (sketch)
 	Let ${\bf f^0}$ be an arbitrary collection of injective and mutually range-disjoint functions. 
 	Let $J$ be the result of chasing $I$ with the SO tgds of $\sm{M}$ using these functions. 
 	A priori, $J$ is potentially infinite.
 	However, we can show that $J$ is always finite, moreover, of finite rank. 
 	 This is proved by induction: we associate to each position $(R,i)$ (where $R$ is a relation symbol and $i$ an attribute of $R$) a rank, namely
 	 the maximal number of special edges on an incoming path to $(R,i)$ in the dependency graph \emph{times} the maximal depth of a term occurring in the right-hand side of an SO tgd.  Then, we can prove
 	 by  a straightforward induction on $k$ that for all positions $(R,i)$ of rank $k$, each value in position $(R,i)$ is the denotation
 	 of a term of depth at most $k$.  
\end{proof}

It is not hard to see that the same does not hold in the presence of egds.

The above definition of $\sm{M}^{Eq}$ applies only to \mappingtypethree{\efsotgd}{\sotgd}{\egd} schema mappings.
However, it can be extended to  arbitrary \mappingtypethree{\sotgd}{\sotgd}{\egd} schema mappings
\mapping as follows: from $\sm{M}$, we first construct a schema mapping
$\sm{M}' = ({\bf S}, {\bf T}', \Sigma_{copy}, \Sigma_t')$
where ${\bf T}' = {\bf T}\cup\{R'\mid R\in {\bf S}\}$; $\Sigma_{copy} = \{ \forall{\bf x}(R({\bf x})\to R'({\bf x}))\mid R\in {\bf S}\}$;
and $\Sigma_t' = \Sigma_t \cup \{\sigma'\mid \sigma\in \Sigma_{st}\}$, where $\sigma'$ is a copy of $\sigma$ in which every
occurrence of a relation $R\in{\bf S}$ is replaced by $R'$. We then define the equality singularization $\sm{M}^{Eq}$ to be the 
equality singulatization of $\sm{M}'$. It is easy to see that Proposition~\ref{prop:singularization} and Proposition~\ref{prop:free-universal} then hold true for 
arbitrary \mappingtypethree{\sotgd}{\sotgd}{\egd} schema mappings.

\subsection{The Skeleton Rewriting Step} \label{sec:skeleton_step}
Suppose $\sm{M}$ is a weakly acyclic \mappingtypetwo{\efsotgd}{\efsotgd} schema mapping, and ${\cal M}^{\rm Eq}$ is the equality singularization of ${\cal M}$.  Since ${\cal M}^{\rm Eq}$ admits free universal solutions, we can represent the value of every compound term simply by its syntax.  This makes it possible to rewrite ${\cal M}^{\rm Eq}$ in such a way that the syntax of compound terms is captured using specialized relations, and constraints with only simple terms.

The \emph{skeleton} of a term is the expression obtained by replacing all constants and variables
by $\bullet$, where $\bullet$ is a fixed symbol that is not a function symbol~\cite{DBLP:journals/corr/abs-1106-3745}. Thus, for example, the skeleton of $f(g(x,y),z)$ is $f(g(\bullet,\bullet),\bullet)$. 
The \emph{arity} of a skeleton $s$, denoted by $arity(s)$, is the number of occurrences of $\bullet$,
and the depth of a skeleton is defined in the same way as for terms.
If $s,s'_1, \ldots, s'_k$ are skeletons with $arity(s)=k$, then we denote by 
$s(s'_1, \ldots, s'_k)$ the skeleton of arity $arity(s'_1)+\cdots+arity(s'_k)$ obtained
by replacing, for each $i\leq k$,  the $i$-th occurrence of $\bullet$ in $s$ by $s'_i$. 

\begin{definition}
	Let \mapping be a  weakly acyclic \mappingtypetwo{\efsotgd}{\efsotgd} schema mapping with rank $r$ and whose most deeply nested term has depth $d$.  Let $\Theta$ be the set of functions appearing in \sigt.
	Define the {\em skeleton rewriting} of \sm{M} as the schema mapping ${\cal M}^{\rm skel} = ({\bf S},{\bf T}^{\rm skel},\sigst^{\rm skel},\sigt^{\rm skel})$, where:
	\begin{itemize}
		
		\item For every $n$-ary relation $\rel T \in {\bf T}$, let ${\bf T}^{\rm skel}$ contain all relations of the form $\rel T_{s_1,\ldots,s_n}$, where $s_1,\ldots s_n$ are skeletons of depth less than or equal to $r$.
		
		\item For every clause $\phi({\bf x}) \to \rel T(\tau_1,\ldots,\tau_n)$ of a s-t \efsotgd{} in \sigst, let $\sigst^{\rm skel}$ contain the s-t tgd $\phi({\bf x}) \to \rel T_{s_1,...,s_n}(\bar{\bf x})$, where $s_1,\ldots,s_n$ are the skeletons for $\tau_1,\ldots,\tau_n$ respectively, and $\bar{\bf x}$ is the sequence of variables in $\tau_1,\ldots,\tau_n$.
		
		\item For every clause $\phi({\bf x}) \to \rel T(\tau_1,\ldots,\tau_n)$ of a \efsotgd{} in \sigt (where ${\bf x}=x_1, \ldots, x_m$), let $\sigt^{\rm skel}$ contain the tgd $\phi_{s_1, \ldots, s_m}({\bf y}_1, \ldots, {\bf y}_m) \to \rel T_{s'_1, \ldots, s'_n}({\bf \bar{y}_1, \ldots, \bar{y}_n})$, where
			\begin{itemize}
				
				\item $s_1, \ldots, s_m$ are $\Theta$-skeletons of depth at most $r*d$;
				
				\item each ${\bf y}_i$ is a sequence of $\mbox{arity}(s_i)$ fresh variables;
				
				\item $\phi_{s_1, \ldots, s_m}({\bf y}_1, \ldots, {\bf y}_m)$ is obtained from $\phi$ by replacing each atom $\rel R(x_1, \ldots, x_l)$ by $\rel R_{s_1, \ldots, s_l}({\bf y}_1, \ldots, {\bf y}_l)$;
				
				\item $s'_i$ is a $\Theta$-skeleton of depth at most $r*d$ such that $s'_i=s_k$ (if $\tau_i$ is the term $x_k$) and $s'_i=t_i(s_1, \ldots, s_m)$ (if $\tau_i$ is the term $t_i(x_1,\ldots,x_m)$)
				
				\item ${\bf \bar{y}_i} = ({\bf y}_{k_1}, \ldots, {\bf y}_{k_{arity(s_k)}})$ (if $\tau_i$ is the term $x_k$) and ${\bf \bar{y}_i} = (y_{1_1}, ..., y_{1_{arity(s_1)}}, ..., y_{m_1}, ..., y_{m_{arity(s_m)}})$ (if $\tau_i$ is the term $t_i(x_1,\ldots,x_m)$)
				
			\end{itemize}
	\end{itemize}
	In addition, for each conjunctive query $q({\bf x})=\exists{\bf y}\psi({\bf x},{\bf y})$ over ${\bf T}$ with ${\bf x}=x_1, \ldots, x_n$ and ${\bf y}=y_1, \ldots, y_m$, we denote by $q^{\rm skel}({\bf x})$ the union of conjunctive queries over ${\bf T}^{\rm skel}$ of the form $\exists {\bf z}_1\ldots{\bf z}_m\psi_{s_1, \ldots, s_n, s'_1, \ldots, s'_m}(x_1, \ldots, x_n, {\bf z}_1, \ldots, {\bf z}_m)$, where $s_1=\ldots=s_n=\bullet$; $s'_1, \ldots, s'_m$ are $\Theta$-skeletons of depth at most $r$; and each ${\bf z}_i$ is a sequence of fresh variables of length $arity(s'_i)$. 
\end{definition}

For example, consider the \mappingtypetwo{\efsotgd}{\efsotgd} schema mapping $\sm{M}$ whose constraints are $\exists f \forall x,y ~ P(x,y) \to Q(f(y),y,y)$ and $\exists g \forall x,y,z ~ Q(x,y,z) \to Q(x,y,g(x,y))$.  Then $\sm{M}^{\rm skel}$ will include the s-t tgd $P(x,y) \to Q_{f(\bullet),\bullet,\bullet}(x,y,y)$, and the target tgds $Q_{\bullet,\bullet,\bullet}(x,y,z) \to Q_{\bullet,\bullet,g(\bullet,\bullet)}(x,y,x,y)$, and $Q_{f(\bullet),\bullet,\bullet}(x,y,z) \to Q_{f(\bullet),\bullet,g(f(\bullet),\bullet)}(x,y,x,y)$.
The full skeleton rewriting of our running example schema mapping is given in Figure~\ref{fig:skeletonized}.  

\begin{figure*}[!t]
	\centering
	\begin{tabular}{|lll|}
		\hline
		$\bf S$\hspace{-2mm} & $=$ & $\left\{ \rel R \right\}$ \\
		$\bf T$\hspace{-2mm} & $=$ & $\left\{ \rel T_{\bullet,\bullet},\rel T_{\bullet,f(\bullet,\bullet)},\rel T_{f(\bullet,\bullet),\bullet},\rel T_{f(\bullet,\bullet),f(\bullet,\bullet)},\rel{Eq}_{\bullet,\bullet},\rel{Eq}_{\bullet,f(\bullet,\bullet)},\rel{Eq}_{f(\bullet,\bullet),\bullet},\rel{Eq}_{f(\bullet,\bullet),f(\bullet,\bullet)} \right\}$ \\
		\sigst\hspace{-2mm} & $=$ & $\left\{ \begin{array}{lllll}
			\rel R(x,y) & \to & \rel T_{\bullet,f(\bullet,\bullet)}(x,x,y)\\
			\rel R(x,y) & \to & \rel T_{\bullet,f(\bullet,\bullet)}(y,x,y)
		\end{array} \right\}$ \\
		$\sigt^{\rm skel}$\hspace{-2mm} & $=$ & \hspace{-2mm}$\left\{\hspace{-1mm} \begin{array}{l}
			\rel T_{\bullet,\bullet}(x,y) \wedge \rel{Eq}_{\bullet,\bullet}(x,x') \wedge \rel T_{\bullet,\bullet}(x',y') \to \rel{Eq}_{\bullet,\bullet}(y,y')\\
			\rel T_{\bullet,\bullet}(x,y) \wedge \rel{Eq}_{\bullet,\bullet}(x,x') \wedge \rel T_{\bullet,f(\bullet,\bullet)}(x',y'_1,y'_2) \to \rel{Eq}_{\bullet,f(\bullet,\bullet)}(y,y'_1,y'_2)\\
			\rel T_{\bullet,\bullet}(x,y) \wedge \rel{Eq}_{\bullet,f(\bullet,\bullet)}(x,x'_1,x'_2) \wedge \rel T_{f(\bullet,\bullet),\bullet}(x'_1,x'_2,y') \to \rel{Eq}_{\bullet,\bullet}(y,y')\\
			\rel T_{\bullet,\bullet}(x,y) \wedge \rel{Eq}_{\bullet,f(\bullet,\bullet)}(x,x'_1,x'_2) \wedge \rel T_{f(\bullet,\bullet),f(\bullet,\bullet)}(x'_1,x'_2,y'_1,y'_2) \to \rel{Eq}_{\bullet,f(\bullet,\bullet)}(y,y'_1,y'_2)\\
			\rel T_{\bullet,f(\bullet,\bullet)}(x,y_1,y_2) \wedge \rel{Eq}_{\bullet,\bullet}(x,x') \wedge \rel T_{\bullet,\bullet}(x',y') \to \rel{Eq}_{f(\bullet,\bullet),\bullet}(y_1,y_2,y')\\
			\rel T_{\bullet,f(\bullet,\bullet)}(x,y_1,y_2) \wedge \rel{Eq}_{\bullet,\bullet}(x,x') \wedge \rel T_{\bullet,f(\bullet,\bullet)}(x',y'_1,y'_2) \to \rel{Eq}_{f(\bullet,\bullet),f(\bullet,\bullet)}(y_1,y_2,y'_1,y'_2)\\
			\rel T_{\bullet,f(\bullet,\bullet)}(x,y_1,y_2) \wedge \rel{Eq}_{\bullet,f(\bullet,\bullet)}(x,x'_1,x'_2) \wedge \rel T_{f(\bullet,\bullet),\bullet}(x'_1,x'_2,y') \to \rel{Eq}_{f(\bullet,\bullet),\bullet}(y_1,y_2,y')\\
			\rel T_{\bullet,f(\bullet,\bullet)}(x,y_1,y_2) \wedge \rel{Eq}_{\bullet,f(\bullet,\bullet)}(x,x'_1,x'_2) \wedge \rel T_{f(\bullet,\bullet),f(\bullet,\bullet)}(x'_1,x'_2,y'_1,y'_2) \to \rel{Eq}_{f(\bullet,\bullet),f(\bullet,\bullet)}(y_1,y_2,y'_1,y'_2)\\

			\rel T_{f(\bullet,\bullet),\bullet}(x_1,x_2,y) \wedge \rel{Eq}_{f(\bullet,\bullet),\bullet}(x_1,x_2,x') \wedge \rel T_{\bullet,\bullet}(x',y') \to \rel{Eq}_{\bullet,\bullet}(y,y')\\
			\rel T_{f(\bullet,\bullet),\bullet}(x_1,x_2,y) \wedge \rel{Eq}_{f(\bullet,\bullet),\bullet}(x_1,x_2,x') \wedge \rel T_{\bullet,f(\bullet,\bullet)}(x',y'_1,y'_2) \to \rel{Eq}_{\bullet,f(\bullet,\bullet)}(y,y'_1,y'_2)\\
			\rel T_{f(\bullet,\bullet),\bullet}(x_1,x_2,y) \wedge \rel{Eq}_{f(\bullet,\bullet),f(\bullet,\bullet)}(x_1,x_2,x'_1,x'_2) \wedge \rel T_{f(\bullet,\bullet),\bullet}(x'_1,x'_2,y') \to \rel{Eq}_{\bullet,\bullet}(y,y')\\
			\rel T_{f(\bullet,\bullet),\bullet}(x_1,x_2,y) \wedge \rel{Eq}_{f(\bullet,\bullet),f(\bullet,\bullet)}(x_1,x_2,x'_1,x'_2) \wedge \rel T_{f(\bullet,\bullet),f(\bullet,\bullet)}(x'_1,x'_2,y'_1,y'_2) \to \rel{Eq}_{\bullet,f(\bullet,\bullet)}(y,y'_1,y'_2)\\
			\rel T_{f(\bullet,\bullet),f(\bullet,\bullet)}(x_1,x_2,y_1,y_2) \wedge \rel{Eq}_{f(\bullet,\bullet),\bullet}(x_1,x_2,x') \wedge \rel T_{\bullet,\bullet}(x',y') \to \rel{Eq}_{f(\bullet,\bullet),\bullet}(y_1,y_2,y')\\
			\rel T_{f(\bullet,\bullet),f(\bullet,\bullet)}(x_1,x_2,y_1,y_2) \wedge \rel{Eq}_{f(\bullet,\bullet),\bullet}(x_1,x_2,x') \wedge \rel T_{\bullet,f(\bullet,\bullet)}(x',y'_1,y'_2) \to \rel{Eq}_{f(\bullet,\bullet),f(\bullet,\bullet)}(y_1,y_2,y'_1,y'_2)\\
			\rel T_{f(\bullet,\bullet),f(\bullet,\bullet)}(x_1,x_2,y_1,y_2) \wedge \rel{Eq}_{f(\bullet,\bullet),f(\bullet,\bullet)}(x_1,x_2,x'_1,x'_2) \wedge \rel T_{f(\bullet,\bullet),\bullet}(x'_1,x'_2,y') \to \rel{Eq}_{f(\bullet,\bullet),\bullet}(y_1,y_2,y')\\
			\rel T_{f(\bullet,\bullet),f(\bullet,\bullet)}(x_1,x_2,y_1,y_2) \wedge \rel{Eq}_{f(\bullet,\bullet),f(\bullet,\bullet)}(x_1,x_2,x'_1,x'_2) \wedge \rel T_{f(\bullet,\bullet),f(\bullet,\bullet)}(x'_1,x'_2,y'_1,y'_2) \to \\
			\hfill\rel{Eq}_{f(\bullet,\bullet),f(\bullet,\bullet)}(y_1,y_2,y'_1,y'_2)\\

			\rel T_{\bullet,\bullet}(x,y) \to \rel{Eq}_{\bullet,\bullet}(x,x)\\
			\rel T_{\bullet,\bullet}(x,y) \to \rel{Eq}_{\bullet,\bullet}(y,y)\\
			\rel T_{\bullet,f(\bullet,\bullet)}(x,y_1,y_2) \to \rel{Eq}_{\bullet,\bullet}(x,x)\\
			\rel T_{\bullet,f(\bullet,\bullet)}(x,y_1,y_2) \to \rel{Eq}_{f(\bullet,\bullet),f(\bullet,\bullet)}(y_1,y_2,y_1,y_2)\\
			\rel T_{f(\bullet,\bullet),\bullet}(x_1,x_2,y) \to \rel{Eq}_{f(\bullet,\bullet),f(\bullet,\bullet)}(x_1,x_2,x_1,x_2)\\
			\rel T_{f(\bullet,\bullet),\bullet}(x_1,x_2,y) \to \rel{Eq}_{\bullet,\bullet}(y,y)\\
			\rel T_{f(\bullet,\bullet),f(\bullet,\bullet)}(x_1,x_2,y_1,y_2) \to \rel{Eq}_{f(\bullet,\bullet),f(\bullet,\bullet)}(x_1,x_2,x_1,x_2)\\
			\rel T_{f(\bullet,\bullet),f(\bullet,\bullet)}(x_1,x_2,y_1,y_2) \to \rel{Eq}_{f(\bullet,\bullet),f(\bullet,\bullet)}(y_1,y_2,y_1,y_2)\\

			\rel{Eq}_{\bullet,\bullet}(x_1,x_2) \to \rel{Eq}_{\bullet,\bullet}(x_2,x_1)\\
			\rel{Eq}_{\bullet,f(\bullet,\bullet)}(x_1,x_{2_1},x_{2_2}) \to \rel{Eq}_{f(\bullet,\bullet),\bullet}(x_{2_1},x_{2_2},x_1)\\
			\rel{Eq}_{f(\bullet,\bullet),\bullet}(x_{1_1},x_{1_2},x_2) \to \rel{Eq}_{\bullet,f(\bullet,\bullet)}(x_2,x_{1_1},x_{1_2})\\
			\rel{Eq}_{f(\bullet,\bullet),f(\bullet,\bullet)}(x_{1_1},x_{1_2},x_{2_1},x_{2_2}) \to \rel{Eq}_{f(\bullet,\bullet),f(\bullet,\bullet)}(x_{2_1},x_{2_2},x_{1_1},x_{1_2})\\

			\rel{Eq}_{\bullet,\bullet}(x_1,x_2) \wedge \rel{Eq}_{\bullet,\bullet}(x_2,x_3) \to \rel{Eq}_{\bullet,\bullet}(x_1,x_3)\\
			\rel{Eq}_{\bullet,\bullet}(x_1,x_2) \wedge \rel{Eq}_{\bullet,f(\bullet,\bullet)}(x_2,x_{3_1},x_{3_2}) \to \rel{Eq}_{\bullet,f(\bullet,\bullet)}(x_1,x_{3_1},x_{3_2})\\
			\rel{Eq}_{\bullet,f(\bullet,\bullet)}(x_1,x_{2_1},x_{2_2}) \wedge \rel{Eq}_{f(\bullet,\bullet),\bullet}(x_{2_1},x_{2_2},x_3) \to \rel{Eq}_{\bullet,\bullet}(x_1,x_3)\\
			\rel{Eq}_{\bullet,f(\bullet,\bullet)}(x_1,x_{2_1},x_{2_2}) \wedge \rel{Eq}_{f(\bullet,\bullet),f(\bullet,\bullet)}(x_{2_1},x_{2_2},x_{3_1},x_{3_2}) \to \rel{Eq}_{\bullet,f(\bullet,\bullet)}(x_1,x_{3_1},x_{3_2})\\
			\rel{Eq}_{f(\bullet,\bullet),\bullet}(x_{1_1},x_{1_2},x_2) \wedge \rel{Eq}_{\bullet,\bullet}(x_2,x_3) \to \rel{Eq}_{f(\bullet,\bullet),\bullet}(x_{1_1},x_{1_2},x_3)\\
			\rel{Eq}_{f(\bullet,\bullet),\bullet}(x_{1_1},x_{1_2},x_2) \wedge \rel{Eq}_{\bullet,f(\bullet,\bullet)}(x_2,x_{3_1},x_{3_2}) \to \rel{Eq}_{f(\bullet,\bullet),f(\bullet,\bullet)}(x_{1_1},x_{1_2},x_{3_1},x_{3_2})\\
			\rel{Eq}_{f(\bullet,\bullet),f(\bullet,\bullet)}(x_{1_1},x_{1_2},x_{2_1},x_{2_2}) \wedge \rel{Eq}_{f(\bullet,\bullet),\bullet}(x_{2_1},x_{2_2},x_3) \to \rel{Eq}_{f(\bullet,\bullet),\bullet}(x_{1_1},x_{1_2},x_3)\\
			\rel{Eq}_{f(\bullet,\bullet),f(\bullet,\bullet)}(x_{1_1},x_{1_2},x_{2_1},x_{2_2}) \wedge \rel{Eq}_{f(\bullet,\bullet),f(\bullet,\bullet)}(x_{2_1},x_{2_2},x_{3_1},x_{3_2}) \to \rel{Eq}_{f(\bullet,\bullet),f(\bullet,\bullet)}(x_{1_1},x_{1_2},x_{3_1},x_{3_2})\\
			
		\end{array} \hspace{-2mm}\right\}\hspace{-1mm}$\\
		\multicolumn{3}{|l|}{$\begin{array}{rcl}
				\rel{reachable}^{\rm skel}(x,y) & \defeq & \exists c,c' (\rel T_{\bullet,\bullet}(x, c) \wedge \rel{Eq}_{\bullet,\bullet}(c,c') \wedge \rel T_{\bullet,\bullet}(y,c'))\\
				& \cup & \exists c,c'_1,c'_2 (\rel T_{\bullet,\bullet}(x, c) \wedge \rel{Eq}_{\bullet,f(\bullet,\bullet)}(c,c'_1,c'_2) \wedge \rel T_{\bullet,f(\bullet,\bullet)}(y,c'_1,c'_2))\\
				& \cup & \exists c_1,c_2,c' (\rel T_{\bullet,f(\bullet,\bullet)}(x, c_1,c_2) \wedge \rel{Eq}_{f(\bullet,\bullet),\bullet}(c_1,c_2,c') \wedge \rel T_{\bullet,\bullet}(y,c'))\\
				& \cup & \exists c_1,c_2,c'_1,c'_2 (\rel T_{\bullet,f(\bullet,\bullet)}(x, c_1,c_2) \wedge \rel{Eq}_{f(\bullet,\bullet),f(\bullet,\bullet)}(c_1,c_2,c'_1,c'_2) \wedge \rel T_{\bullet,f(\bullet,\bullet)}(y,c'_1,c'_2))
			\end{array}$}\\
		\hline
	\end{tabular}
    \caption{Undirected reachability, skolemized, equality singularized, and skeleton rewritten.}
    \label{fig:skeletonized}
\end{figure*}

\begin{remark}
An optimized version of the schema mapping in Figure~\ref{fig:skeletonized} is shown in Figure~\ref{fig:optimized}, based on the simple observation that in Figure~\ref{fig:skeletonized} none of $\{\rel T_{\bullet,\bullet},\rel T_{f(\bullet,\bullet),\bullet},\rel T_{f(\bullet,\bullet),f(\bullet,\bullet)} \}$ appears on the right-hand side of any tgd, and thus the left-hand sides of many tgds cannot be satisfied in any universal solution, and in turn none of $\{\rel{Eq}_{\bullet,f(\bullet,\bullet)},\rel{Eq}_{f(\bullet,\bullet),\bullet}\}$ ever appears on the right-hand side of a remaining tgd in which it does not also appear on the left-hand side.  We leave development of a principled approach to optimization for future work.
\end{remark}

\begin{figure*}[htbp]
	\centering
	\begin{tabular}{|lll|}
		\hline
		$\bf S$ & $=$ & $\left\{ \rel R \right\}$ \\
		$\bf T$ & $=$ & $\left\{ \rel T_{\bullet,f(\bullet,\bullet)}, \rel{Eq}_{\bullet,\bullet},\rel{Eq}_{f(\bullet,\bullet),f(\bullet,\bullet)} \right\}$ \\
		\sigst & $=$ & $\left\{ \begin{array}{lllll}
			\rel R(x,y) & \to & \rel T_{\bullet,f(\bullet,\bullet)}(x,x,y)\\
			\rel R(x,y) & \to & \rel T_{\bullet,f(\bullet,\bullet)}(y,x,y)
		\end{array} \right\}$ \\
		$\sigt^{\rm skel}$ & $=$ & $\left\{ \begin{array}{l}
			\rel T_{\bullet,f(\bullet,\bullet)}(x,y_1,y_2) \wedge \rel{Eq}_{\bullet,\bullet}(x,x') \wedge \rel T_{\bullet,f(\bullet,\bullet)}(x',y'_1,y'_2) \to \rel{Eq}_{f(\bullet,\bullet),f(\bullet,\bullet)}(y_1,y_2,y'_1,y'_2)\\

			\rel T_{\bullet,f(\bullet,\bullet)}(x,y_1,y_2) \to \rel{Eq}_{\bullet,\bullet}(x,x) \\
			\rel T_{\bullet,f(\bullet,\bullet)}(x,y_1,y_2) \to \rel{Eq}_{f(\bullet,\bullet),f(\bullet,\bullet)}(y_1,y_2,y_1,y_2)\\

			\rel{Eq}_{\bullet,\bullet}(x_1,x_2) \to \rel{Eq}_{\bullet,\bullet}(x_2,x_1)\\
			\rel{Eq}_{f(\bullet,\bullet),f(\bullet,\bullet)}(x_{1_1},x_{1_2},x_{2_1},x_{2_2}) \to \rel{Eq}_{f(\bullet,\bullet),f(\bullet,\bullet)}(x_{2_1},x_{2_2},x_{1_1},x_{1_2})\\

			\rel{Eq}_{\bullet,\bullet}(x_1,x_2) \wedge \rel{Eq}_{\bullet,\bullet}(x_2,x_3) \to \rel{Eq}_{\bullet,\bullet}(x_1,x_3)\\
			\rel{Eq}_{f(\bullet,\bullet),f(\bullet,\bullet)}(x_{1_1},x_{1_2},x_{2_1},x_{2_2}) \wedge \rel{Eq}_{f(\bullet,\bullet),f(\bullet,\bullet)}(x_{2_1},x_{2_2},x_{3_1},x_{3_2}) \to \rel{Eq}_{f(\bullet,\bullet),f(\bullet,\bullet)}(x_{1_1},x_{1_2},x_{3_1},x_{3_2})\\
			
		\end{array} \right\}$\\
		\multicolumn{3}{|l|}{$\begin{array}{rcl}
				\rel{reachable}^{\rm skel}(x,y) & \defeq & \exists c_1,c_2,c'_1,c'_2 (\rel T_{\bullet,f(\bullet,\bullet)}(x, c_1,c_2) \wedge \rel{Eq}_{f(\bullet,\bullet),f(\bullet,\bullet)}(c_1,c_2,c'_1,c'_2) \wedge \rel T_{\bullet,f(\bullet,\bullet)}(y,c'_1,c'_2))
			\end{array}$}\\
		\hline
	\end{tabular}
    \caption{Example schema mapping from Figure~\ref{fig:example}, skolemized, equality singularized, skeleton rewritten, and optimized.}
    \label{fig:optimized}
\end{figure*}

\begin{proposition} \label{prp:skel_same}
	Let \mapping be a weakly acyclic \mappingtypetwo{\efsotgd}{\efsotgd} schema mapping.  Let ${\cal M}^{\rm skel}$ be the skeleton rewriting of \sm{M}.
	For every UCQ $q$ over $\bf T$ and for every $\bf S$-instance $I$, we have $\certain(q,I,M) = \certain(q^{\rm skel},I,M')$.
\end{proposition}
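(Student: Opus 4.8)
The plan is to produce, on each side, a concrete universal solution together with a fact-level translation between the two, and then read off the query correspondence. Since ${\cal M}$ is a weakly acyclic \mappingtypetwo{\efsotgd}{\efsotgd} schema mapping it contains no egds and no term equalities, so --- by the argument in the proof of Proposition~\ref{prop:free-universal}, which goes through for any such mapping --- I can fix a collection ${\bf f^0}$ of injective, mutually range-disjoint functions with ranges inside $\Nulls$ and let $J$ be the result of chasing $I$ with the SO tgds of ${\cal M}$ using ${\bf f^0}$: then $J$ is finite, $J$ is a free universal solution for $I$ w.r.t.\ ${\cal M}$, and every value of $J$ is the denotation of a unique ground ${\bf f}$-term whose depth is within the bound of Proposition~\ref{prop:free-universal} and whose simple subterms are all constants in $adom(I)$. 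On the other side, ${\cal M}^{\rm skel}$ consists only of full tgds (each skeleton-rewritten rule has a single, existential-free head atom), hence is trivially weakly acyclic, so $chase(I,{\cal M}^{\rm skel})$ is a null-free universal solution for $I$ w.r.t.\ ${\cal M}^{\rm skel}$ and equals the least ${\bf T}^{\rm skel}$-instance $K$ with $(I,K)\models{\cal M}^{\rm skel}$. Using $\certain(q,I,{\cal M}) = q\!\!\downarrow(J)$ and $\certain(q^{\rm skel},I,{\cal M}^{\rm skel}) = q^{\rm skel}\!\!\downarrow(chase(I,{\cal M}^{\rm skel}))$ for unions of conjunctive queries, it suffices to establish: (I) $chase(I,{\cal M}^{\rm skel}) = \mathrm{enc}(J)$; and (II) $q\!\!\downarrow(J) = q^{\rm skel}\!\!\downarrow(\mathrm{enc}(J))$, where $\mathrm{enc}$ is the fact-level encoding described next.

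For a value $a \in adom(J)$, let $\mathrm{sk}(a)$ be the skeleton of the unique ground term denoting $a$, and let $\mathrm{args}(a)$ be the left-to-right list of its simple subterms --- these are constants in $adom(I)$, and their number is $arity(\mathrm{sk}(a))$. For a fact $T(a_1,\dots,a_n)$ of $J$ I set $\mathrm{enc}(T(a_1,\dots,a_n)) = T_{\mathrm{sk}(a_1),\dots,\mathrm{sk}(a_n)}(\mathrm{args}(a_1),\dots,\mathrm{args}(a_n))$, the argument lists concatenated in order, and I let $\mathrm{enc}(J)$ collect these facts. By freeness, distinct ground terms have distinct skeleton-and-arguments, so $\mathrm{enc}$ is a bijection between the facts of $J$ and the facts of $\mathrm{enc}(J)$, with an inverse $\mathrm{dec}$ that rebuilds a $T$-fact from a $T_{s_1,\dots,s_n}$-fact by splitting the tuple according to $arity(s_1),\dots,arity(s_n)$ and evaluating each skeleton-with-fillings using ${\bf f^0}$. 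Here the depth bound of Proposition~\ref{prop:free-universal} is exactly what guarantees that every relation $T_{\mathrm{sk}(a_1),\dots,\mathrm{sk}(a_n)}$ occurring in $\mathrm{enc}(J)$ actually belongs to ${\bf T}^{\rm skel}$.

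For (I), the crux is that, under $\mathrm{enc}$, the family of skeleton-indexed tgds of ${\cal M}^{\rm skel}$ is precisely the family of clauses of the \efsotgd{}s of ${\cal M}$, each split according to the skeletons of the values assigned to its universally quantified variables: the head skeleton $s'_i$ prescribed by the skeleton rewriting (namely $s_k$ when $\tau_i = x_k$, and $t_i(s_1,\dots,s_m)$ when $\tau_i = t_i(x_1,\dots,x_m)$) is exactly $\mathrm{sk}$ of the value produced when the corresponding clause of ${\cal M}$ fires on a matching homomorphism, and the argument tuples of the skeleton-rewritten rule line up with the $\mathrm{args}$ of those values in the matching order. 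It follows that, for any ${\bf T}$-instance $M$ all of whose values are denotations of ground terms within the depth bound, $(I,M)\models{\cal M}~[{\bf f}\mapsto{\bf f^0}]$ iff $(I,\mathrm{enc}(M))\models{\cal M}^{\rm skel}$. Taking $M = J$ shows $\mathrm{enc}(J)$ is a solution of ${\cal M}^{\rm skel}$, so $chase(I,{\cal M}^{\rm skel}) \subseteq \mathrm{enc}(J)$; taking $M = \mathrm{dec}(chase(I,{\cal M}^{\rm skel}))$ --- legitimate since a null-free instance over relations of ${\bf T}^{\rm skel}$ decodes to an instance whose values are denotations of ground terms within the bound --- shows $\mathrm{dec}(chase(I,{\cal M}^{\rm skel}))$ is a solution of ${\cal M}~[{\bf f}\mapsto{\bf f^0}]$, hence contains the least such solution $J$, so $\mathrm{enc}(J) \subseteq chase(I,{\cal M}^{\rm skel})$; together these give (I). For (II), a homomorphism from a disjunct of $q$ into $J$ whose values on the answer variables are constants corresponds, atom by atom, to a homomorphism from the disjunct of $q^{\rm skel}$ that uses, at each atom, the skeletons of the images of that atom's variables: the answer variables of $q$ stay in $\bullet$-positions of $q^{\rm skel}$ and thus still map to simple-term values, i.e.\ to constants; the existential variables of $q$ are expanded into the fresh argument-variable tuples; and composing with $\mathrm{args}$ turns the match into a match into $\mathrm{enc}(J)$ that hits the corresponding $\mathrm{enc}$-facts --- and conversely, since $\mathrm{enc}$ is a fact-bijection and $\mathrm{enc}(J)$ is null-free. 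This yields $q\!\!\downarrow(J) = q^{\rm skel}\!\!\downarrow(\mathrm{enc}(J))$, and combining with (I) gives $\certain(q,I,{\cal M}) = q\!\!\downarrow(J) = q^{\rm skel}\!\!\downarrow(\mathrm{enc}(J)) = q^{\rm skel}\!\!\downarrow(chase(I,{\cal M}^{\rm skel})) = \certain(q^{\rm skel},I,{\cal M}^{\rm skel})$.

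I expect the main obstacle to be the verification inside (I): that the (numerous) skeleton-indexed tgds of ${\cal M}^{\rm skel}$ collectively encode, via $\mathrm{enc}$, exactly the clauses of the \efsotgd{}s of ${\cal M}$. This is conceptually routine but notation-heavy --- it requires pinning down which skeleton belongs to which attribute position, tracking the left-to-right order in which argument lists are concatenated, and appealing to weak acyclicity (via Proposition~\ref{prop:free-universal}) to guarantee that every skeleton which can possibly arise lies within the depth bound built into ${\bf T}^{\rm skel}$, so that no fact of the chase is ever lost under $\mathrm{enc}$ and the chase of ${\cal M}^{\rm skel}$ never calls for a relation outside ${\bf T}^{\rm skel}$.
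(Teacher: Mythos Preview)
Your proposal is correct and follows essentially the same approach as the paper: both arguments rest on the fact-level encoding/decoding between ${\bf T}$-facts and ${\bf T}^{\rm skel}$-facts via skeletons, using freeness (Proposition~\ref{prop:free-universal}) to guarantee that each value is the denotation of a unique bounded-depth ground term. The paper only gives a hint --- sketching that from any solution $J$ of $\sm{M}$ one constructs a solution $J'$ of $\sm{M}^{\rm skel}$ with $q{\downarrow}(J)=q^{\rm skel}{\downarrow}(J')$, and conversely --- whereas you carry this out more concretely by fixing the free universal solution $J$ on one side, observing that $\sm{M}^{\rm skel}$ is full so its chase is a null-free universal solution on the other, and then arguing that $\mathrm{enc}(J)=chase(I,\sm{M}^{\rm skel})$; this is a legitimate and arguably cleaner instantiation of the same idea, and your identification of the bookkeeping in step~(I) as the main labor is accurate.
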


\begin{proof}[Hint]
	We show that there exists a solution $J$ for $I$ with respect to $\sm{M}$ if and only if there exists a solution 
	$J'$ for $I$ with respect to ${\cal M}^{\rm skel}$.  Furthermore, $J'$ (respectively $J$) can be constructed such that for any UCQ $q$ over $\bf T$, we have $q{\downarrow}(J) = q^{\rm skel}{\downarrow}(J')$.
	To construct $J$ from $J'$, we copy every tuple, and use the skeletons and their arguments to construct the compound terms.
	To construct $J'$ from $J$, we copy every tuple, and, using a witnessing collection of functions ${\bf f^0}$ such that $(I,J)\models \sm{M}~[{\bf f}\mapsto{\bf f_0}]$, and such that each null value is the denotation of a unique term of depth at most $r$. This term gives us both the skeleton and the arguments that belong in $J'$.
\end{proof}

\subsection{Proof of Theorem~\ref{thm:sotgdtogav}}

We finally can prove Theorem~\ref{thm:sotgdtogav} by combining the above results:
	Let \mapping be a weakly acyclic \mappingtypethree{\sotgd}{\sotgd}{\egd} schema mapping, 
	and let \sm{\hat{M}} be the skeleton rewriting of the equality singularization of \sm{M},
	extended with the egd \[\rel{Eq}_{\bullet,\bullet}(x,y)\to x=y~.\]
	Furthermore, for any UCQ $q$ over ${\bf T}$, let $\hat{q}$ be the skeleton rewriting of the  
	equality simulation of $q$.  Then we claim that $\xrcertain(q,I,M) = \xrcertain(\hat{q},I,\sm{\hat{M}})$.

It suffices to show that, for all source instances $I$, 
\begin{enumerate}
\item $I$ has a solution with respect to $\sm{\hat{M}}$ if and only if $I$ has a solution with respect to $\sm{M}$.
\item if $I$ has a solution with respect to $\sm{M}$, then, for all UCQs $q$ over ${\bf T}$, $\certain(\hat{q},I,\sm{\hat{M}}) = \certain(q,I,\sm{M})$
\end{enumerate}

The first item follows from Proposition~\ref{prop:singularization}(a) and Proposition~\ref{prp:skel_same}.
The second item follows from Proposition~\ref{prop:singularization}(b) and Proposition~\ref{prp:skel_same}.

\subsection{Related Work}
Theorem~\ref{thm:sotgdtogav} allows us to extend the DLP-rewriting technique of Section~\ref{sec:dlp} to \mappingtypethree{\glav}{\waglav}{\egd} schema mappings (and, in fact, to weakly acyclic \mappingtypethree{\sotgd}{\sotgd}{\egd} schema mappings). The proof is based on a method for eliminating the existentially quantified variables.  Others have considered methods for eliminating existential quantifiers from tgds previously, an early example being Duschka and Genesereth's inverse rules algorithm~\cite{DBLP:conf/pods/DuschkaG97} for acyclic \lav{} rules, which inspired our approach.  Krotzsch and Rudolph describe an existentially quantified variable elimination procedure for schema mappings composed of \glav{} constraints and relational denial constraints (a subset of denial constraints with no equality or inequality atoms) that are jointly-acyclic (a relaxation of weak acyclicity) in~\cite{DBLP:conf/ijcai/KrotzschR11}.  Their approach is similar to ours in that it creates extra attributes to represent skolem terms in place of existentially quantified variables, but our constraint language includes the additional expressiveness of egds, whose careful handling is a primary concern of our approach.  Marnette studied termination of the chase for schema mappings with target constraints in~\cite{DBLP:conf/pods/Marnette09}, where he introduced the oblivious skolem chase, a modification of the chase procedure in which skolem terms are allowed to appear in instances.  A similar procedure was used to prove the correctness of a limited form of skeleton rewriting in~\cite{DBLP:conf/rr/ten-CateHK14}.

Equality singularization for tgds was introduced in~\cite{DBLP:conf/pods/Marnette09}, where it was referred to simply as ``singularization''.
In~\cite{DBLP:conf/rr/ten-CateHK14}, another equality simulation technique was used, based on substitution.  In that presentation, the simulation was woven into the skeleton rewriting step.

Theorem~\ref{thm:sotgdtogav} is related to a result in an unpublished manuscript \cite{DBLP:journals/corr/abs-1212-0254}, which can be stated as follows: given any \mappingtypetwo{\glav}{\waglav} schema mapping \sm{M} and every conjunctive query $q$, one can compute a Datalog program that, given any source instance as input, computes the certain answers of $q$ with respect to \sm{M}. Note that, conceptually, a Datalog program can be viewed as a \mappingtypetwo{\gav}{\gav} schema mapping where the source schema consists of the EDB predicates and the target schema consists of the IDB predicates.

\section{Concluding Remarks}
In this paper, we introduced the framework of exchange-repairs and explored the XR-certain answers as an alternative non-trivial and meaningful semantics of queries in the context of data exchange.  Exchange-repair semantics differ from other proposals for handling inconsistencies in data exchange in that, conceptually, the inconsistencies are repaired at the source rather than the target.  This allows the shared origins of target facts to be reflected in the answers to target queries. 

This framework brings together data exchange, database repairs, and disjunctive logic programming, thus enhancing the interaction between three different areas of research. Moreover, the results reported here pave the way for using DLP solvers, such as DLV, for query answering under the exchange-repair semantics.

\section{Acknowledgements}
The research of all authors was partially supported by NSF Grant IIS-1217869. Kolaitis' research was also supported by the project ``Handling Uncertainty in Data Intensive Applications" under the program THALES.

%
%
\bibliographystyle{habbrv}
\bibliography{Exchange-Repair}

\appendix

\end{document}